\documentclass[a4paper, amsfonts, amssymb, amsmath, reprint, showkeys, nofootinbib, twoside, aps, pra]{revtex4-2}
\usepackage[english]{babel}
\usepackage[utf8]{inputenc}
\usepackage[colorinlistoftodos, color=green!40, prependcaption]{todonotes}
\usepackage{amsthm}
\usepackage{mathtools}
\usepackage{physics}
\usepackage{xcolor}
\usepackage{graphicx}
\usepackage[left=23mm,right=13mm,top=35mm,columnsep=15pt]{geometry} 
\usepackage{adjustbox}
\usepackage{placeins}
\usepackage[T1]{fontenc}
\usepackage{lipsum}
\usepackage{csquotes}
\usepackage{caption}
\usepackage{amsthm}
\usepackage[font=small]{caption}

\usepackage{subcaption}
\allowdisplaybreaks
\theoremstyle{definition} 

\newtheorem{theorem}{Theorem}
\newtheorem{lemma}{Lemma}
\newtheorem{proposition}[theorem]{Proposition} 
\newtheorem{corollary}[theorem]{Corollary}
\newtheorem{conjecture}[theorem]{Conjecture}
\newtheorem{definition}{Definition}
\newtheorem{example}{Example}
\newtheorem{remark}{Remark}

\renewenvironment{proof}{\noindent{\bfseries Proof.}}{\hfill$\blacksquare$}

\usepackage[pdftex, pdftitle={Article}, pdfauthor={Author}]{hyperref}
\begin{document}
\title{Cumulant-based quantum relative Rényi functional}

\author{Atirat Meunson}
\thanks{Contributing author}
\email{atirat.meun@mail.kmutt.ac.th}
\affiliation{Quantum Computing and Information Research Centre (QX), 
Faculty of Science, King Mongkut’s University of Technology Thonburi, Thailand}

\author{Tanapat Deesuwan}
\thanks{Corresponding author}
\email{tanapat.dee@kmutt.ac.th}
\affiliation{Quantum Computing and Information Research Centre (QX), 
Faculty of Science, King Mongkut’s University of Technology Thonburi, Thailand}

\date{\today} 

\begin{abstract}
We introduce a new cumulant-based quantum relative Rényi functional as a candidate quantum Rényi divergence, derived from the cumulant-generating function (CGF) of the quantum relative surprisal operator and extending the classical connection between Rényi divergence and statistical cumulants to the quantum setting. Unlike the Petz and sandwiched quantum Rényi divergences, the proposed construction is motivated by statistical structure rather than operator-algebraic or operational principles. The functional naturally admits a path-integral-like representation through the Lie–Trotter product expansion, providing a trajectory-based interpretation of quantum divergence in Hilbert space. On its natural non-regularized domain for $\alpha>1$ under the support condition $\operatorname{supp}(\rho)\subseteq\operatorname{supp}(\sigma)$, we establish several fundamental properties, including positivity, reduction to the classical case, additivity, unitary invariance, continuity, and monotonicity with respect to the Rényi parameter $\alpha$. Whether the functional satisfies the quantum data-processing inequality (QDPI) under arbitrary CPTP maps remains open. To extend the analysis beyond the studied regime, we introduce a regularized version of the functional and study its behavior at $\alpha=0$. We show that the resulting relative quantumness quantity vanishes if and only if the underlying states commute, yielding a necessary and sufficient characterization of non-commutativity. For commutativity-preserving (CoP) channels, we further conjecture a QDPI-type monotonicity relation for this quantity. Extensive numerical simulations provide strong evidence in support of this conjecture, with no violations observed for the CoP channels considered in this work.

\end{abstract}
\keywords{Cumulant-Based Quantum Relative Rényi Functional, Path-Integral-like Formulation, relative quantumness, Commutativity-Preserving Channel, Quantum Data Processing Inequality.}

\maketitle

\section{Introduction} \label{intro}
The concept of information in physical and statistical systems is intimately tied to the notion of surprisal. In classical information
theory, it is conceptually fruitful to interpret the Shannon entropy not merely as a measure of uncertainty, but as the expectation value of surprisal or information content\cite{Shannon1948,ShannonWeaver1964,Wilde2013}. From this viewpoint, each possible outcome carries a degree of unexpectedness determined by its probability, and the Shannon entropy emerges as the first statistical moment of surprisal. This perspective naturally extends to relative information: the relative surprisal, expressed as the log-likelihood ratio, quantifies the additional
information required when data generated by one distribution are described as if they were drawn from another. Averaging this quantity yields the Shannon relative entropy, or Kullback-Leibler divergence\cite{Kullback1951,Cover2006}, a central measure of distinguishability in classical information theory.

While foundational, Shannon entropy and relative entropy capture only the mean (first moment) of surprisal, leaving its higher-order fluctuations unaddressed. Rényi entropy and Rényi divergence\cite{Renyi1961,vanErven2014} remedy this by introducing a one-parameter family of information measures indexed by $\alpha\in(0,+\infty)\setminus\{1\}$. These quantities probe higher-order statistical structure via generalized power means, allowing one to interpolate between sensitivity to typical and rare
events. Beyond their mathematical interest, Rényi-type quantities play a significant role across information theory and coding\cite{Bercher2009,AtarMerhav2015}, statistical mechanics and thermodynamics\cite{Fuentes2022,Baez2022}, and even black hole thermodynamics\cite{Czinner2016,Nakarachinda2025}.

In the quantum setting, probabilities are replaced by density operators, and the operator $-\ln\rho$ plays the role of quantum surprisal, encoding the intrinsic informational content of a quantum state. Correspondingly, the quantum relative surprisal between states $\rho$ and $\sigma$ is given by the operator difference $\ln\rho-\ln\sigma$\cite{Boes2022}. Its expectation value with respect to $\rho$ yields the Umegaki quantum relative entropy\cite{Umegaki1962}, 
which satisfies the quantum data processing inequality  (QDPI) under completely positive trace-preserving (CPTP) maps\cite{Lindblad1975,Hayashi2017} and plays a central role in quantum information theory.

In close analogy with the classical extension from Shannon to Rényi, quantum Rényi divergences further generalize the notion of quantum relative entropy so as to capture higher-order moments of the quantum relative surprisal operator. Two principal formulations have been developed in this context: the Petz quantum Rényi divergence\cite{Petz1986,Petz1986b} and the sandwiched quantum Rényi divergence\cite{muellerlennert2013,Wilde2014}, each motivated by distinct mathematical and operational considerations. These divergences have found broad applications, ranging from quantum thermodynamics and hypothesis testing to resource-theoretic formulations of information\cite{Wei2018,Guarnieri2019,Hiai2009, Li2024,Datta2009}.

Recently, Emori proposed a broad framework of quantum statistical functions, including quantum moment-generating,
characteristic, cumulant-generating, and second characteristic functions for quantum observables\cite{Emori2026}. This perspective is conceptually related to the cumulant-generating viewpoint adopted in the present work. However, the present work focuses specifically on the construction of a Rényi-type quantum divergence and its structural properties in the non-commutative setting.

At this point, we depart from the standard operator-algebraic constructions and adopt a statistical viewpoint centered on cumulants of the relative surprisal. Specifically, we introduce a cumulant-based quantum relative Rényi functional (hereafter referred to as the Cu-Q relative Rényi functional, pronounced “cue-cue”) $S^Q_\alpha(\rho\|\sigma)$. The construction is based on the cumulant-generating function (CGF) associated with the quantum relative surprisal operator. In this way, the classical connection between Rényi divergence and the CGF is extended to the non-commutative setting. In contrast to the Petz and sandwiched formulations, our functional admits a path-integral-like representation analogous to the Feynman formulation \cite{Feynman1965,Schulman1981}, enabling an interpretation of quantum divergence as a weighted sum over trajectories in Hilbert space.

The non-regularized Cu-Q relative Rényi functional is studied on the domain $\alpha>1,\quad \operatorname{supp}(\rho)\subseteq\operatorname{supp}(\sigma)$. The support condition guarantees that the logarithmic operators appearing in the definition are well-defined on the support of $\rho$. The restriction to $\alpha>1$ is motivated both by the parameter regime in which the fundamental quantum Rényi-type divergence properties of the proposed functional can be established and by the fact that, for non-full-rank states, the regime $0<\alpha<1$ may involve additional domain subtleties associated with the singular behavior of logarithmic operators on kernel subspaces. Moreover, the positivity proof developed in the present work relies essentially on the condition $\alpha-1>0$ and therefore does not extend directly to the interval $0<\alpha<1$. To avoid these technical complications, the present work focuses on the mathematically robust regime $\alpha>1$. On this domain, the proposed functional satisfies several essential properties, including positivity, additivity, unitary invariance, continuity, and monotonicity with respect to the Rényi parameter, indicating behavior consistent with established quantum divergence measures.

For analytical purposes, we additionally introduce a regularized version of Cu-Q relative Rényi functional
$S_{\alpha}^{Q}(\rho_\varepsilon\|\sigma_\varepsilon)$,
constructed from full-rank approximations of the density operators. This regularized formulation allows the functional to be considered for all $\alpha\in\mathbb{R}\setminus\{1\}$. This regularized framework further enables the construction of a relative quantumness functional $Q(\rho_\varepsilon\|\sigma_\varepsilon):=- S_0^Q(\rho_\varepsilon\|\sigma_\varepsilon)$, which vanishes if and only if the states commute, thereby providing a necessary and sufficient characterization of non-commutativity. Building on the notion of commutativity-preserving (CoP) channels introduced in\cite{Yu2011}, we further formulate a conjecture that this relative quantumness functional satisfies a data-processing inequality under such channels for non-commuting input states. The conjecture is supported by numerical simulations on randomly generated qubit and qutrit states across several representative CoP channels, with no violations observed. More generally, whether the Cu-Q relative Rényi functional satisfies the quantum data-processing inequality under arbitrary CPTP maps remains an open and nontrivial problem. While an analytic proof is currently unavailable, these results provide new structural insight into the data-processing behavior of the proposed functional and identify a natural class of channels for which progress appears possible.

A preliminary version of this work appeared in
Ref.\cite{MeunsonDeesuwan2025CQRRF}. The present article extends that study through a more complete theoretical formulation, deeper structural analysis, and new analytical and numerical results concerning the regularized boundary quantity. Rather than surveying the entire landscape of quantum Rényi divergences, we focus on aspects directly relevant to the cumulant-based formulation developed here. In particular, applications to quantum resource theories and quantum thermodynamics are beyond the scope of the present work.
 
\section{Mathematical framework} \label{Math}
Throughout this paper, we consider a finite-dimensional Hilbert space $\mathcal{H}$ with $\dim \mathcal{H} = d$ over $\mathbb{C}$. We denote by $\mathcal{B}(\mathcal{H})$ the algebra of all linear bounded operators on $\mathcal{H}$, and by
\begin{equation}
    \mathcal{B}_h(\mathcal{H}) := \left\{ A \in \mathcal{B}(\mathcal{H}) \mid A = A^{\dagger} \right\}
\end{equation}
the real vector space of Hermitian operators. 
The set of positive semidefinite (PSD) operators is defined as
\begin{equation}
    \mathcal{P}(\mathcal{H}) := \left\{ A \in \mathcal{B}_h(\mathcal{H}) \mid A \ge 0 \right\},
\end{equation}
while the set of positive definite (PD) operators is
\begin{equation}
    \mathcal{P}_+(\mathcal{H}) := \left\{ A \in \mathcal{B}_h(\mathcal{H}) \mid A > 0 \right\}.
\end{equation}
We consider pairs of quantum states $(\rho,\sigma)$, where each state is represented by a density operator on $\mathcal H$. 
We denote the set of such pairs by
\begin{multline}
    \mathcal{D}(\mathcal{H}) \times \mathcal{D}(\mathcal{H}) :=\\
    \left\{ (\rho,\sigma) \mid \rho,\sigma \in \mathcal{P}(\mathcal{H}),\ 
    \operatorname{Tr}(\rho)=\operatorname{Tr}(\sigma)=1 \right\}.
\end{multline}
Similarly, we define
\begin{multline}
    \mathcal{D}_+(\mathcal{H}) \times \mathcal{D}_+(\mathcal{H}) \\
    := \left\{ (\rho,\sigma) \mid \rho,\sigma \in \mathcal{P}_+(\mathcal{H}),\ 
    \operatorname{Tr}(\rho)=\operatorname{Tr}(\sigma)=1 \right\}.
\end{multline}
Clearly, the set of full-rank density operators $\mathcal D_+(\mathcal H)$ 
forms a proper subset of the set of all density operators $\mathcal D(\mathcal H)$. 
Indeed, $\mathcal D_+(\mathcal H)\subseteq \mathcal D(\mathcal H)\subseteq \mathcal B(\mathcal H)$. For any pair $(\rho,\sigma)\in \mathcal{D}(\mathcal{H}) \times \mathcal{D}(\mathcal{H})$,
the spectral theorem ensures that $\rho$ and $\sigma$ admit spectral decompositions.
In particular,
\begin{equation}\label{rho}
    \rho = \sum_{i=1}^d p(\varphi_i)\, |\varphi_i \rangle \langle \varphi_i |, \quad
    \sigma = \sum_{j=1}^d q(\psi_j)\, |\psi_j \rangle \langle \psi_j |,
\end{equation}
where $p(\varphi_i), q(\psi_j) \ge 0$ and $\sum_{i=1}^d p(\varphi_i) = \sum_{j=1}^d q(\psi_j) = 1$.

If $(\rho,\sigma)\in \mathcal{D}_+(\mathcal{H})\times\mathcal{D}_+(\mathcal{H})$,
then all eigenvalues are strictly positive, i.e.,
$p(\varphi_i)>0$ and $q(\psi_j)>0$ for all $i,j$, and both $\rho$ and $\sigma$ are invertible. In general, the operators $\rho$ and $\sigma$ do not commute and therefore cannot be simultaneously diagonalized. 

For a positive semidefinite operator $A \in \mathcal P(\mathcal H)$, its support is defined as the orthogonal complement of its kernel, i.e.,
The support of $A$ is defined as $\operatorname{supp}(A) := \ker(A)^{\perp}$. Equivalently, if $A$ admits a spectral decomposition
\begin{align}
    A &= \sum_i a(k_i) \, |k_i\rangle\langle k_i|, \\
    \text{then } \operatorname{supp}(A) &= \operatorname{span}\left\{ |k_i\rangle \in \mathcal{H} : a(k_i) > 0 \right\}.
\end{align}

In this work, we impose the support condition $\operatorname{supp}(\rho) \subseteq \operatorname{supp}(\sigma)$, which is standard in the study of relative entropic quantities. This condition ensures that operator functions such as $\ln\rho$ and $\ln\sigma$
are well defined on the relevant subspace and prevent divergences arising from zero eigenvalues of $\sigma$.

Since the Hilbert space $\mathcal{H}$ considered in this work is finite-dimensional, every operator in $\mathcal B(\mathcal H)$ is compact.
For a compact operator $A \in \mathcal B(\mathcal H)$, let $\{ s_n(A) \}_{n \ge 1}$ denote the
singular values of $A$, i.e. the eigenvalues of $|A| := (A^{\dagger}A)^{1/2}$,
arranged in non-increasing order and counted with multiplicity.
For $1 \le p < \infty$, the Schatten $p$-class $\mathcal{L}_p(\mathcal H)$ is defined as
\begin{equation}
\begin{split}
    \mathcal{L}_p(\mathcal H) := \Big\{ A \in \mathcal B(\mathcal H) \ \Big|\ & A \text{ is compact and} \\
    &\sum_{n=1}^{d} s_n(A)^p < \infty \Big\},
\end{split}
\end{equation}
equipped with the norm
\begin{equation}
    \|A\|_p := \left( \sum_{n=1}^{d} s_n(A)^p \right)^{1/p}.
\end{equation}

For $p = \infty$, we set $\mathcal{L}_\infty(\mathcal{H}) = \mathcal{B}(\mathcal{H})$ endowed with the operator norm. This is consistent with the limit
\begin{equation}
    \|A\|_\infty = \lim_{p\to \infty} \|A\|_p,
\end{equation}
where the operator norm $\|A\|_\infty$ is defined as
\begin{equation}
    \|A\|_\infty := \sup_{\substack{x \in \mathcal{H} \\ \|x\|=1}} \|Ax\|.
\end{equation}
where the supremum is taken over all unit vectors in the Hilbert space.
The class $\mathcal{L}_1(\mathcal{H})$ coincides with the space of trace-class operators, while
$\mathcal{L}_2(\mathcal{H})$ coincides with the Hilbert-Schmidt class.
It is a standard result that $\mathcal{L}_1(\mathcal{H})$ is a two-sided ideal in $\mathcal{B}(\mathcal{H})$. 
Specifically, if $A \in \mathcal{B}(\mathcal{H})$ and $T \in \mathcal{L}_1(\mathcal{H})$, then both products 
$AT$ and $TA$ belong to $\mathcal{L}_1(\mathcal{H})$ and satisfy the norm bounds
\begin{equation}
    \|AT\|_1 \le \|A\|_\infty \|T\|_1 \qquad \text{and} \qquad \|TA\|_1 \le \|A\|_\infty \|T\|_1.
\end{equation}

Moreover, the trace functional extends to a continuous bilinear pairing
between $\mathcal{L}_1(\mathcal{H})$ and $\mathcal{B}(\mathcal{H})$, given by $(T,A) \longmapsto \operatorname{Tr}(TA)$. 
For $1 \le p \le \infty$ with conjugate exponents $p$ and $q$ satisfying $1/p + 1/q = 1$, 
the Schatten--Hölder inequality holds:
\begin{equation}
    |\operatorname{Tr}(AB)| \le \|A\|_p \|B\|_q, \quad A \in \mathcal{L}_p(\mathcal{H}), \; B \in \mathcal{L}_q(\mathcal{H}).
\end{equation}
In particular, for the boundary case $p=1$ and $q=\infty$, one obtains
\begin{equation}
    |\operatorname{Tr}(AB)| \le \|A\|_1 \|B\|_\infty,
\end{equation}
which reflects the canonical Banach space duality $\mathcal{L}_1(\mathcal{H})^* = \mathcal{B}(\mathcal{H})$ via 
$\langle A,B\rangle = \operatorname{Tr}(AB)$. This duality is a standard structural theorem in operator theory 
(see, e.g., \cite[Theorem VI.26]{ReedSimon1980}, \cite[Theorem 1.4]{RecentSchwarzTraceIneq}, \cite{Kittaneh1985}, and \cite{Watrous2018}).

\section{Preliminaries} \label{pre}
\subsection{Classical information entropy}

In classical information theory, uncertainty and distinguishability between probability distributions are quantified by entropy and divergence measures. 
Let $X : \Omega \to \mathcal{X}$ be a discrete random variable defined on a probability space $(\Omega,\mathcal{F},\mathbb{P})$, with probability mass function $p(x)=\mathbb{P}(X=x)$ on a finite or countable alphabet $\mathcal{X}$. 
The \emph{surprisal} associated with an outcome $x \in \mathcal{X}$ is defined as
\begin{equation}\label{surprisal}
\Lambda(x) := -\ln p(x),
\end{equation}
which quantifies the information gained upon observing $x$. 
Throughout this work, we adopt the natural logarithm, following conventions in quantum information theory.

The expectation value of the surprisal with respect to $p(x)$ yields the Shannon entropy~\cite{Shannon1948,ShannonWeaver1964},
\begin{equation}
S(X) := \mathbb{E}[\Lambda(X)]
     = -\sum_{x\in\mathcal{X}} p(x)\ln p(x),
\end{equation}
which represents the average information content, or uncertainty, of the random variable $X$.

To quantify the distinguishability between two probability distributions, let $p(x)$ denote the true distribution and $q(x)$ a reference distribution defined on the same alphabet, with $\mathrm{supp}(p)\subseteq\mathrm{supp}(q)$. 
The corresponding relative surprisal is given by
\begin{equation}\label{relative surprisal}
\Delta\Lambda(x) := \ln p(x)-\ln q(x),
\end{equation}
and its expectation with respect to $p(x)$ defines the Shannon relative entropy, or Kullback--Leibler divergence~\cite{Kullback1951},
\begin{equation}
S(p\| q)
= \sum_{x\in\mathcal{X}} p(x)\ln\frac{p(x)}{q(x)}.
\end{equation}
This quantity measures the average excess surprisal incurred when events generated according to $p$ are described using $q$.

While Shannon entropy and relative entropy capture only the first moment of the surprisal, Rényi introduced a one-parameter family of generalized entropies that probe higher-order statistical features of the information content~\cite{Renyi1961}. 
The Rényi entropy of order $\alpha \in (0,+\infty)\setminus\{1\}$ is defined as
\begin{equation}\label{Renyi}
S_\alpha(X)
= \frac{1}{1-\alpha}
  \ln\!\left(
  \sum_{x\in\mathcal{X}} p(x)^{\alpha}
  \right),
\end{equation}
which smoothly recovers the Shannon entropy in the limit $\alpha\to1$.
By tuning the parameter $\alpha$, the entropy becomes more sensitive to either rare or typical events, thereby extending the Shannon framework.

An analogous generalization applies to relative entropy. 
For two distributions $p(x)$ and $q(x)$ defined on the same alphabet, the Rényi divergence of order $\alpha$ is given by
\begin{equation}
S_\alpha(p\| q)
= \frac{1}{\alpha-1}
  \ln\!\left(
  \sum_{x\in\mathcal{X}} p(x)^{\alpha} q(x)^{1-\alpha}
  \right),
\end{equation}
which reduces to the Kullback--Leibler divergence in the limit $\alpha\to1$. 
This quantity interpolates between different notions of distinguishability and provides a natural extension of relative entropy beyond the level of average surprisal.

\subsection{Quantum information entropy}

The transition from classical to quantum information theory arises when probability
distributions are replaced by density operators acting on a Hilbert space.
In this setting, uncertainty and distinguishability are described by operator-valued
generalizations of classical entropy and divergence measures.

Let $\rho = \sum_i p(\varphi_i) |\varphi_i \rangle \langle \varphi_i| \in \mathcal{D}(\mathcal{H})$
be a quantum state on a finite-dimensional Hilbert space $\mathcal{H}$, where
$\{|\varphi_i\rangle\}$ forms an orthonormal eigenbasis of $\rho$.
The quantum analogue of the Shannon entropy is the von Neumann entropy,
defined as
\begin{equation}\label{QE}
S(\rho) = -\mathrm{Tr}\,\rho \ln \rho.
\end{equation}
The operator $\Xi := -\ln \rho$ is referred to as the quantum surprisal, encoding the self-information content of the state.
When $\rho$ is diagonal in an orthonormal basis representing classical outcomes, $S(\rho)$ reduces to the Shannon entropy of the associated probability distribution.

While the von Neumann entropy quantifies the uncertainty of a single quantum state,
it does not capture the distinguishability between two states.
This motivates the introduction of an operator-valued analogue of the classical
relative surprisal.

For two density operators $\rho, \sigma \in \mathcal{D}(\mathcal{H})$ acting on the same Hilbert space, the operator-valued analogue of the classical relative surprisal is given by the logarithmic difference, which we refer to as the quantum relative surprisal~\cite{Boes2022}
\begin{equation}\label{QRsurprisal}
\Delta \Xi := \ln \rho - \ln \sigma.
\end{equation}
This quantity captures the relative self-information of $\rho$ with respect to $\sigma$
at the operator level.
When $\mathrm{supp}(\rho) \subseteq \mathrm{supp}(\sigma)$, taking the expectation value
of $\Delta \Xi$ with respect to $\rho$ yields the Umegaki quantum relative entropy,
\begin{equation}\label{QRE}
S(\rho \| \sigma)
=\mathrm{Tr}\,\rho(\ln \rho - \ln \sigma),
\end{equation}
which quantifies the statistical distinguishability between quantum states.
Unlike its classical counterpart, the relative surprisal operator $\Delta \Xi$ is
generally non-commutative, reflecting an intrinsically quantum feature that has no
direct classical analogue and plays a central role in the structure of quantum
divergence measures.

In direct analogy with the classical theory, the Rényi entropy provides a one-parameter
generalization of the von Neumann entropy, parametrized by
$\alpha \in (0,+\infty)\setminus\{1\}$.
For a quantum state $\rho \in \mathcal{D}(\mathcal{H})$, it is defined as
\begin{equation}
S_{\alpha}(\rho)
=
\frac{1}{1-\alpha}
\ln \mathrm{Tr}\!\left(\rho^{\alpha}\right),
\end{equation}
and satisfies the Rényi entropy axioms for a single quantum state~\cite{muellerlennert2013}.

Extending this framework from single states to pairs of states leads naturally to a
family of quantum Rényi divergences, which generalize Umegaki quantum relative entropy.
In the commuting case, where $[\rho,\sigma]=0$, these divergences reduce to their
classical counterparts. In the generic non-commuting case, however, the absence of simultaneous diagonalization
introduces genuinely quantum effects and additional mathematical subtleties.

Among the most prominent constructions in this setting are the Petz quantum Rényi
divergence and the sandwiched quantum Rényi divergence, which emphasize different
operator-theoretic structures and operational behaviors.

\begin{definition}[Petz quantum Rényi divergence]
    Let $\rho$ and $\sigma$ be density operators in $\mathcal{D}(\mathcal{H})$. For $\alpha \in (0,+\infty)\setminus\{1\}$, the Petz quantum Rényi divergence is defined as
    \begin{equation}
        S_{\alpha}^{\mathrm{Petz}}(\rho \parallel \sigma)
        :=
        \begin{cases}
            \dfrac{1}{\alpha - 1} \ln \operatorname{Tr}(\rho^{\alpha}\sigma^{1-\alpha}),
            & \text{if } (\rho, \sigma) \in \mathcal{R}, \\
            +\infty, & \text{otherwise,}
        \end{cases}
        \label{eq:petz}
    \end{equation}
    where the admissible set $\mathcal{R}$ is defined explicitly by
    \[
        \mathcal{R} := \left\{ (\rho,\sigma) \in \mathcal{D}(\mathcal{H}) \times \mathcal{D}(\mathcal{H})
        \ \middle|\ \begin{aligned} &\rho \not\perp \sigma \text{ and} \\ &\operatorname{supp}(\rho) \subseteq \operatorname{supp}(\sigma) \end{aligned} \right\}.
    \]
\end{definition}
This construction, rooted in operator convexity and modular theory, preserves
positivity and admits a well-defined classical limit~\cite{Petz1986,Petz1986b}.

To strengthen monotonicity properties under quantum channels, a symmetrized construction known as the sandwiched quantum Rényi divergence was proposed.

\begin{definition}[Sandwiched quantum Rényi divergence]
   Let $\rho$ and $\sigma$ be density operators in $\mathcal{D}(\mathcal{H})$. For $\alpha \in (0,+\infty)\setminus\{1\}$, the sandwiched quantum Rényi divergence is defined under the same domain restrictions as
\begin{equation}
\begin{split}
    &S_{\alpha}^{\mathrm{s}}(\rho \parallel \sigma) \\
    &:= \begin{cases} 
        \dfrac{1}{\alpha - 1} \ln \operatorname{Tr}\!\left[ \left( \sigma^{\frac{1-\alpha}{2\alpha}} \rho \sigma^{\frac{1-\alpha}{2\alpha}} \right)^{\!\alpha} \right], & \text{if } (\rho, \sigma) \in \mathcal{R}, \\
        +\infty, & \text{otherwise.}
    \end{cases}
\end{split}
\label{eq:sandwiched}
\end{equation}

\end{definition}
A unifying two-parameter extension is given by the $(\alpha,z)$-quantum Rényi divergence,
which interpolates between the Petz and sandwiched forms \cite{audenaert2015alphaZ}.

The quantum Rényi divergences are of particular interest due to their monotonicity under CPTP maps, a property known as the QDPI. Specifically, for $(\rho,\sigma )\in \mathcal{D}(\mathcal{H}) \times \mathcal{D}(\mathcal{H})$ with $\mathrm{supp}(\rho)\subseteq\mathrm{supp}(\sigma)$, any CPTP map $\mathcal{N}_{CPTP}$ satisfies:
\begin{equation}
    S_{\alpha}(\rho \| \sigma) \ge S_{\alpha}\bigl(\mathcal{N}_{CPTP}(\rho)\| \mathcal{N}_{CPTP}(\sigma)\bigr),
\end{equation}
under the following parameter regimes:
\begin{itemize}
    \item  {Petz quantum Rényi divergence:} $S_{\alpha}^{\mathrm{Petz}}(\rho \| \sigma)$ satisfies QDPI for $\alpha\in(0,2]$ \cite{Tomamichel2016,Li2024}.
    \item {Sandwiched quantum Rényi divergence:} $S_{\alpha}^{\mathrm{s}}(\rho \|\sigma)$ satisfies QDPI for $\alpha\ge \tfrac{1}{2}$ \cite{frank2013,Beigi2013,Wilde2014}.
    \item {$(\alpha,z)$-quantum Rényi divergence:} $S_{\alpha,z}(\rho \| \sigma)$ satisfies QDPI in broad regions including the Petz ($z=1$) and sandwiched ($z=\alpha$) cases \cite{audenaert2015alphaZ}.
\end{itemize}

These results indicate that, despite their distinct algebraic constructions, existing quantum Rényi divergences share monotonicity under CPTP maps within specific parameter regimes. In the subsequent sections, we investigate whether a cumulant-based formulation of quantum Rényi divergence admits a comparable data-processing behavior, motivated by the statistical structure of the relative surprisal operator.
\section{The Cumulant-Generating Quantum Rényi Functional}

In the classical setting, Rényi divergences can be expressed as logarithmic moment-generating functions of the relative surprisal. Motivated by this representation, we extend the construction to the quantum setting by replacing the classical relative surprisal with the quantum relative surprisal operator $\Delta \Xi$, thereby obtaining a non-commutative analogue of the cumulant-generating functional.

\begin{definition}[Cumulant-based quantum relative Rényi functional] 
    Let $\rho$ and $\sigma$ be density operators in $\mathcal{D}(\mathcal{H})$. For $\alpha >1$, the cumulant-based quantum relative Rényi functional (Cu-Q relative Rényi functional) is defined as
    \begin{equation}
    \label{QCRRF}
    \begin{split}
        &S_\alpha^{\text{Q}}(\rho \| \sigma) \\
        &:= \begin{cases} 
            \dfrac{1}{\alpha - 1} \ln \operatorname{Tr} \left( \rho e^{(\alpha - 1) (\ln \rho - \ln \sigma)} \right), & \text{if } (\rho, \sigma) \in \mathcal{R}, \\ 
            +\infty, & \text{otherwise.} 
        \end{cases} 
    \end{split}
    \end{equation}
\end{definition}

The non-regularized Cu-Q relative Rényi functional constitutes the primary object of study in this work. Throughout the paper, all structural properties of the Cu-Q relative Rényi functional are established for its studied regime, namely $\alpha>1,\operatorname{supp}(\rho)\subseteq
\operatorname{supp}(\sigma)$.

To investigate the behavior of the Cu-Q relative Rényi functional outside its studied regime, we introduce a regularization procedure based on strictly
positive approximations of the density operators. This construction
extends the Cu-Q relative Rényi framework beyond its studied regime and
ensures that the logarithmic operators appearing in the definition are
well defined on the entire Hilbert space.

Given $\rho,\sigma\in\mathcal D(\mathcal H)$ and a parameter
$\varepsilon\in(0,1)$, define
\begin{equation}\label{state_regularized}
\rho_\varepsilon
:=
(1-\varepsilon)\rho
+
\varepsilon\frac{\mathbb{I}}{d},
\quad
\sigma_\varepsilon
:=
(1-\varepsilon)\sigma
+
\varepsilon\frac{\mathbb{I}}{d},
\end{equation}
where $d=\dim(\mathcal H)$. Since
\begin{equation}
\rho_\varepsilon
\ge
\frac{\varepsilon}{d}{\mathbb{I}},
\quad
\sigma_\varepsilon
\ge
\frac{\varepsilon}{d}{\mathbb{I}},
\end{equation}
both regularized states are strictly positive and therefore belong to
$\mathcal D_+(\mathcal H)$. Consequently,
$\ln\rho_\varepsilon$ and $\ln\sigma_\varepsilon$ are well defined on the entire Hilbert space. The regularized Cu-Q relative Rényi functional is defined,
for $\alpha\in\mathbb R\setminus\{1\}$, by
\begin{equation}\label{regularized_Cu-Q}
S^{Q}_{\alpha}(\rho_\varepsilon\|\sigma_\varepsilon)
:=
\frac{1}{\alpha-1}
\ln
\operatorname{Tr}
\!\left[
\rho_\varepsilon
e^{(\alpha-1)
(\ln\rho_\varepsilon-\ln\sigma_\varepsilon)}
\right].
\end{equation}

For $\alpha>1$ and
$\operatorname{supp}(\rho)\subseteq\operatorname{supp}(\sigma)$, the regularized construction recovers the original Cu-Q relative Rényi functional in the limit $\varepsilon\to0^+$. The regularized framework does not alter the underlying structure of the proposed functional; rather, it provides a mathematically convenient extension to strictly positive density operators, allowing the same class of
properties to be investigated on the larger parameter domain $\alpha\in\mathbb{R}\setminus\{1\}$. 

To make explicit the cumulant-generating-function structure underlying the proposed construction, we recall the quantum relative surprisal
operator introduced in Eq.\eqref{QRsurprisal}. We then define 
\begin{equation}
\label{eq:CGF}
    \mathcal{F}_{\rho,\sigma}(\tau)
    :=
    \ln\operatorname{Tr}\!\left[
        \rho\, e^{\tau\Delta\Xi}
    \right],
    \quad \tau\in\mathbb{R},
\end{equation}
where $e^{\tau\Delta\Xi}$ is defined via the functional calculus of $\Delta\Xi$ on $\operatorname{supp}(\rho)$.

Since $\operatorname{supp}(\rho)$ is finite-dimensional, the spectral theorem yields
\begin{equation}
    \Delta\Xi = \sum_j \lambda_j |v_j\rangle\langle v_j|
\end{equation}
on $\operatorname{supp}(\rho)$, where $\{\lambda_j\}$ are the eigenvalues of $\Delta\Xi$ and $\{|v_j\rangle\}$ are the corresponding orthonormal
eigenvectors. Consequently,
\begin{equation}
    \operatorname{Tr}\!\left[\rho\, e^{\tau\Delta\Xi}\right]
    =
    \sum_j e^{\tau\lambda_j}\langle v_j|\rho|v_j\rangle.
\end{equation}
Because $\rho|_{\operatorname{supp}(\rho)}$ is strictly positive,
$\langle v_j|\rho|v_j\rangle>0$ for every eigenvector $|v_j\rangle$ of
$\Delta\Xi$. Hence
\(
\operatorname{Tr}\!\left[\rho\,e^{\tau\Delta\Xi}\right]
=
\sum_j
\langle v_j|\rho|v_j\rangle e^{\tau\lambda_j}
\)
is a finite positive linear combination of exponentials.

It follows that
$\tau\mapsto\operatorname{Tr}[\rho e^{\tau\Delta\Xi}]$
is entire and strictly positive on $\mathbb R$.
Hence $\mathcal F_{\rho,\sigma}(\tau)$
is real analytic on $\mathbb R$
and satisfies
\[
\mathcal F_{\rho,\sigma}(0)=0.
\]
The independence of $\Delta\Xi$ from $\tau$ implies
\begin{equation}
    \frac{d}{d\tau}e^{\tau\Delta\Xi}
    =
    \Delta\Xi e^{\tau\Delta\Xi}.
\end{equation}
Consequently, the derivatives of
$\mathcal F_{\rho,\sigma}(\tau)$
at $\tau=0$
generate the non-commutative cumulants of
$\Delta\Xi$.
By construction,
\begin{equation}
\label{eq:Sq_from_K}
    S_\alpha^Q(\rho\|\sigma)
    =
    \frac{\mathcal F_{\rho,\sigma}(\alpha-1)}
         {\alpha-1}.
\end{equation}

The analyticity of $\mathcal F_{\rho,\sigma}$ therefore provides a direct link between the Cu-Q relative Rényi functional and the cumulant
hierarchy of the quantum relative surprisal operator. Accordingly,
\begin{equation}
\label{eq:K_expansion}
    \mathcal F_{\rho,\sigma}(\tau)
    =
    \sum_{n=1}^{\infty}
    \frac{\tau^n}{n!}\,
    C_n(\Delta\Xi)_\rho,
\end{equation}
where
\begin{equation}
    C_n(\Delta\Xi)_\rho
    :=
    \left.
    \frac{d^n}{d\tau^n}
    \mathcal F_{\rho,\sigma}(\tau)
    \right|_{\tau=0}
\end{equation}
denotes the $n$th non-commutative cumulant of the quantum relative
surprisal operator. 

Substituting $\tau=\alpha-1$, which satisfies $\tau>0$ whenever $\alpha>1$, into Eq.\eqref{eq:K_expansion} and using Eq.\eqref{eq:Sq_from_K} yields
\begin{equation}
\label{eq:cumulant_expansion}
    S_\alpha^Q(\rho\|\sigma)
    =
    \sum_{n=1}^{\infty}
    \frac{(\alpha-1)^{n-1}}{n!}\,
    C_n(\Delta\Xi)_\rho.
\end{equation}
Thus, the Cu-Q relative Rényi framework naturally encodes the entire hierarchy of cumulants associated with the quantum relative surprisal
operator. Writing the first few terms explicitly,
\begin{align}
\label{eq:first_terms}
    S_\alpha^Q(\rho\|\sigma)
    &=
    C_1(\Delta\Xi)_\rho
    + \frac{\alpha-1}{2}\,C_2(\Delta\Xi)_\rho
    + \mathcal{O}\!\left((\alpha-1)^2\right) \nonumber \\
    &=
    S(\rho\|\sigma)
    + \frac{\alpha-1}{2}\operatorname{Var}_\rho(\Delta\Xi)
    + \mathcal{O}\!\left((\alpha-1)^2\right).
\end{align}
The first cumulant is
\begin{equation}
    C_1(\Delta\Xi)_\rho
    =
    \operatorname{Tr}\!\left[\rho(\ln\rho - \ln\sigma)\right],
\end{equation}
which coincides with the Umegaki quantum relative entropy $S(\rho\|\sigma)$.
The second cumulant is
\begin{equation}
    C_2(\Delta\Xi)_\rho
    =
    \operatorname{Tr}\!\left[\rho\,(\Delta\Xi)^2\right]
    -
    \Bigl(\operatorname{Tr}\!\left[\rho\,\Delta\Xi\right]\Bigr)^2,
\end{equation}
which is precisely the quantum relative variance
$\operatorname{Var}_\rho(\Delta\Xi)$. Higher-order cumulants ($C_n$, $n\ge 3$) encode increasingly refined statistical information, including non-commutative analogues of higher-order fluctuations. Further details regarding the cumulant-generating-function construction and its connection with the classical Rényi divergence are provided in Appendix\ref{app:derivation_cuq}.

\section{Cumulant path-integral-like representation}

In the classical setting, cumulant-generating functions encode fluctuation structure through their logarithmic moment-generating form, and their connection to large-deviation theory makes them well suited for information-theoretic applications~\cite{Touchette2009,Esposito2009}. In the quantum regime, non-commutativity of $\ln\rho$ and $\ln\sigma$ obstructs direct factorization, and ordering effects become intrinsic 
to the functional. We show that this operator structure induces a natural path-integral-like representation of the Cu-Q relative Rényi functional.

\begin{theorem}[Path-integral-like representation]
\label{prop:path_integral}
Let $\alpha>1$ and let $(\rho,\sigma)\in
\mathcal D(\mathcal H)\times
\mathcal D(\mathcal H)$
be density operators satisfying 
$\operatorname{supp}(\rho)\subseteq\operatorname{supp}(\sigma)$. 
Let $\{|\varphi_i\rangle\}$ and $\{|\psi_m\rangle\}$ denote the eigenbases of $\rho$ and $\sigma$, respectively, and define the transition probability 
$\mathbb{P}(\varphi_i|\psi_m):=|\langle\varphi_i|\psi_m\rangle|^2$. Then,
\begin{equation}
\label{eq:path_integral_form_new}
\begin{split}
S_\alpha^{Q}(\rho\|\sigma) 
= \frac{1}{\alpha-1}\ln\Biggl[ 
&\lim_{n\to\infty}
\sum_{\substack{\{(i_\beta,m_\beta)\}_{\beta=1}^n \\ i_n = i_0}} 
p(\varphi_{i_0}) \\
&\times \exp\Biggl(\sum_{\beta=1}^n 
\mathcal{A}_{\rho,\sigma}^{(\beta)}\Biggr)\Biggr],
\end{split}
\end{equation}
where the effective action per step is defined as
\begin{equation}
\begin{split}
\mathcal{A}_{\rho,\sigma}^{(\beta)} 
:= \,&\frac{1}{2}\ln\!\left( 
\mathbb{P}(\varphi_{i_\beta}|\psi_{m_\beta})
\mathbb{P}(\psi_{m_\beta}|\varphi_{i_{\beta-1}})
\right) \\
&+ \frac{\alpha-1}{n} \ln\frac{p(\varphi_{i_\beta})}{q(\psi_{m_\beta})} 
+ i\,\Delta\Theta_\beta,
\end{split}
\end{equation}
and
\begin{equation}
\begin{split}
\Delta\Theta_\beta 
&:= \Theta(\varphi_{i_\beta},\psi_{m_\beta}) 
   - \Theta(\varphi_{i_{\beta-1}},\psi_{m_\beta}), \\
\langle\varphi_i|\psi_m\rangle 
&= \sqrt{\mathbb{P}(\varphi_i|\psi_m)}\,e^{i\Theta(\varphi_i,\psi_m)}.
\end{split}
\end{equation}
\end{theorem}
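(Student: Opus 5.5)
We would start from the definition \eqref{QCRRF} and apply the Lie--Trotter product formula to the exponential of the sum $(\alpha-1)\ln\rho+(-(\alpha-1)\ln\sigma)$. Setting $\tau=\alpha-1>0$, we write
\begin{equation*}
e^{\tau(\ln\rho-\ln\sigma)}=\lim_{n\to\infty}\Bigl(e^{\frac{\tau}{n}\ln\rho}\,e^{-\frac{\tau}{n}\ln\sigma}\Bigr)^{n}=\lim_{n\to\infty}\Bigl(\rho^{\tau/n}\sigma^{-\tau/n}\Bigr)^{n}.
\end{equation*}
Because everything is finite dimensional, this limit holds in operator norm. The map $X\mapsto\operatorname{Tr}(\rho X)$ is continuous, and the trace-class/duality bound $|\operatorname{Tr}(AB)|\le\|A\|_1\|B\|_\infty$ from Section~\ref{Math} lets us move the limit outside the trace. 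The positivity of $\operatorname{Tr}[\rho e^{\tau\Delta\Xi}]$, established in the discussion of $\mathcal F_{\rho,\sigma}$, then lets us move the limit through $\ln$ as well.

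When $\rho$ or $\sigma$ is not full rank, we would handle the kernel issue first. The support condition $\operatorname{supp}(\rho)\subseteq\operatorname{supp}(\sigma)$ lets us work with the functional calculus on $\operatorname{supp}(\sigma)$, the same convention used to define $e^{\tau\Delta\Xi}$ on $\operatorname{supp}(\rho)$. Concretely, we interpret $\sigma^{-\tau/n}$ as the generalized inverse power and $\rho^{\tau/n}$ via the corresponding restriction. Alternatively, we could argue with the regularized states $\rho_\varepsilon,\sigma_\varepsilon$ and let $\varepsilon\to0^+$, as noted after Eq.~\eqref{regularized_Cu-Q}. Indices $i,m$ with $p(\varphi_i)=0$ or $q(\psi_m)=0$ are understood to be excluded from the sums, with the same convention applied to $\ln$ of zero weights. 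We regard this as the main technical obstacle. The delicate point is that $\rho^{\tau/n}\to P_\rho$ rather than $\mathbb I$ as $n\to\infty$. So we must check that the Trotter limit of the restricted operators still reproduces the operator $e^{\tau\Delta\Xi}$ defined on $\operatorname{supp}(\rho)$. Our first attempt would be to verify this directly on $\operatorname{supp}(\sigma)$ using the standard Trotter bound $\|e^{A+B}-(e^{A/n}e^{B/n})^n\|\le \tfrac{1}{2n}\|[A,B]\|e^{\|A\|+\|B\|}$ applied there.

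Next comes the expansion. We insert the resolutions of identity $\sum_i|\varphi_i\rangle\langle\varphi_i|$ and $\sum_m|\psi_m\rangle\langle\psi_m|$ between every factor, writing $\rho=\sum_{i_0}p(\varphi_{i_0})|\varphi_{i_0}\rangle\langle\varphi_{i_0}|$ for the leading $\rho$. Using $\rho^{\tau/n}|\varphi_i\rangle=p(\varphi_i)^{\tau/n}|\varphi_i\rangle$ and $\sigma^{-\tau/n}|\psi_m\rangle=q(\psi_m)^{-\tau/n}|\psi_m\rangle$, each Trotter step $\beta$ contributes the factor
\begin{equation*}
\langle\varphi_{i_{\beta-1}}|\psi_{m_\beta}\rangle\langle\psi_{m_\beta}|\varphi_{i_\beta}\rangle\,\Bigl(\tfrac{p(\varphi_{i_\beta})}{q(\psi_{m_\beta})}\Bigr)^{\tau/n},
\end{equation*}
once we order the product so that each $\sigma$-factor sits between successive $\rho$-eigenvectors. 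Cyclicity of the trace closes the chain and gives the constraint $i_n=i_0$.

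Finally, we substitute $\langle\varphi_i|\psi_m\rangle=\sqrt{\mathbb P(\varphi_i|\psi_m)}e^{i\Theta(\varphi_i,\psi_m)}$. Then $\langle\psi_m|\varphi_{i'}\rangle$ is the complex conjugate, with phase $-\Theta(\varphi_{i'},\psi_m)$. Writing each step factor as $\exp(\mathcal A^{(\beta)}_{\rho,\sigma})$ recovers the three terms: the half-log of the product of transition probabilities, the term $\frac{\alpha-1}{n}\ln\frac{p}{q}$, and the phase difference $i\Delta\Theta_\beta$. Here we take $\ln$ of positive reals, and on zero-overlap terms we interpret $\exp(\ln 0)=0$. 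Before finalizing, we would check carefully which ordering, $\rho$-first or $\sigma$-first, produces exactly the index pattern $\mathbb P(\varphi_{i_\beta}|\psi_{m_\beta})\mathbb P(\psi_{m_\beta}|\varphi_{i_{\beta-1}})$ and the sign of $\Delta\Theta_\beta$ as stated. The two orderings differ only by relabeling and by the endpoint convention; both give the same Trotter limit, so either may be chosen to match the statement.
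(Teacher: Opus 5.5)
Your proposal follows the paper's proof essentially step for step (Lie--Trotter splitting, insertion of the $\rho$- and $\sigma$-eigenbasis resolutions, polar form of the overlaps), and the ordering question you leave open is harmless. The stated index pattern is exactly $\prod_{\beta}\langle\varphi_{i_\beta}|\rho^{\tau/n}\sigma^{-\tau/n}|\varphi_{i_{\beta-1}}\rangle$, and because $\rho$ commutes with $\rho^{\tau/n}$ one even has $\operatorname{Tr}[\rho(\rho^{\tau/n}\sigma^{-\tau/n})^n]=\operatorname{Tr}[\rho(\sigma^{-\tau/n}\rho^{\tau/n})^n]$ for every $n$.

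Your treatment of kernels goes beyond the paper, which ignores them. If you carry it out, compress to $\operatorname{supp}(\rho)$ rather than $\operatorname{supp}(\sigma)$: on $\operatorname{supp}(\sigma)$ the factor $\rho^{\tau/n}$ still has a kernel whenever $\operatorname{supp}(\rho)\neq\operatorname{supp}(\sigma)$, so the standard Trotter bound does not apply there. On $\operatorname{supp}(\rho)$, use $P_\rho\sigma^{-\tau/n}P_\rho=P_\rho-\tfrac{\tau}{n}P_\rho\ln\sigma\,P_\rho+O(n^{-2})$ together with a Chernoff-type product limit; this gives exactly $\exp\bigl(\tau P_\rho(\ln\rho-\ln\sigma)P_\rho\bigr)$ on $\operatorname{supp}(\rho)$.
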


\noindent 
\begin{proof}
We begin from the definition of Cu-Q relative Rényi functional in \eqref{QCRRF}. 
Define the generating functional
\begin{equation}\label{Geneq}
\Phi_{\rho,\sigma}(\alpha-1):=\operatorname{Tr}\!\left[ \rho\,e^{(\alpha-1)(\ln\rho-\ln\sigma)}\right].
\end{equation}
Since $\operatorname{supp}(\rho)\subseteq
\operatorname{supp}(\sigma)$, the operator
$\ln\rho-\ln\sigma$ is well defined on
$\operatorname{supp}(\rho)$. Since the Hilbert space is finite-dimensional, its restriction to
$\operatorname{supp}(\rho)$ is bounded and Hermitian. Applying the spectral decomposition of $\rho$, we obtain
\begin{equation}
\Phi_{\rho,\sigma}(\alpha-1)=\sum_{i_0} p(\varphi_{i_0}) \langle\varphi_{i_0}| e^{(\alpha-1)(\ln\rho-\ln\sigma)}
|\varphi_{i_0}\rangle.
\end{equation}
Using the Lie-Trotter product formula (in the strong operator topology),
\begin{equation}
e^{(\alpha-1)(\ln\rho-\ln\sigma)}=\lim_{n\to\infty}\left(
e^{\frac{\alpha-1}{n}\ln\rho}e^{-\frac{\alpha-1}{n}\ln\sigma} \right)^n,
\end{equation}
we write
\begin{equation}
\langle\varphi_{i_0}|e^{(\alpha-1)(\ln\rho-\ln\sigma)} |\varphi_{i_0}\rangle =\lim_{n\to\infty}\langle\varphi_{i_0}|
(\mathcal{T}_\rho \mathcal{T}_\sigma)^n|\varphi_{i_0}\rangle,
\end{equation}
where $\mathcal{T}_\rho := e^{\frac{\alpha-1}{n}\ln\rho},
\qquad\mathcal{T}_\sigma := e^{-\frac{\alpha-1}{n}\ln\sigma}$.

Expanding the product and inserting resolutions of the identity, we obtain
\begin{equation}
\langle\varphi_{i_0}|(\mathcal{T}_\rho \mathcal{T}_\sigma)^n|\varphi_{i_0}\rangle
=
\sum_{i_1,\dots,i_{n-1}}
\prod_{\beta=1}^n
\langle\varphi_{i_\beta}|
\mathcal{T}_\rho\mathcal{T}_\sigma
|\varphi_{i_{\beta-1}}\rangle,
\end{equation}
with \(i_n=i_0\), corresponding to cyclic paths induced by the trace operation. Inserting the resolution of the identity in the eigenbasis of \(\sigma\), we obtain
\begin{equation}
\begin{aligned}
&\langle\varphi_{i_\beta}| \mathcal{T}_\rho \mathcal{T}_\sigma |\varphi_{i_{\beta-1}}\rangle \\
&= \sum_{m_\beta} \langle\varphi_{i_\beta}|\psi_{m_\beta}\rangle \langle\psi_{m_\beta}|\varphi_{i_{\beta-1}}\rangle \exp\Biggl[ \frac{\alpha-1}{n} \ln \frac{p(\varphi_{i_\beta})}{q(\psi_{m_\beta})} \Biggr].
\end{aligned}
\end{equation}
Thus,
\begin{equation}
\Phi_{\rho,\sigma}(\alpha-1)
=
\lim_{n\to\infty}
\sum_{\substack{\{(i_\beta,m_\beta)\} \\ i_n=i_0}}
p(\varphi_{i_0})
\prod_{\beta=1}^n
\mathcal{M}_\beta,
\end{equation}
where
\begin{equation}
\mathcal{M}_\beta=\langle\varphi_{i_\beta}|\psi_{m_\beta}\rangle
\langle\psi_{m_\beta}|\varphi_{i_{\beta-1}}\rangle\exp\!\left(
\frac{\alpha-1}{n}\ln\frac{p(\varphi_{i_\beta})}{q(\psi_{m_\beta})}
\right).
\end{equation}
Using the polar decomposition,
\begin{equation}
\langle\varphi_i|\psi_m\rangle = \sqrt{\mathbb{P}(\varphi_i|\psi_m)}\,e^{i\Theta(\varphi_i,\psi_m)},
\end{equation}
we obtain
\begin{equation}
\begin{aligned}
\mathcal{M}_\beta 
&= \exp\Biggl( \frac{1}{2}\ln\left( \mathbb{P}(\varphi_{i_\beta}|\psi_{m_\beta})\mathbb{P}(\psi_{m_\beta}|\varphi_{i_{\beta-1}})\right) \\
&\quad + \frac{\alpha-1}{n} \ln\frac{p(\varphi_{i_\beta})}{q(\psi_{m_\beta})} + i\,\Delta\Theta_\beta \Biggr).
\end{aligned}
\end{equation}
Taking the product,
\begin{equation}
\prod_{\beta=1}^n \mathcal{M}_\beta =\exp\!\left(\sum_{\beta=1}^n
\mathcal{A}_{\rho,\sigma}^{(\beta)}\right).
\end{equation}
Substituting back yields
\begin{equation}
\Phi_{\rho,\sigma}(\alpha-1)=\lim_{n\to\infty}\sum_{\substack{\{(i_\beta,m_\beta)\} \\ i_n=i_0}}p(\varphi_{i_0})\exp\!\left(
\sum_{\beta=1}^n\mathcal{A}_{\rho,\sigma}^{(\beta)}\right),
\end{equation}
which proves \eqref{eq:path_integral_form_new}.
\end{proof}

The representation in Theorem~\ref{prop:path_integral} 
admits a natural interpretation as a weighted sum over discrete cyclic trajectories in Hilbert space. Each trajectory is specified by a sequence 
$\{(i_\beta, m_\beta)\}_{\beta=1}^n$ of transitions 
between the eigenbases of $\rho$ and $\sigma$, and 
contributes a complex weight of the form
\[
\exp\left(\sum_{\beta=1}^n \mathcal{A}_{\rho,\sigma}^{(\beta)}\right).
\]
This representation may be viewed as a formal complex-valued analogue of a partition function over a path ensemble. The real part of the exponent,
\[
\frac{1}{2}\ln\!\left(
\mathbb{P}(\varphi_{i_\beta}|\psi_{m_\beta})
\mathbb{P}(\psi_{m_\beta}|\varphi_{i_{\beta-1}})
\right)
+
\frac{\alpha-1}{n}
\ln\frac{p(\varphi_{i_\beta})}{q(\psi_{m_\beta})},
\]
encodes the statistical weight associated with transitions between eigenstates and the relative scaling of eigenvalues. In contrast, the imaginary part, given by the phase increments $\Delta\Theta_\beta$, captures interference 
effects arising from the non-commutativity between 
$\rho$ and $\sigma$.

Although the representation may be rewritten in a form reminiscent of a Feynman path integral, its interpretation here is fundamentally different. Instead, it emerges from the polar decomposition of 
the overlaps between the eigenbases of $\rho$ and 
$\sigma$, reflecting the geometric phase accumulated along each path due to the non-commutativity of the 
two operators.

In the commuting case, where $[\rho,\sigma]=0$, the eigenbases coincide, and the overlaps reduce to Kronecker delta functions. Consequently, the path structure degenerates entirely: the phase increments vanish identically, and the sum reduces directly to a classical weighted sum, recovering the classical Rényi divergence as in Proposition~\ref{prop:classical-reduction}.

We emphasize that this representation is purely algebraic and does not define a probability measure over paths, but rather provides a structural decomposition of the operator expression into complex-weighted contributions.

\section{{Analytical properties}\label{Analytical property}}

In this section, we investigate several analytical properties of the Cu-Q relative Rényi functional \(S_\alpha^Q(\rho\|\sigma)\).
In particular, we establish continuity, normalization, unitary invariance, additivity, compatibility with generalized mean structures, positivity, and monotonicity with respect to the Rényi parameter \(\alpha\).
These results demonstrate that \(S_\alpha^Q(\rho\|\sigma)\) is mathematically well defined on its studied regime and satisfies several properties commonly expected of Rényi-type quantum divergences.
The quantum data-processing inequality (QDPI) is discussed separately in the following section.

The majority of the analytical results in this section are established on the studied regime of the Cu-Q relative Rényi functional, namely
\(
\alpha>1
\)
together with
\(
\operatorname{supp}(\rho)\subseteq\operatorname{supp}(\sigma),
\)
where logarithms are understood on the corresponding supports.

For continuity, two complementary perspectives are considered. First, continuity is established on the strictly positive state space
\(
\mathcal D_+(\mathcal H)\times\mathcal D_+(\mathcal H),
\)
where the logarithm behaves regularly under functional calculus. Second, regularization is introduced to study the associated regularized functional and its dependence on the regularization parameter.

Regularized constructions also play an important role
in later sections, particularly in the investigation of the \(\alpha=0\) regime and in extending the parameter
domain of the regularized functional beyond that of the
original definition. In particular, this leads to a quantity that naturally characterizes the non-commutativity between quantum states and may be interpreted as a measure of relative quantumness.

\begin{lemma}[H\"{o}lder inequality for Schatten classes]\label{Lem1}
Let $A \in \mathcal{L}_p(\mathcal{H})$ and $B \in \mathcal{L}_q(\mathcal{H})$ denote 
the Schatten $p$- and $q$-class operators respectively, with $1/p + 1/q = 1$.
Then
\begin{equation}
    \big|\operatorname{Tr}(AB)\big| \le \|A\|_p \, \|B\|_q.
\end{equation}
In particular, for $p=1$ and $q=\infty$,
\begin{equation}
    \big|\operatorname{Tr}(AB)\big| \le \|A\|_1 \, \|B\|_\infty.
\end{equation}
\end{lemma}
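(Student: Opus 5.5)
The plan is to prove the statement in finite dimension $d$ using the singular value decomposition and the classical (scalar) Hölder inequality. I will treat the boundary case $p=1$, $q=\infty$ first, because it is elementary, and then the interior case $1<p<\infty$.

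\textbf{Boundary case $p=1$, $q=\infty$.} Write the singular value decomposition $A=\sum_{n} s_n(A)\,|u_n\rangle\langle w_n|$ with orthonormal families $\{|u_n\rangle\}$ and $\{|w_n\rangle\}$. Then
\[
\operatorname{Tr}(AB)=\sum_n s_n(A)\,\langle w_n|B|u_n\rangle .
\]
Since $|\langle w_n|B|u_n\rangle|\le\|B\|_\infty$ for unit vectors, the triangle inequality gives $|\operatorname{Tr}(AB)|\le \|B\|_\infty\sum_n s_n(A)=\|A\|_1\|B\|_\infty$.

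\textbf{Interior case $1<p<\infty$.} The key tool is the bound
\[
|\operatorname{Tr}(AB)|\le \sum_{n=1}^d s_n(A)\,s_n(B).
\]
This is von Neumann's trace inequality, with singular values arranged in non-increasing order. Once it is available, the scalar Hölder inequality applied to the sequences $\{s_n(A)\}$ and $\{s_n(B)\}$ gives $\sum_n s_n(A)s_n(B)\le \|A\|_p\|B\|_q$ directly, which finishes the proof.

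To justify von Neumann's inequality, I would proceed as follows.
\begin{itemize}
\item Insert both singular value decompositions, $A=U\Sigma_A V^\dagger$ and $B=X\Sigma_B Y^\dagger$.
\item This gives $\operatorname{Tr}(AB)=\sum_{n,m} s_n(A)s_m(B)\,W_{nm}Z_{mn}$, where $W=V^\dagger X$ and $Z=Y^\dagger U$ are unitary.
\item The matrix $|W_{nm}Z_{mn}|$ is dominated entrywise by $\tfrac12\bigl(|W_{nm}|^2+|Z_{mn}|^2\bigr)$, which is doubly stochastic.
\item By Birkhoff's theorem, a doubly stochastic matrix is a convex combination of permutation matrices.
\item The rearrangement inequality shows that, among all permutations, the identity pairing of the decreasing sequences maximizes $\sum_n s_n(A)s_{\pi(n)}(B)$.
\end{itemize}

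\textbf{Main obstacle.} The only nontrivial step is this von Neumann inequality via the Birkhoff and rearrangement argument. The rest is routine.

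As an alternative route that avoids that step, I could use complex interpolation. Consider the function $F(z)=\operatorname{Tr}\bigl(U|A|^{pz}\,|B^\dagger|^{q(1-z)}\,W\bigr)$ on the strip $0\le \operatorname{Re} z\le 1$, built from polar decompositions, and apply the Hadamard three-lines theorem together with the $p=1$ case on each boundary line. I would choose the singular value argument, since it is finite-dimensional and self-contained.
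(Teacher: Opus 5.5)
Your proof is correct. It differs from the paper in that the paper gives no proof of this lemma at all: it presents the inequality as a standard structural fact and refers to the literature (Reed--Simon, Theorem VI.26; Kittaneh; Watrous). There the $p=1$, $q=\infty$ case appears as the norm-one boundedness of the pairing in the duality $\mathcal{L}_1(\mathcal{H})^*=\mathcal{B}(\mathcal{H})$.

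Your argument is instead self-contained and finite-dimensional. The rank-one SVD expansion gives the $(1,\infty)$ case in two lines. For $1<p<\infty$ you reduce to von Neumann's trace inequality and then apply scalar H\"older. Each step of your proof of von Neumann's inequality checks out: the entrywise bound $|W_{nm}Z_{mn}|\le\tfrac12\bigl(|W_{nm}|^2+|Z_{mn}|^2\bigr)$ by a doubly stochastic matrix, Birkhoff's theorem, and the rearrangement inequality.

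The citation buys brevity and covers genuine Schatten-class operators on infinite-dimensional spaces. Your Birkhoff step is intrinsically finite-dimensional and would need an approximation argument there, but that does not matter here, since the paper fixes $\dim\mathcal{H}=d<\infty$.

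Two small remarks. First, you do not list the case $p=\infty$, $q=1$, which the condition $1/p+1/q=1$ also allows. It follows from $\operatorname{Tr}(AB)=\operatorname{Tr}(BA)$, or directly from von Neumann's inequality via $\sum_n s_n(A)s_n(B)\le s_1(A)\sum_n s_n(B)$; in fact von Neumann's inequality handles all $1\le p\le\infty$ at once. Second, the paper only ever uses the $(1,\infty)$ case, in the continuity proof of Proposition~\ref{Prop:Continuity}. So your elementary boundary argument already covers everything the paper needs; the von Neumann/Birkhoff machinery is required only for the general statement.
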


We first establish continuity on the full-rank state space,
which is contained in the studied regime. A subsequent result studies the continuity of the regularized Cu-Q relative Rényi functional with respect to the regularization parameter.

\begin{proposition}[Continuity]\label{Prop:Continuity}
Let $\mathcal{H}$ be a finite-dimensional Hilbert space and let 
$\{(\rho_n,\sigma_n)\}_{n\in\mathbb{N}} \subset \mathcal{D}_+(\mathcal{H}) \times \mathcal{D}_+(\mathcal{H})$ 
be a sequence of full-rank density operators converging to $(\rho,\sigma) \in \mathcal{D}_+(\mathcal{H}) \times \mathcal{D}_+(\mathcal{H})$ in trace norm, $\|\cdot\|_1$. Then, for any fixed $\alpha>1$,
\begin{equation}
    \lim_{n\to\infty} S^Q_\alpha(\rho_n\|\sigma_n) = S^Q_\alpha(\rho\|\sigma),
\end{equation}
that is, the cumulant-based quantum relative Rényi functional $S^Q_\alpha(\rho\|\sigma)$ is continuous on 
$\mathcal{D}_+(\mathcal{H}) \times \mathcal{D}_+(\mathcal{H})$.
\end{proposition}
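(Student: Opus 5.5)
The plan is to view $S^Q_\alpha$ as a composition of continuous maps on $\mathcal D_+(\mathcal H)\times\mathcal D_+(\mathcal H)$. Since $\mathcal H$ is finite-dimensional, all norms on $\mathcal B(\mathcal H)$ are equivalent. Hence trace-norm convergence $\rho_n\to\rho$, $\sigma_n\to\sigma$ implies operator-norm convergence, and it suffices to prove continuity of
\[
(\rho,\sigma)\mapsto \Phi_{\rho,\sigma}(\alpha-1)=\operatorname{Tr}\!\left[\rho\,e^{(\alpha-1)(\ln\rho-\ln\sigma)}\right],
\]
together with the strict positivity of the limit value. The first step is continuity of the logarithm. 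Because $\rho$ is full rank, $\rho\ge\lambda_{\min}(\rho)\mathbb I$ with $\lambda_{\min}(\rho)>0$. By Weyl's inequality $|\lambda_{\min}(\rho_n)-\lambda_{\min}(\rho)|\le\|\rho_n-\rho\|_\infty$, so for large $n$ all $\rho_n$ (and likewise $\sigma_n$) have spectra in a fixed compact interval $[c,1]$ with $c>0$. On this interval $\ln$ is continuous, and the functional calculus $A\mapsto f(A)$ is norm-continuous on Hermitian operators with spectrum in $[c,1]$. To see this, approximate $\ln$ uniformly on $[c,1]$ by polynomials (Weierstrass); polynomials are manifestly norm-continuous, and uniform approximation gives $\|f(A)-p(A)\|_\infty\le\sup_{[c,1]}|f-p|$ uniformly in $A$. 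Alternatively one may use the integral representation $\ln A=\int_0^\infty\big[(1+t)^{-1}-(A+t)^{-1}\big]dt$ together with the resolvent identity. Either way, $\ln\rho_n\to\ln\rho$ and $\ln\sigma_n\to\ln\sigma$ in operator norm.

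Next, set $H_n:=(\alpha-1)(\ln\rho_n-\ln\sigma_n)$, which converges in norm to $H:=(\alpha-1)(\ln\rho-\ln\sigma)$. Continuity of the exponential follows from the standard estimate
\[
\|e^{H_n}-e^{H}\|_\infty\le\|H_n-H\|_\infty\, e^{\max(\|H_n\|_\infty,\|H\|_\infty)},
\]
obtained from Duhamel's formula $e^{H_n}-e^{H}=\int_0^1 e^{sH_n}(H_n-H)e^{(1-s)H}ds$. The norms $\|H_n\|_\infty$ are uniformly bounded by $2(\alpha-1)|\ln c|$. Then I would split
\[
\operatorname{Tr}(\rho_n e^{H_n})-\operatorname{Tr}(\rho e^{H})=\operatorname{Tr}\big((\rho_n-\rho)e^{H_n}\big)+\operatorname{Tr}\big(\rho(e^{H_n}-e^{H})\big),
\]
and bound both terms by Lemma~\ref{Lem1} with $p=1$, $q=\infty$. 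This gives $\|\rho_n-\rho\|_1\|e^{H_n}\|_\infty+\|\rho\|_1\|e^{H_n}-e^{H}\|_\infty\to0$, where $\|e^{H_n}\|_\infty$ is uniformly bounded.

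Finally, $\Phi_{\rho,\sigma}(\alpha-1)>0$. Indeed, $e^{H}$ is positive definite and $\rho$ is positive definite, so $\operatorname{Tr}(\rho e^H)=\operatorname{Tr}(\rho^{1/2}e^H\rho^{1/2})>0$. Since $\ln$ is continuous on $(0,\infty)$ and $1/(\alpha-1)$ is a fixed constant, $S^Q_\alpha(\rho_n\|\sigma_n)\to S^Q_\alpha(\rho\|\sigma)$.

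The main obstacle is the first step. The logarithm is not uniformly continuous near the boundary of the state space, so one must secure the uniform spectral lower bound $c>0$ for the tail of the sequence via Weyl's inequality before invoking continuity of the functional calculus. The remaining steps are routine norm estimates.
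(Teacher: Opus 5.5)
Your proposal is correct and follows essentially the same route as the paper: operator-norm convergence, a uniform spectral lower bound $c>0$ for the tail of the sequence, norm-continuity of $\ln$ and $\exp$ under the functional calculus, the same add-and-subtract split bounded via Lemma~\ref{Lem1} with $p=1$, $q=\infty$, and strict positivity of the limiting trace before taking the logarithm. The differences are cosmetic. You justify the continuity steps more explicitly (Weyl's inequality, Weierstrass approximation, Duhamel's formula) and bound $\|H_n\|_\infty$ directly by $2(\alpha-1)|\ln c|$. The paper instead uses $\|X_n\|_\infty<\|X\|_\infty+1$ together with the power-series estimate $\|e^{X_n}\|_\infty\le e^{\|X_n\|_\infty}$.
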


\begin{remark}
The full-rank assumption is imposed only for the continuity argument presented here.
It guarantees that the spectra of the states remain uniformly bounded away from zero, so that the logarithm is continuous under functional calculus and the operator \(\ln\rho-\ln\sigma\)
depends continuously on the pair \((\rho,\sigma).\)
\end{remark}

\begin{proof}
Let us consider the generating functional 
$\Phi_{\rho,\sigma}(\alpha-1)$ in~\eqref{Geneq}. In finite-dimensional Hilbert spaces, the Schatten norms satisfy $\|A\|_\infty \le \|A\|_1$ for any operator $A$. Therefore, convergence in trace norm implies convergence in operator norm, which yields $\|\rho_n-\rho\|_\infty \to 0$ and $\|\sigma_n-\sigma\|_\infty \to 0$. 

Since $\rho$ and $\sigma$ are of full rank, their minimal eigenvalues are strictly positive.
Moreover, since density operators have spectrum contained in $[0,1]$, and full-rankness guarantees a strictly positive lower spectral bound, there exists $c>0$ and $N_0 \in \mathbb{N}$ such that for all $n \ge N_0$,
\begin{equation}
\rho_n,\sigma_n,\rho,\sigma \ge cI.
\end{equation}
By continuity of eigenvalues with respect to the operator norm, such a uniform bound is stable under perturbation. 

Thus, their spectra lie in the compact interval $[c,1]$. 
Since $\log$ is continuous on $[c,1]$, functional calculus yields 
$\|\ln\rho_n-\ln\rho\|_\infty \to 0$ and $\|\ln\sigma_n-\ln\sigma\|_\infty \to 0$.

Set
\begin{equation}
    X_n := (\alpha-1)(\ln\rho_n-\ln\sigma_n)
\end{equation}
and 
\begin{equation}
    X := (\alpha-1)(\ln\rho-\ln\sigma).
\end{equation}
Using linearity of the operator norm and the triangle inequality, we estimate 
$\|X_n - X\|_\infty$ as follows. First, substituting the definitions of $X_n$ and 
$X$ and factoring out $|\alpha-1|$ via $\|cA\|_\infty = |c|\|A\|_\infty$,
\begin{align}
    \|X_n-X\|_{\infty} 
    &= \bigl\|(\alpha-1)(\ln\rho_n-\ln\sigma_n) \nonumber \\
    &\quad - (\alpha-1)(\ln\rho-\ln\sigma)\bigr\|_{\infty} \nonumber \\
    &= |\alpha-1|\,\Bigl\|(\ln\rho_n - \ln\sigma_n) \nonumber \\
    &\quad - (\ln\rho - \ln\sigma)\Bigr\|_{\infty}.
\end{align}
Regrouping the terms inside the norm as 
$(\ln\rho_n - \ln\rho) - (\ln\sigma_n - \ln\sigma)$
and applying the triangle inequality $\|A - B\|_\infty \le \|A\|_\infty + \|B\|_\infty$,
which follows from $\|A+(-B)\|_\infty \le \|A\|_\infty + \|-B\|_\infty$ and 
homogeneity $\|-B\|_\infty = \|B\|_\infty$, we obtain
\begin{align}
    \|X_n-X\|_{\infty} 
    &\le |\alpha-1|\,\bigl(\|\ln\rho_n - \ln\rho\|_{\infty} \notag\\
    &\qquad + \|\ln\sigma_n - \ln\sigma\|_{\infty}\bigr).
\end{align}
Since $\|\ln\rho_n - \ln\rho\|_\infty \to 0$ and $\|\ln\sigma_n - \ln\sigma\|_\infty \to 0$ 
by functional calculus, and $|\alpha-1|$ is a finite constant, we conclude
\begin{equation}
    \|X_n - X\|_\infty \to 0.
\end{equation}
The exponential map is continuous with respect to the operator norm on bounded 
operators, that is, $\|A_n - A\|_\infty \to 0$ implies 
$\|e^{A_n} - e^{A}\|_\infty \to 0$. Hence,
\begin{equation}
    \|e^{X_n}-e^{X}\|_{\infty}\to 0.
\end{equation}
Since $X_n\to X$ in operator norm, by definition of convergence 
applied with $\varepsilon = 1$, there exists $N_1\in\mathbb{N}$ such that
\begin{equation}
    \|X_n - X\|_{\infty} < 1 \qquad \text{for all } n \ge N_1.
\end{equation}
Applying the triangle inequality,
\begin{align}
    \|X_n\|_{\infty} 
    &= \|X_n - X + X\|_{\infty} \notag\\ 
    &\le \underbrace{\|X_n - X\|_{\infty}}_{<\, 1} + \|X\|_{\infty} \notag\\
    &< \|X\|_{\infty} + 1
    \qquad \text{for all } n\ge N_1.
\end{align}
Set $r := \|X\|_{\infty}+1 > 0$, so that $\|X_n\|_{\infty} < r$ for all $n \ge N_1$.
To bound $\|e^{X_n}\|_\infty$, we use the power series expansion
\begin{equation}
    e^{X_n} = \sum_{k=0}^{\infty} \frac{X_n^k}{k!} 
\end{equation}
Taking the operator norm of both sides and applying the triangle inequality for 
infinite sums $\bigl\|\sum_{k=0}^{\infty} A_k\bigr\|_\infty \le \sum_{k=0}^{\infty} 
\|A_k\|_\infty$, we obtain
\begin{equation}
    \|e^{X_n}\|_{\infty} 
    = \left\|\sum_{k=0}^{\infty} \frac{X_n^k}{k!}\right\|_\infty
    \le \sum_{k=0}^{\infty} \frac{\|X_n^k\|_{\infty}}{k!}.
\end{equation}
Applying submultiplicativity of the operator norm, that is, 
$\|AB\|_\infty \le \|A\|_\infty\|B\|_\infty$, iteratively gives
\begin{equation}
\begin{aligned}
    \|X_n^k\|_\infty 
    &= \|X_n \cdot X_n \cdots X_n\|_\infty \\
    &\le \|X_n\|_\infty \cdot \|X_n\|_\infty \cdots \|X_n\|_\infty \\
    &= \|X_n\|_\infty^k.
\end{aligned}
\end{equation}
Substituting this bound,
\begin{equation}
    \sum_{k=0}^{\infty} \frac{\|X_n^k\|_{\infty}}{k!} 
    \le \sum_{k=0}^{\infty} \frac{\|X_n\|_{\infty}^k}{k!}.
\end{equation}
Since $\|X_n\|_\infty$ is a real number, the right-hand side is exactly the 
power series of the real exponential, so
\begin{equation}
    \sum_{k=0}^{\infty} \frac{\|X_n\|_{\infty}^k}{k!} = e^{\|X_n\|_{\infty}}.
\end{equation}
Combining the above,
\begin{equation}
    \|e^{X_n}\|_{\infty} \le \sum_{k=0}^{\infty} \frac{\|X_n^k\|_\infty}{k!}
    \le \sum_{k=0}^{\infty} \frac{\|X_n\|_{\infty}^k}{k!} 
    = e^{\|X_n\|_{\infty}}.
\end{equation}
Because $e^{(\cdot)}$ is increasing on $\mathbb{R}$ and $\|X_n\|_\infty < r$ 
for all $n \ge N_1$, we conclude
\begin{equation}
    \|e^{X_n}\|_{\infty} \le e^{\|X_n\|_{\infty}} \le e^{r}
    \qquad \text{for all } n\ge N_1.
\end{equation}

To establish convergence of $\Phi_{\rho_n,\sigma_n}(\alpha-1)$,
consider the difference
\begin{equation}
    \Delta_n =
    \big|
    \operatorname{Tr}(\rho_n e^{X_n})
    -
    \operatorname{Tr}(\rho e^{X})
    \big|.
\end{equation}
By adding and subtracting $\operatorname{Tr}(\rho e^{X_n})$ and applying the
triangle inequality,
\begin{equation}
    \Delta_n \le \big|\operatorname{Tr}((\rho_n-\rho)e^{X_n})\big|+
    \big|\operatorname{Tr}(\rho(e^{X_n}-e^{X}))\big|.
\end{equation}
From Lemma~\ref{Lem1}, we obtain 
\begin{equation}
    \big|\operatorname{Tr}((\rho_n-\rho)e^{X_n})\big|\le\|\rho_n-\rho\|_1 \, \|e^{X_n}\|_\infty,
\end{equation}
and, using $\|\rho\|_1 = \operatorname{Tr}(\rho) = 1$,
\begin{equation}
    \big|\operatorname{Tr}(\rho(e^{X_n}-e^{X}))\big|\le
    \|\rho\|_1 \, \|e^{X_n}-e^{X}\|_{\infty} = \|e^{X_n}-e^{X}\|_{\infty}.
\end{equation}
Thus, 
\begin{equation}
    \Delta_n\le \|\rho_n-\rho\|_1 \, \|e^{X_n}\|_{\infty}
    + \|e^{X_n}-e^{X}\|_{\infty}.
\end{equation}
Since $\|\rho_n-\rho\|_1 \to 0$, for any $\varepsilon_1>0$ there exists $N_2\in\mathbb{N}$ 
such that
\begin{equation}
    \|\rho_n-\rho\|_1 < \frac{\varepsilon_1}{e^{r}}
    \quad \text{for all } n\ge N_2.
\end{equation}
Consequently, for all $n\ge \max\{N_1,N_2\}$,
\begin{equation}
    \|\rho_n-\rho\|_1\,\|e^{X_n}\|_{\infty} < \varepsilon_1.
\end{equation}    

On the other hand, the continuity of the exponential map in operator norm implies
$\|e^{X_n}-e^{X}\|_{\infty}\to 0$. Hence, for any $\varepsilon_2>0$ there exists 
$N_3\in\mathbb{N}$ such that
\begin{equation}
    \|e^{X_n}-e^{X}\|_{\infty} < \varepsilon_2
    \quad \text{for all } n\ge N_3.
\end{equation}
Letting $N := \max\{N_0, N_1, N_2, N_3\}$, we obtain for all $n\ge N$,
\begin{equation}
    \Delta_n \le \|\rho_n-\rho\|_1\,\|e^{X_n}\|_{\infty}+
    \|e^{X_n}-e^{X}\|_{\infty}<\varepsilon_1+\varepsilon_2.
\end{equation}
Since $\varepsilon_1$ and $\varepsilon_2$ are arbitrary with
$\varepsilon_1+\varepsilon_2=\varepsilon$, it follows that
$\Delta_n\to 0$ as $n\to\infty$. Hence,
\[
    \Phi_{\rho_n,\sigma_n}(\alpha-1)
    \longrightarrow
    \Phi_{\rho,\sigma}(\alpha-1).
\]
Since $e^{X_n}\succ 0$ for all $n$ and $e^X\succ 0$, and both $\rho_n$ and $\rho$
are density operators, we have
\begin{equation}
    \Phi_{\rho_n,\sigma_n}(\alpha-1)=\operatorname{Tr}(\rho_n e^{X_n})>0
\end{equation}
and
\begin{equation}
     \Phi_{\rho,\sigma}(\alpha-1)=\operatorname{Tr}(\rho e^{X})>0
\end{equation}
for all $n\in\mathbb{N}$, so the logarithm is well defined along the entire sequence and at the limit point. Therefore, by continuity of the logarithm on
\((0,\infty)\),
\[
\log\Phi_{\rho_n,\sigma_n}(\alpha-1)
\longrightarrow
\log\Phi_{\rho,\sigma}(\alpha-1).
\]
Consequently,
\[
S_\alpha^Q(\rho_n\|\sigma_n)
\longrightarrow
S_\alpha^Q(\rho\|\sigma).
\]
This completes the proof of continuity of $S^Q_\alpha(\rho\|\sigma)$
on $\mathcal{D}_+(\mathcal{H}) \times \mathcal{D}_+(\mathcal{H})$.
\end{proof}

From a mathematical standpoint, Proposition~\ref{Prop:Continuity}
shows that \(S_\alpha^Q(\rho\|\sigma)\) is continuous on
\(
\mathcal D_+(\mathcal H)\times\mathcal D_+(\mathcal H)
\)
with respect to the trace-norm topology. From a physical standpoint, continuity reflects the stability of the distinguishability measure under small perturbations of the states.

Many quantum states encountered in physical applications, however, are not necessarily of full rank. The regularization procedure introduced in Eqs.~\eqref{state_regularized}
and~\eqref{regularized_Cu-Q}
provides a one-parameter family of strictly positive
approximations of such states.

Since the regularized functional is defined through
this family of full-rank states, it is natural to ask whether its value depends
continuously on the regularization parameter
\(\epsilon\). The following corollary shows that this is indeed the case.

\begin{corollary}[Continuity of the regularized functional]
\label{Prop:RegularizedContinuity}
Let $(\rho,\sigma)\in \mathcal{D}(\mathcal{H})\times \mathcal{D}(\mathcal{H})$
and let $\rho_\epsilon$ and $\sigma_\epsilon$ be the regularized states defined
in Eq.~\eqref{state_regularized}.
For every fixed $\alpha > 1$, the mapping
\[
  \epsilon \longmapsto S_\alpha^{\mathcal{Q}}(\rho_\epsilon\|\sigma_\epsilon)
\]
is continuous on $(0,1]$.
\end{corollary}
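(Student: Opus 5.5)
The plan is to reduce the statement to Proposition~\ref{Prop:Continuity} by viewing $\epsilon\mapsto(\rho_\epsilon,\sigma_\epsilon)$ as a continuous curve in $\mathcal D_+(\mathcal H)\times\mathcal D_+(\mathcal H)$, and then use the sequential characterization of continuity. One boundary point needs separate treatment: at $\epsilon=1$ we have $\rho_1=\sigma_1=\mathbb I/d$. This is still full rank, so the same argument applies there.

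First, I will show that for every $\epsilon\in(0,1]$ both $\rho_\epsilon$ and $\sigma_\epsilon$ lie in $\mathcal D_+(\mathcal H)$. Each is a convex combination of density operators, so it has unit trace and is positive semidefinite. The bound $\rho_\epsilon,\sigma_\epsilon\ge \tfrac{\epsilon}{d}\mathbb I$ from Eq.~\eqref{state_regularized} then gives full rank. Consequently $\operatorname{supp}(\rho_\epsilon)\subseteq\operatorname{supp}(\sigma_\epsilon)=\mathcal H$, and the pair lies in the admissible set $\mathcal R$: the trace $\operatorname{Tr}(\rho_\epsilon\sigma_\epsilon)$ is strictly positive because both states are positive definite. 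Hence $S^Q_\alpha(\rho_\epsilon\|\sigma_\epsilon)$ is given by the finite logarithmic formula rather than the value $+\infty$.

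Second, I will establish trace-norm continuity of the curve. For $\epsilon,\epsilon'\in(0,1]$,
\begin{equation*}
\|\rho_\epsilon-\rho_{\epsilon'}\|_1=|\epsilon-\epsilon'|\,\bigl\|\mathbb I/d-\rho\bigr\|_1\le 2|\epsilon-\epsilon'|,
\end{equation*}
and the same bound holds for $\sigma$. So the map $\epsilon\mapsto(\rho_\epsilon,\sigma_\epsilon)$ is Lipschitz in trace norm.

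Third, fix $\epsilon_0\in(0,1]$ and take any sequence $\epsilon_n\to\epsilon_0$ in $(0,1]$. Then $(\rho_{\epsilon_n},\sigma_{\epsilon_n})$ is a sequence in $\mathcal D_+(\mathcal H)\times\mathcal D_+(\mathcal H)$ converging in trace norm to the full-rank pair $(\rho_{\epsilon_0},\sigma_{\epsilon_0})$. Proposition~\ref{Prop:Continuity} therefore yields $S^Q_\alpha(\rho_{\epsilon_n}\|\sigma_{\epsilon_n})\to S^Q_\alpha(\rho_{\epsilon_0}\|\sigma_{\epsilon_0})$. Since in a metric setting sequential continuity is equivalent to continuity, the map is continuous at $\epsilon_0$.

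The only step needing care is confirming that the hypotheses of Proposition~\ref{Prop:Continuity} hold, and in particular that the limit pair is full rank. This is exactly why the domain excludes $\epsilon=0$. At $\epsilon=0$ the spectra may approach zero, the logarithm loses uniform continuity, and the argument breaks down. I therefore do not expect a genuine obstacle inside $(0,1]$, since the uniform lower spectral bound $\epsilon_0/(2d)$ holds automatically on a neighbourhood of $\epsilon_0$.
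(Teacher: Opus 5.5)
Your proposal is correct and matches the paper's proof: full rank of $\rho_\epsilon,\sigma_\epsilon$ from the $\epsilon\,\mathbb{I}/d$ admixture, the affine estimate $\|\rho_\epsilon-\rho_{\epsilon_0}\|_1=|\epsilon-\epsilon_0|\,\|\rho-\mathbb{I}/d\|_1$ (and likewise for $\sigma$), and then Proposition~\ref{Prop:Continuity} applied at the full-rank limit pair. Your additional checks (membership in $\mathcal R$, the explicit bound by $2$, and passing through sequential continuity) only make explicit details the paper leaves implicit.
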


\begin{proof}
Since ${\mathbb{I}}/d \succ 0$, both $\rho_\epsilon$ and $\sigma_\epsilon$ are full-rank.
Since $\rho_\epsilon$ and $\sigma_\epsilon$ depend affinely on $\epsilon$,
\[
  \|\rho_\epsilon - \rho_{\epsilon_0}\|_1
  = |\epsilon - \epsilon_0|\,\Bigl\|\rho - \frac{{\mathbb{I}}}{d}\Bigr\|_1 \to 0,
\]
and similarly $\|\sigma_\epsilon - \sigma_{\epsilon_0}\|_1 \to 0$ as
$\epsilon \to \epsilon_0$. Thus
$(\rho_\epsilon, \sigma_\epsilon) \to (\rho_{\epsilon_0}, \sigma_{\epsilon_0})$
in trace norm, with the limit pair full-rank.
Applying Proposition~\ref{Prop:Continuity} yields
\[
  \lim_{\epsilon \to \epsilon_0}
  S_\alpha^{\mathcal{Q}}(\rho_\epsilon \| \sigma_\epsilon)
  = S_\alpha^{\mathcal{Q}}(\rho_{\epsilon_0} \| \sigma_{\epsilon_0}),
\]
which completes the proof.
\end{proof}

\begin{remark}
The regularized functional $S_\alpha^{\mathcal{Q}}(\rho_\epsilon\|\sigma_\epsilon)$
is defined for all $\alpha \in \mathbb{R} \setminus \{1\}$, since the full-rank
condition on $\rho_\epsilon$ and $\sigma_\epsilon$ ensures that the logarithms
are well defined on the entire Hilbert space. The continuity argument above
extends to this broader parameter domain, as the proof relies only on the
full-rank property of the regularized states and the finiteness of $|\alpha-1|$,
neither of which depends on the sign of $\alpha - 1$. Consequently, the mapping
$\epsilon \mapsto S_\alpha^{\mathcal{Q}}(\rho_\epsilon\|\sigma_\epsilon)$
is continuous on $(0,1]$ for all $\alpha \in \mathbb{R} \setminus \{1\}$.
\end{remark}

The continuity results above establish the stability of the Cu-Q relative Rényi functional under perturbations of the states. Another fundamental requirement for any divergence measure is normalization, namely that the divergence vanishes when the two states
coincide.
    
\begin{proposition}[Normalization]\label{Prop:Normalization}
Let $\alpha > 1$ and let $(\rho,\sigma)\in\mathcal{D}(\mathcal{H})
\times\mathcal{D}(\mathcal{H})$ be density operators satisfying 
$\operatorname{supp}(\rho)\subseteq\operatorname{supp}(\sigma)$. 
Then $S^Q_\alpha(\rho\|\rho) = 0$.
\end{proposition}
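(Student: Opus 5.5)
The plan is to evaluate the defining expression \eqref{QCRRF} directly with $\sigma=\rho$. First I check that the pair $(\rho,\rho)$ lies in the admissible set $\mathcal R$. The support condition $\operatorname{supp}(\rho)\subseteq\operatorname{supp}(\rho)$ holds trivially. Also $\rho\not\perp\rho$, since $\operatorname{Tr}(\rho\rho)=\operatorname{Tr}\rho^2>0$ for any density operator. So the first branch of the definition applies and the value is not $+\infty$.

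Next I handle the relative surprisal operator. Since $\ln\rho$ is defined by functional calculus on $\operatorname{supp}(\rho)$, the operator $\Delta\Xi=\ln\rho-\ln\rho$ is the zero operator on $\operatorname{supp}(\rho)$. This is the only point that needs care. The statement is really about the pair $(\rho,\rho)$, and I will make explicit that both logarithms are taken on the same subspace with the same convention. That makes the difference exactly $0$ there, with no kernel contributions. Consequently $e^{(\alpha-1)\Delta\Xi}$ is the identity (projection) on $\operatorname{supp}(\rho)$. Equivalently, in the notation of \eqref{eq:CGF}, $\mathcal F_{\rho,\rho}(\tau)=\ln\operatorname{Tr}[\rho\, e^{0}]$ for every $\tau$.

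Then $\operatorname{Tr}\bigl(\rho\, e^{(\alpha-1)(\ln\rho-\ln\rho)}\bigr)=\operatorname{Tr}(\rho\,P_{\operatorname{supp}\rho})=\operatorname{Tr}\rho=1$. Taking the logarithm gives $0$, and dividing by $\alpha-1>0$ gives $S^Q_\alpha(\rho\|\rho)=0$.

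As a consistency check, I can also read this off the spectral representation used before \eqref{eq:Sq_from_K}. There all eigenvalues $\lambda_j$ of $\Delta\Xi$ vanish, so $\operatorname{Tr}[\rho e^{\tau\Delta\Xi}]=\sum_j\langle v_j|\rho|v_j\rangle=1$. I could also read it off Theorem~\ref{prop:path_integral}: with identical eigenbases, the overlaps become Kronecker deltas, the phases vanish, and each $\ln(p/q)$ term is zero, so the path sum collapses to $\sum_{i_0}p(\varphi_{i_0})=1$. The only step I expect to require comment is the well-definedness of $\ln\rho$ on the kernel of $\rho$. Restricting to $\operatorname{supp}(\rho)$, as the paper does throughout, removes this issue.
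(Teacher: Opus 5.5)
Your proposal is correct and follows the paper's proof: substitute $\sigma=\rho$, observe that $(\alpha-1)(\ln\rho-\ln\rho)=0$ so the exponential is the identity (on $\operatorname{supp}\rho$), and conclude $\Phi_{\rho,\rho}(\alpha-1)=\operatorname{Tr}\rho=1$, hence $S^Q_\alpha(\rho\|\rho)=0$. Your explicit check that $(\rho,\rho)\in\mathcal R$ and your handling of the support are small refinements that the paper leaves implicit, and the spectral and path-integral cross-checks are optional extras rather than needed steps.
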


\begin{proof}
Setting $\sigma = \rho$, we have
\begin{equation}
    (\alpha-1)(\ln\rho-\ln\rho) = 0,
\end{equation}
and hence
\begin{equation}
    e^{(\alpha-1)(\ln\rho-\ln\rho)} = \mathbb{I}.
\end{equation}
Substituting into~\eqref{Geneq}, we obtain
\begin{equation}
    \Phi_{\rho,\rho}(\alpha-1) =\operatorname{Tr}\!\left(\rho\,\mathbb{I}\right)
    =\operatorname{Tr}(\rho)=1,
\end{equation}
where the last equality follows from the unit-trace property of density operators.
Therefore,
\begin{equation}
    S^Q_\alpha(\rho\|\rho) = \frac{1}{\alpha-1}\ln 1 = 0.
\end{equation}
\end{proof}

\begin{example}[Scaling behavior outside the normalized setting]
\label{ex:scaling}
To illustrate the scaling behavior of the generating functional, consider the positive operators $\rho=\mathbb{I}$ and $\sigma=\mathbb{I}/2$ on a $d$-dimensional Hilbert space. These operators do not satisfy the unit-trace condition and hence do not belong to $\mathcal{D}(\mathcal{H})$, but they provide a useful algebraic illustration.

Since both operators are scalar multiples of $\mathbb{I}$, one has
\begin{equation}
    \ln\rho - \ln\sigma
    =
    \ln\mathbb{I} - \ln(\mathbb{I}/2)
    =
    \ln 2\,\mathbb{I},
\end{equation}
and therefore
\begin{equation}
    \Phi_{\rho,\sigma}(\alpha-1)
    =
    \operatorname{Tr}\!\left[
    \mathbb{I}\,e^{(\alpha-1)\ln 2\,\mathbb{I}}
    \right]
    =
    \operatorname{Tr}\!\left[2^{\alpha-1}\,\mathbb{I}\right]
    =
    2^{\alpha-1} d.
\end{equation}
It follows that
\begin{equation}
    S^Q_\alpha(\mathbb{I}\|\tfrac{\mathbb{I}}{2})
    =
    \frac{1}{\alpha-1}\ln\bigl(2^{\alpha-1} d\bigr)
    =
    \ln 2 + \frac{\ln d}{\alpha-1}.
\end{equation}
The additional term $\frac{\ln d}{\alpha-1}$ reflects the absence of 
unit-trace normalization. Upon rescaling to density operators 
$\rho'=\mathbb{I}/d$ and $\sigma'=\mathbb{I}/(2d)$, one obtains
\begin{equation}
    S^Q_\alpha(\rho'\|\sigma')
    =
    \ln 2,
\end{equation}
which coincides with the classical Rényi divergence in this commuting setting.
\end{example}

\medskip

Proposition~\ref{Prop:Normalization} establishes that the functional vanishes when both arguments coincide, as expected for a divergence measure. Example~\ref{ex:scaling} further illustrates that, in a commuting setting, the functional reduces to a value determined solely by the relative scaling between the operators, once proper normalization is imposed.

Continuity and normalization establish the basic stability and scale of $S^Q_\alpha(\rho\|\sigma)$, but say nothing about its behaviour under changes of representation. We therefore next examine whether $S^Q_\alpha(\rho\|\sigma)$ is invariant under unitary transformations.

\begin{proposition}[Unitary invariance]
Let $\alpha >1$ and let
$(\rho,\sigma) \in \mathcal{D}(\mathcal{H}) \times \mathcal{D}(\mathcal{H})$
be density operators satisfying 
$\operatorname{supp}(\rho)\subseteq\operatorname{supp}(\sigma)$.
Then, for any unitary operator $U$ on $\mathcal{H}$,
\begin{equation}
S^Q_\alpha(U\rho U^\dagger \| U\sigma U^\dagger)
=
S^Q_\alpha(\rho \| \sigma).
\end{equation}
\end{proposition}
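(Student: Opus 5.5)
The plan is to push the unitary conjugation through every step of the definition \eqref{QCRRF} using functional calculus, and then remove it with cyclicity of the trace.

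\textbf{Step 1 (domain).} First I check that $(U\rho U^\dagger, U\sigma U^\dagger)$ lies in the admissible set $\mathcal R$ exactly when $(\rho,\sigma)$ does. Conjugation maps spectral decompositions to spectral decompositions: if $\rho=\sum_i p(\varphi_i)|\varphi_i\rangle\langle\varphi_i|$, then $U\rho U^\dagger=\sum_i p(\varphi_i)|U\varphi_i\rangle\langle U\varphi_i|$. Hence $\operatorname{supp}(U\rho U^\dagger)=U\operatorname{supp}(\rho)$, and likewise for $\sigma$. So the inclusion $\operatorname{supp}(\rho)\subseteq\operatorname{supp}(\sigma)$ is preserved. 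The non-orthogonality condition is preserved too, since $\operatorname{Tr}(U\rho U^\dagger U\sigma U^\dagger)=\operatorname{Tr}(\rho\sigma)$. Both sides are therefore given by the logarithmic formula, or both are $+\infty$.

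\textbf{Step 2 (covariance of the functional calculus).} For any function $f$ defined on the spectrum of a Hermitian operator $A$, one has $f(UAU^\dagger)=Uf(A)U^\dagger$. This holds because $UAU^\dagger$ has the same eigenvalues as $A$, with eigenvectors rotated by $U$. On the supports, this gives $\ln(U\rho U^\dagger)=U(\ln\rho)U^\dagger$ and $\ln(U\sigma U^\dagger)=U(\ln\sigma)U^\dagger$, with the support of $U\rho U^\dagger$ equal to $U\operatorname{supp}(\rho)$ by Step~1. By linearity, the relative surprisal transforms as $\Delta\Xi\mapsto U\,\Delta\Xi\,U^\dagger$. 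Applying the covariance again to $f(x)=e^{(\alpha-1)x}$ gives
\begin{equation}
e^{(\alpha-1)U\Delta\Xi U^\dagger}=U e^{(\alpha-1)\Delta\Xi}U^\dagger .
\end{equation}
Equivalently, one can use the power series of the exponential together with $(UXU^\dagger)^k=UX^kU^\dagger$.

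\textbf{Step 3 (trace).} Substituting into \eqref{Geneq}, I get
\begin{equation}
\Phi_{U\rho U^\dagger,U\sigma U^\dagger}(\alpha-1)=\operatorname{Tr}\!\left(U\rho U^\dagger U e^{(\alpha-1)\Delta\Xi}U^\dagger\right)=\Phi_{\rho,\sigma}(\alpha-1),
\end{equation}
using $U^\dagger U=\mathbb I$ and cyclicity of the trace. Taking $\frac{1}{\alpha-1}\ln$ of both sides then gives the claim.

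The only point needing care is Step~2 in the non-full-rank case, where the logarithms are defined only on the support. I would handle this by noting that $U$ maps $\operatorname{supp}(\rho)$ isometrically onto $\operatorname{supp}(U\rho U^\dagger)$. The restricted functional calculus therefore intertwines exactly as in the full-rank case, and the trace against $U\rho U^\dagger$ only sees that subspace.
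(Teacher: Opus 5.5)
Your proposal is correct and follows the paper's proof: covariance of the functional calculus, first for $\ln$ and then for the exponential, followed by $U^\dagger U=\mathbb{I}$ and cyclicity of the trace. Your Step~1 (the admissible set $\mathcal{R}$ is preserved under conjugation) and your remark on the restricted functional calculus in the rank-deficient case are extra care that the paper leaves implicit, and both are sound.
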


\begin{proof}
By definition of $S^Q_\alpha(\rho \| \sigma)$, we have
\begin{equation}
\begin{aligned}
&S^Q_\alpha(U\rho U^\dagger \| U\sigma U^\dagger) \\
&= \frac{1}{\alpha-1} \ln \operatorname{Tr} \left[ (U\rho U^\dagger) e^{(\alpha-1)\bigl(\ln(U\rho U^\dagger)-\ln(U\sigma U^\dagger)\bigr)} \right].
\end{aligned}
\end{equation}
By the unitary covariance of the functional calculus, for any positive operator $G$
and any continuous function $f$, we have
\begin{equation}
f(UGU^\dagger)
=
Uf(G)U^\dagger.
\end{equation}
In particular, for $f(x)=\ln x$,
\begin{equation}
\ln(U G U^\dagger)
=
U(\ln G)U^\dagger.
\end{equation}
\begin{equation}
\begin{aligned}
    \ln(U\rho U^\dagger) - \ln(U\sigma U^\dagger) 
    &= U(\ln\rho)U^\dagger - U(\ln\sigma)U^\dagger \\
    &= U(\ln\rho - \ln\sigma)U^\dagger.
\end{aligned}
\end{equation}
Again using the unitary covariance of the functional calculus, now for the 
exponential map,
\begin{equation}
e^{UAU^\dagger}
=
U e^A U^\dagger,
\end{equation}
we obtain
\begin{equation}
e^{(\alpha-1)U(\ln\rho-\ln\sigma)U^\dagger}
=
U e^{(\alpha-1)(\ln\rho-\ln\sigma)} U^\dagger.
\end{equation}
Substituting back gives
\begin{equation}
\begin{aligned}
&S^Q_\alpha(U\rho U^\dagger \| U\sigma U^\dagger) \\
&= \frac{1}{\alpha-1} \ln \operatorname{Tr} \left[ U\rho U^\dagger \cdot U e^{(\alpha-1)(\ln\rho-\ln\sigma)} U^\dagger \right].
\end{aligned}
\end{equation}
Since $U^\dagger U = \mathbb{I}$, this simplifies to
\begin{equation}
\begin{aligned}
&S^Q_\alpha(U\rho U^\dagger \| U\sigma U^\dagger) \\
&= \frac{1}{\alpha-1} \ln \operatorname{Tr} \left[ U\rho e^{(\alpha-1)(\ln\rho-\ln\sigma)} U^\dagger \right].
\end{aligned}
\end{equation}
Finally, using the cyclicity of the trace, $\operatorname{Tr}(UAU^\dagger) = 
\operatorname{Tr}(A)$, we obtain
\begin{equation}
\operatorname{Tr}
\left[
U\rho e^{(\alpha-1)(\ln\rho-\ln\sigma)} U^\dagger
\right]
=
\operatorname{Tr}
\left[
\rho e^{(\alpha-1)(\ln\rho-\ln\sigma)}
\right].
\end{equation}
Therefore,
\begin{equation}
S^Q_\alpha(U\rho U^\dagger \| U\sigma U^\dagger)
=
S^Q_\alpha(\rho \| \sigma),
\end{equation}
which completes the proof.
\end{proof}

This proof demonstrates that $S^Q_\alpha(\rho \| \sigma)$ depends solely on the physical content of the states, rather than the choice of basis. Since unitary transformations correspond to changes of basis or reference frames, two observers describing the same system in different bases will necessarily agree on the value of the divergence. This property is the quantum analogue of the classical fact that any reasonable divergence is invariant under permutations of outcomes.

Furthermore, a meaningful functional should behave predictably under the composition of independent systems. Accordingly, we next investigate the compositional properties of $S^Q_\alpha(\rho\|\sigma)$, beginning with additivity and its extension to generalized mean structures.

\begin{proposition}[Additivity under tensor products]
Let $\alpha >1$, let $\mathcal{H}_1$ and $\mathcal{H}_2$ 
be finite-dimensional Hilbert spaces, and let
$(\rho,\sigma) \in \mathcal{D}(\mathcal{H}_1) \times \mathcal{D}(\mathcal{H}_1)$ 
and $(\eta,\Omega) \in \mathcal{D}(\mathcal{H}_2) \times \mathcal{D}(\mathcal{H}_2)$
be density operators satisfying
$\operatorname{supp}(\rho)\subseteq\operatorname{supp}(\sigma)$ and
$\operatorname{supp}(\eta)\subseteq\operatorname{supp}(\Omega)$.
Then,
\begin{equation}
S^Q_{\alpha}(\rho \otimes \eta \| \sigma \otimes \Omega)
=
S^Q_{\alpha}(\rho \| \sigma)
+
S^Q_{\alpha}(\eta \| \Omega).
\label{eq:additivity}
\end{equation}
\end{proposition}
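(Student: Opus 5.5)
The plan is to work entirely at the level of the generating functional $\Phi_{\rho,\sigma}(\alpha-1)$ from Eq.~\eqref{Geneq} and show that it factorizes under tensor products:
\[
\Phi_{\rho\otimes\eta,\sigma\otimes\Omega}(\alpha-1)=\Phi_{\rho,\sigma}(\alpha-1)\,\Phi_{\eta,\Omega}(\alpha-1).
\]
Taking $\frac{1}{\alpha-1}\ln$ then gives Eq.~\eqref{eq:additivity} directly.

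First, the support condition is preserved. Since $\operatorname{supp}(\rho\otimes\eta)=\operatorname{supp}(\rho)\otimes\operatorname{supp}(\eta)\subseteq\operatorname{supp}(\sigma)\otimes\operatorname{supp}(\Omega)=\operatorname{supp}(\sigma\otimes\Omega)$, the left-hand side is in the studied regime. Also, $\rho\otimes\eta\not\perp\sigma\otimes\Omega$ follows because the trace of a product of tensor products factorizes.

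The key algebraic step is the logarithm identity on supports:
\[
\ln(\rho\otimes\eta)=\ln\rho\otimes\mathbb{I}_2+\mathbb{I}_1\otimes\ln\eta,
\]
and similarly for $\sigma\otimes\Omega$. I would verify this from the spectral decompositions in Eq.~\eqref{rho}. The eigenvectors of $\rho\otimes\eta$ are $|\varphi_i\rangle\otimes|\chi_k\rangle$ with eigenvalues $p_i r_k$, and $\ln(p_ir_k)=\ln p_i+\ln r_k$ whenever both are positive. This gives the stated identity restricted to the support, with logarithms understood on supports as in the paper.

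It then follows that
\[
\ln(\rho\otimes\eta)-\ln(\sigma\otimes\Omega)=A\otimes\mathbb{I}_2+\mathbb{I}_1\otimes B,
\]
where $A:=\ln\rho-\ln\sigma$ and $B:=\ln\eta-\ln\Omega$. The two summands commute, since they act on different tensor factors. Hence
\[
e^{(\alpha-1)(A\otimes\mathbb{I}+\mathbb{I}\otimes B)}=e^{(\alpha-1)A}\otimes e^{(\alpha-1)B}.
\]
Multiplying by $\rho\otimes\eta$ and using $\operatorname{Tr}(X\otimes Y)=\operatorname{Tr}X\,\operatorname{Tr}Y$ gives the factorization of $\Phi$.

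Both factors are strictly positive, as noted after Eq.~\eqref{eq:CGF}. So the logarithm splits as a sum, and dividing by $\alpha-1>0$ completes the argument.

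The main obstacle I expect is the non-full-rank bookkeeping in the logarithm identity. The convention that logarithms and $e^{\tau\Delta\Xi}$ live on $\operatorname{supp}(\rho)$ must be handled consistently. Strictly, $\ln\rho\otimes\mathbb{I}$ is only meaningful on $\operatorname{supp}(\rho)\otimes\mathcal{H}_2$, so one should restrict everything to $\operatorname{supp}(\rho)\otimes\operatorname{supp}(\eta)=\operatorname{supp}(\rho\otimes\eta)$. There the identity holds exactly, and the trace against $\rho\otimes\eta$ only sees this subspace.

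If this proves delicate, a fallback is to prove the statement first for full-rank states, where everything is globally defined. One would then pass to the general case via the regularized states of Eq.~\eqref{state_regularized} together with the stated recovery $\varepsilon\to0^+$. However, $(\rho\otimes\eta)_\varepsilon\neq\rho_\varepsilon\otimes\eta_\varepsilon$, so I would instead regularize each factor separately and use continuity from Proposition~\ref{Prop:Continuity}.
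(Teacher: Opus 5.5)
Your proposal is correct and follows the paper's proof step for step. The paper splits $\ln(\rho\otimes\eta)-\ln(\sigma\otimes\Omega)$ into the commuting summands $(\ln\rho-\ln\sigma)\otimes\mathbb{I}+\mathbb{I}\otimes(\ln\eta-\ln\Omega)$, factorizes the exponential, and uses the mixed-product property and multiplicativity of the trace to get $\Phi_{\rho\otimes\eta,\sigma\otimes\Omega}(\alpha-1)=\Phi_{\rho,\sigma}(\alpha-1)\,\Phi_{\eta,\Omega}(\alpha-1)$. Your restriction to $\operatorname{supp}(\rho)\otimes\operatorname{supp}(\eta)$ is more careful than the paper's bookkeeping, so the fallback is unnecessary; note also that Proposition~\ref{Prop:Continuity} only covers sequences whose limit pair is full-rank, so it would not by itself justify the $\varepsilon\to0^+$ passage to rank-deficient states.
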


\begin{proof}
Applying the definition of $S^Q_\alpha(\rho \| \sigma)$ to the tensor-product states, we consider
\begin{equation}
\operatorname{Tr}\!\left[
(\rho \otimes \eta)
e^{(\alpha-1)\bigl(\ln(\rho \otimes \eta)-\ln(\sigma \otimes \Omega)\bigr)}
\right].
\end{equation}
Since $\rho \otimes \eta$ is a positive operator on 
$\mathcal{H}_1 \otimes \mathcal{H}_2$, functional calculus restricted 
to the support gives
\begin{equation}
\ln(\rho \otimes \eta)
=
\ln\rho \otimes \mathbb{I}
+
\mathbb{I} \otimes \ln\eta,
\end{equation}
and similarly,
\begin{equation}
\ln(\sigma \otimes \Omega)
=
\ln\sigma \otimes \mathbb{I}
+
\mathbb{I} \otimes \ln\Omega.
\end{equation}
Subtracting,
\begin{equation}
\ln(\rho \otimes \eta)
-
\ln(\sigma \otimes \Omega)
=
(\ln\rho-\ln\sigma)\otimes \mathbb{I}
+
\mathbb{I}\otimes(\ln\eta-\ln\Omega).
\end{equation}
Since $A \otimes \mathbb{I}$ and $\mathbb{I} \otimes B$ commute for any 
operators $A$ and $B$, the exponential of the sum splits as a tensor product,
\begin{equation}
\begin{aligned}
&\exp\Bigl[ (\alpha-1)\bigl( (\ln\rho-\ln\sigma)\otimes \mathbb{I} + \mathbb{I}\otimes(\ln\eta-\ln\Omega) \bigr) \Bigr] \\
&= e^{(\alpha-1)(\ln\rho-\ln\sigma)} \otimes e^{(\alpha-1)(\ln\eta-\ln\Omega)}.
\end{aligned}
\end{equation}
Using the mixed-product property 
$(A \otimes B)(C \otimes D) = (AC) \otimes (BD)$,
we simplify
\begin{equation}
\begin{aligned}
&(\rho \otimes \eta) \bigl( e^{(\alpha-1)(\ln\rho-\ln\sigma)} \otimes e^{(\alpha-1)(\ln\eta-\ln\Omega)} \bigr) \\
&= \rho\,e^{(\alpha-1)(\ln\rho-\ln\sigma)} \otimes \eta\,e^{(\alpha-1)(\ln\eta-\ln\Omega)}.
\end{aligned}
\end{equation}
Applying multiplicativity of the trace, 
$\operatorname{Tr}(A \otimes B) = \operatorname{Tr}(A)\operatorname{Tr}(B)$,
we obtain
\begin{equation}
\begin{aligned}
&\operatorname{Tr}\Bigl[ (\rho \otimes \eta) \bigl( e^{(\alpha-1)(\ln\rho-\ln\sigma)} \otimes e^{(\alpha-1)(\ln\eta-\ln\Omega)} \bigr) \Bigr] \\
&= \operatorname{Tr}\bigl[ \rho\,e^{(\alpha-1)(\ln\rho-\ln\sigma)} \bigr] \operatorname{Tr}\bigl[ \eta\,e^{(\alpha-1)(\ln\eta-\ln\Omega)} \bigr].
\end{aligned}
\end{equation}
Therefore,
\begin{equation}
\begin{aligned}
&S^Q_{\alpha}(\rho \otimes \eta \| \sigma \otimes \Omega) \\
&= \frac{1}{\alpha-1} \ln \Biggl( \operatorname{Tr}\bigl[ \rho\,e^{(\alpha-1)(\ln\rho-\ln\sigma)} \bigr] \\
&\quad \times \operatorname{Tr}\bigl[ \eta\,e^{(\alpha-1)(\ln\eta-\ln\Omega)} \bigr] \Biggr).
\end{aligned}
\end{equation}
and using $\ln(AB)=\ln A+\ln B$, we conclude
\begin{equation}
S^Q_{\alpha}(\rho \otimes \eta \| \sigma \otimes \Omega)
=
S^Q_{\alpha}(\rho \| \sigma)
+
S^Q_{\alpha}(\eta \| \Omega).
\end{equation}
\end{proof}

Additivity shows that $S^Q_\alpha(\rho \| \sigma)$ accumulates information additively across 
independent systems: the functional between two joint states $\rho \otimes \eta$ and $\sigma \otimes \Omega$ is exactly the sum of the functionals of the individual components. 

A related but structurally different question is how $S^Q_\alpha(\rho \| \sigma)$  behaves under direct sums, where the combination is no longer multiplicative but additive at the level of the Hilbert space. This leads to a generalized mean structure, as we now show.

\begin{proposition}[Generalized mean under direct sums]
Let $\alpha >1$, let $\mathcal{H}_1$ and $\mathcal{H}_2$ 
be finite-dimensional Hilbert spaces, and let
$(\rho,\sigma) \in \mathcal{D}(\mathcal{H}_1) \times \mathcal{D}(\mathcal{H}_1)$ 
and $(\eta,\Omega) \in \mathcal{D}(\mathcal{H}_2) \times \mathcal{D}(\mathcal{H}_2)$
be density operators satisfying
$\operatorname{supp}(\rho)\subseteq\operatorname{supp}(\sigma)$ and
$\operatorname{supp}(\eta)\subseteq\operatorname{supp}(\Omega)$.
Then,
\begin{equation}
S^Q_\alpha(\rho\oplus\eta \,\|\, \sigma\oplus\Omega)
=
g^{-1}\!\left(
g\!\left(S^Q_\alpha(\rho \,\|\, \sigma)\right)
+
g\!\left(S^Q_\alpha(\eta \,\|\, \Omega)\right)
\right),
\label{eq:direct-sum-mean}
\end{equation}
where $g(x)=e^{(\alpha-1)x}$ and $g^{-1}(y)=\frac{1}{\alpha-1}\ln y$.
\end{proposition}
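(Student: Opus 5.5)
The plan is to work directly with the generating functional $\Phi$ from~\eqref{Geneq} and show that it is additive under direct sums:
\[
\Phi_{\rho\oplus\eta,\,\sigma\oplus\Omega}(\alpha-1)=\Phi_{\rho,\sigma}(\alpha-1)+\Phi_{\eta,\Omega}(\alpha-1).
\]
The claimed formula then follows by rewriting each $\Phi$ in terms of $S^Q_\alpha$. Indeed, $g(S^Q_\alpha(\rho\|\sigma))=e^{\ln\Phi_{\rho,\sigma}(\alpha-1)}=\Phi_{\rho,\sigma}(\alpha-1)$, and applying $g^{-1}$ to the sum returns $\frac{1}{\alpha-1}\ln$ of the total generating functional. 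This is exactly the definition~\eqref{QCRRF} evaluated on the direct-sum pair.

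The key steps follow the pattern of the additivity proof.

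\begin{enumerate}
\item Block-diagonal functional calculus. Since $\rho\oplus\eta$ is block diagonal on $\mathcal H_1\oplus\mathcal H_2$, its logarithm on the support is $\ln\rho\oplus\ln\eta$. Likewise $\ln(\sigma\oplus\Omega)=\ln\sigma\oplus\ln\Omega$. The support condition carries over, because $\operatorname{supp}(\rho\oplus\eta)=\operatorname{supp}\rho\oplus\operatorname{supp}\eta\subseteq\operatorname{supp}\sigma\oplus\operatorname{supp}\Omega$.
\item The exponent. The difference is $(\ln\rho-\ln\sigma)\oplus(\ln\eta-\ln\Omega)$. The exponential of a block-diagonal operator is the direct sum of the blockwise exponentials, which can be seen from the power series since $(A\oplus B)^k=A^k\oplus B^k$.
\item Multiply and trace. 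Block-diagonal products act blockwise, and $\operatorname{Tr}(A\oplus B)=\operatorname{Tr}A+\operatorname{Tr}B$. Together these give the additivity of $\Phi$.
\end{enumerate}

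The main subtlety is normalization. $\rho\oplus\eta$ has trace $2$, so it is not a density operator. The statement should therefore be read with $S^Q_\alpha$ extended by the same formula to positive operators, in the spirit of Example~\ref{ex:scaling}. I will note this explicitly, so that the right-hand side reflects an unnormalized sum rather than a convex mixture.

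A secondary point is the convention for logarithms on kernels. I will handle it by stating that all operator functions are taken on $\operatorname{supp}(\rho\oplus\eta)$, which decomposes blockwise. Positivity of each $\Phi$ ensures that $g^{-1}$ is well defined, as already argued after~\eqref{eq:CGF}.
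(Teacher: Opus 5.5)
Your proposal is correct and follows essentially the same route as the paper: block-diagonal functional calculus puts $(\ln\rho-\ln\sigma)\oplus(\ln\eta-\ln\Omega)$ in the exponent, and the exponential, product and trace then split blockwise, so $\Phi$ is additive; rewriting each term as $g(S^Q_\alpha)=\Phi$ gives the formula. Your caveat about the trace-$2$ normalization of $\rho\oplus\eta$ matches the paper's accompanying remark that the generating functional remains well defined on the unnormalized block-diagonal pair.
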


\begin{remark}
Note that $\rho\oplus\eta$ and $\sigma\oplus\Omega$ are not normalized states, 
since $\operatorname{Tr}(\rho\oplus\eta)=\operatorname{Tr}(\sigma\oplus\Omega)=2$. 
Nevertheless, the generating functional~\eqref{Geneq} remains well-defined 
because it depends only on the block-diagonal operator structure and the support 
conditions are satisfied on each component block.
\end{remark}

\begin{proof}
The operators $\rho\oplus\eta$ and $\sigma\oplus\Omega$ act on 
$\mathcal{H}_1\oplus\mathcal{H}_2$ and are block-diagonal. Hence,
\begin{equation}
\ln(\rho\oplus\eta)-\ln(\sigma\oplus\Omega)
=
(\ln\rho-\ln\sigma)\oplus(\ln\eta-\ln\Omega).
\end{equation}
Functional calculus preserves block structure, giving
\begin{equation}
e^{(\alpha-1)(\ln(\rho\oplus\eta)-\ln(\sigma\oplus\Omega))}
=
e^{(\alpha-1)(\ln\rho-\ln\sigma)}
\oplus
e^{(\alpha-1)(\ln\eta-\ln\Omega)}.
\end{equation}
Thus,
\begin{equation}
\begin{aligned}
&\Phi_{\rho\oplus\eta,\,\sigma\oplus\Omega}(\alpha-1) \\
&= \operatorname{Tr}\Bigl[ (\rho\oplus\eta) e^{(\alpha-1)\bigl(\ln(\rho\oplus\eta)-\ln(\sigma\oplus\Omega)\bigr)} \Bigr] \\
&= \operatorname{Tr}\bigl[ \rho\,e^{(\alpha-1)(\ln\rho-\ln\sigma)} \bigr] + \operatorname{Tr}\bigl[ \eta\,e^{(\alpha-1)(\ln\eta-\ln\Omega)} \bigr].
\end{aligned}
\end{equation}
By definition of $S^Q_\alpha$, this becomes
\begin{equation}
\Phi_{\rho\oplus\eta,\sigma\oplus\Omega}(\alpha-1)
=
e^{(\alpha-1)S^Q_\alpha(\rho\|\sigma)}
+
e^{(\alpha-1)S^Q_\alpha(\eta\|\Omega)}.
\end{equation}
Therefore,
\begin{equation}
\begin{aligned}
&S^Q_\alpha(\rho\oplus\eta \,\|\, \sigma\oplus\Omega) \\
&= \frac{1}{\alpha-1} \ln \left( e^{(\alpha-1)S^Q_\alpha(\rho\|\sigma)} + e^{(\alpha-1)S^Q_\alpha(\eta\|\Omega)} \right).
\end{aligned}
\end{equation}
which proves the statement.
\end{proof}

\begin{remark}[Normalized direct sum structure]
Let
\begin{equation}
    \tilde\rho=\frac{\rho\oplus\eta}{2}, \qquad
\tilde\sigma=\frac{\sigma\oplus\Omega}{2}.
\end{equation}
Since $\tilde\rho$ and $\tilde\sigma$ are normalized density 
operators, the logarithms are well defined. A direct computation 
gives
\begin{equation}
\begin{aligned}
\ln\tilde\rho - \ln\tilde\sigma
&= \bigl[ \ln(\rho\oplus\eta) - \ln 2 \cdot I \bigr] - \bigl[ \ln(\sigma\oplus\Omega) - \ln 2 \cdot I \bigr] \\
&= \ln(\rho\oplus\eta) - \ln(\sigma\oplus\Omega).
\end{aligned}
\end{equation}
since the scalar shift $-\ln 2\cdot I$ cancels in the difference.
Consequently, the generating functional of $(\tilde\rho,\tilde\sigma)$
satisfies
\begin{equation}
\begin{aligned}
\Phi_{\tilde\rho,\,\tilde\sigma}(\alpha-1)
&= \mathrm{Tr}\Bigl[ \tilde\rho\, e^{(\alpha-1)(\ln\tilde\rho-\ln\tilde\sigma)} \Bigr] \\
&= \frac{1}{2}\,\mathrm{Tr}\Bigl[ (\rho\oplus\eta)\, e^{(\alpha-1)\bigl(\ln(\rho\oplus\eta)-\ln(\sigma\oplus\Omega)\bigr)} \Bigr].
\end{aligned}
\end{equation}
where the factor $\tfrac{1}{2}$ arises from 
$\tilde\rho = \tfrac{1}{2}(\rho\oplus\eta)$, 
and not from any failure of normalization of $\tilde\rho$ itself.
Using the block-diagonal structure of $\rho\oplus\eta$ and 
$\sigma\oplus\Omega$, this reduces to
\begin{equation}
\Phi_{\tilde\rho,\tilde\sigma}(\alpha-1)
=
\frac{1}{2}
\left(
e^{(\alpha-1)S^Q_\alpha(\rho\|\sigma)}
+
e^{(\alpha-1)S^Q_\alpha(\eta\|\Omega)}
\right),
\end{equation}
and therefore
\begin{equation}
S^Q_\alpha(\tilde\rho\|\tilde\sigma)
=
\frac{1}{\alpha-1}
\ln\!\left[
\frac{1}{2}
\left(
e^{(\alpha-1)S^Q_\alpha(\rho\|\sigma)}
+
e^{(\alpha-1)S^Q_\alpha(\eta\|\Omega)}
\right)
\right].
\end{equation}
Equivalently, in terms of the generating function 
$g(x)=e^{(\alpha-1)x}$,
\begin{equation}
S^Q_\alpha(\tilde\rho\|\tilde\sigma)
=
g^{-1}\!\left(
\frac{1}{2}g(x)+\frac{1}{2}g(y)
\right),
\end{equation}
where $x=S^Q_\alpha(\rho\|\sigma)$ and 
$y=S^Q_\alpha(\eta\|\Omega)$.
This shows that the normalized direct sum induces a 
Kolmogorov-Nagumo mean with equal weights $\tfrac{1}{2}$ 
under the generating function $g(x)=e^{(\alpha-1)x}$, 
corresponding to a log-sum-exp aggregation in the transformed 
$g$-space.
\end{remark}

\begin{proposition}[Reduction to the Umegaki quantum relative entropy]\label{prop:classical-reduction}
Let $(\rho,\sigma) \in \mathcal{D}(\mathcal{H}) \times \mathcal{D}(\mathcal{H})$
be density operators satisfying 
$\operatorname{supp}(\rho)\subseteq\operatorname{supp}(\sigma)$.
Then the Cu-Q relative Rényi functional $S_\alpha^Q(\rho\|\sigma)$
reduces to the Umegaki quantum relative entropy in the limit $\alpha\to1^+$:
\begin{equation}
\lim_{\alpha\to1} S^Q_\alpha(\rho\|\sigma)
=
S(\rho\|\sigma) = \operatorname{Tr}\!\left[\rho(\ln\rho-\ln\sigma)\right].
\end{equation}
\end{proposition}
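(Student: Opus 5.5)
The plan is to prove the limit by writing $S^Q_\alpha(\rho\|\sigma)=\mathcal F_{\rho,\sigma}(\tau)/\tau$ with $\tau=\alpha-1\to0^+$, as in Eq.~\eqref{eq:Sq_from_K}, and then recognising this quotient as a difference quotient of $\mathcal F_{\rho,\sigma}$ at $\tau=0$. Since $\mathcal F_{\rho,\sigma}(0)=\ln\operatorname{Tr}\rho=0$, the limit equals $\mathcal F'_{\rho,\sigma}(0)$, provided $\mathcal F_{\rho,\sigma}$ is differentiable at $0$. So the proof reduces to two checks: differentiability, and the value of the derivative.

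\textbf{Step 1 (well-posedness).} Under $\operatorname{supp}(\rho)\subseteq\operatorname{supp}(\sigma)$, I will interpret $\Delta\Xi=\ln\rho-\ln\sigma$ as a bounded Hermitian operator on $\operatorname{supp}(\rho)$, following the convention stated after Eq.~\eqref{eq:CGF}. I will then write $\rho$ as $\rho$ restricted to its support, with $\operatorname{Tr}\rho=1$. Using the spectral decomposition $\Delta\Xi=\sum_j\lambda_j|v_j\rangle\langle v_j|$, we get
\begin{equation}
\Phi_{\rho,\sigma}(\tau)=\operatorname{Tr}\!\left[\rho e^{\tau\Delta\Xi}\right]=\sum_j w_j e^{\tau\lambda_j}, \qquad w_j=\langle v_j|\rho|v_j\rangle\ge0, \qquad \sum_j w_j=1 .
\end{equation}
This is a finite positive combination of exponentials, so it is smooth (indeed entire) and strictly positive, with $\Phi_{\rho,\sigma}(0)=1$. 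I will invoke this fact exactly as it was established before Eq.~\eqref{eq:Sq_from_K}.

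\textbf{Step 2 (derivative).} Because $\Phi_{\rho,\sigma}>0$ and $\ln$ is smooth on $(0,\infty)$, the function $\mathcal F_{\rho,\sigma}=\ln\Phi_{\rho,\sigma}$ is differentiable. The chain rule then gives
\begin{equation}
\mathcal F'_{\rho,\sigma}(0)=\frac{\Phi'_{\rho,\sigma}(0)}{\Phi_{\rho,\sigma}(0)}=\sum_j w_j\lambda_j=\operatorname{Tr}[\rho\,\Delta\Xi].
\end{equation}
Equivalently, one can use $\frac{d}{d\tau}e^{\tau\Delta\Xi}=\Delta\Xi\,e^{\tau\Delta\Xi}$ and interchange the trace with differentiation, which is legitimate in finite dimension. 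The result $\operatorname{Tr}[\rho\,\Delta\Xi]$ is exactly $\operatorname{Tr}[\rho(\ln\rho-\ln\sigma)]=S(\rho\|\sigma)$. As an alternative route, L'Hôpital's rule applied to the $0/0$ form $\mathcal F_{\rho,\sigma}(\tau)/\tau$ gives the same conclusion.

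\textbf{Remaining point and main obstacle.} The one subtle issue is the meaning of $\operatorname{Tr}[\rho\ln\rho]$ and $\operatorname{Tr}[\rho\ln\sigma]$ when the states are not of full rank. I will handle this by restricting everything to $\operatorname{supp}(\rho)$. There $\ln\rho$ is finite, and because the support inclusion holds, $\ln\sigma$ contributes only through its compression to $\operatorname{supp}(\rho)$, which is finite since $\sigma$ has no kernel inside $\operatorname{supp}(\rho)$. This matches the standard convention $0\ln0=0$ for the Umegaki entropy. Once this identification is stated, the limit $\alpha\to1^+$ follows directly from the argument above. The two-sided limit would also hold, since $\mathcal F_{\rho,\sigma}$ is analytic on all of $\mathbb R$.
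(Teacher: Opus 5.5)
Your proposal is correct and is essentially the paper's argument: the paper expands $e^{\tau\Delta\Xi}$ and $\ln(1+x)$ to first order to get $\ln\Phi_{\rho,\sigma}(\tau)=\tau S(\rho\|\sigma)+\mathcal O(\tau^2)$. This is the same first-order information as your observation that $\mathcal F_{\rho,\sigma}(\tau)/\tau$ is a difference quotient converging to $\mathcal F'_{\rho,\sigma}(0)=\operatorname{Tr}[\rho\,\Delta\Xi]$.

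One small wording fix: the compression of $\ln\sigma$ to $\operatorname{supp}(\rho)$ is finite because of the inclusion $\operatorname{supp}(\rho)\subseteq\operatorname{supp}(\sigma)$ itself. The phrase ``$\sigma$ has no kernel inside $\operatorname{supp}(\rho)$'' only says $\ker\sigma\cap\operatorname{supp}(\rho)=\{0\}$, which is strictly weaker and would not suffice.
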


\begin{proof}
Since $\tau=\alpha-1$, the limit
$\alpha\to1^+$ is equivalent to $\tau\to0^+$. The generating functional~\eqref{Geneq} becomes
\begin{equation}
    \Phi_{\rho,\sigma}(\tau)=\operatorname{Tr}\!\left[
    \rho\,e^{\tau(\ln\rho-\ln\sigma)}\right].
\end{equation}
Expanding the operator exponential around $\tau=0$ gives
\begin{equation}
    e^{\tau(\ln\rho-\ln\sigma)}=\mathbb I +\tau(\ln\rho-\ln\sigma)+\frac{\tau^2}{2}(\ln\rho-\ln\sigma)^2+\mathcal O(\tau^3).
\end{equation}
Substituting this expansion into the trace yields
\begin{equation}
    \Phi_{\rho,\sigma}(\tau)=\operatorname{Tr}(\rho)+
    \tau\operatorname{Tr}[\rho(\ln\rho-\ln\sigma)]+
    \mathcal O(\tau^2).
\end{equation}
Since $\operatorname{Tr}(\rho)=1$,
\begin{equation}
\Phi_{\rho,\sigma}(\tau)
=
1+\tau S(\rho\|\sigma)
+\mathcal O(\tau^2),
\end{equation}
where \(S(\rho\|\sigma)=\operatorname{Tr}\!\left[
\rho(\ln\rho-\ln\sigma)\right]\)
is the Umegaki quantum relative entropy.
With the expansion \(\ln(1+x)=x+\mathcal O(x^2)\) as \(x\to0\), we obtain
\begin{equation}
    \ln\Phi_{\rho,\sigma}(\tau)=\tau S(\rho\|\sigma)+
    \mathcal O(\tau^2).
\end{equation}
Therefore,
\begin{equation}
    S_\alpha^Q(\rho\|\sigma)=\frac{1}{\tau} \ln\Phi_{\rho,\sigma}(\tau)=S(\rho\|\sigma)+
    \mathcal O(\tau).
\end{equation}
Hence \(\lim_{\tau\to0} S_\alpha^Q(\rho\|\sigma)= S(\rho\|\sigma). \)
\end{proof}

The limit $\alpha\to1$ shows that the Cu-Q relative Rényi functional is consistent with the Umegaki quantum relative entropy. A complementary consistency check is provided by the commuting case, where quantum
expressions should reduce to their classical counterparts. We now show that the proposed functional recovers the classical Rényi divergence
whenever $\rho$ and $\sigma$ commute.

\begin{proposition}[Classical Consistency]
Let $\alpha >1$ and let
$(\rho,\sigma) \in \mathcal{D}(\mathcal{H}) \times \mathcal{D}(\mathcal{H})$
be commuting density operators, i.e., $[\rho,\sigma]=0$, satisfying
$\operatorname{supp}(\rho)\subseteq\operatorname{supp}(\sigma)$.
Then,
\begin{equation}\label{eq:classical-reduction}
S^Q_\alpha(\rho\|\sigma) =
\frac{1}{\alpha-1}\ln\operatorname{Tr}\!\left[
\rho^\alpha \sigma^{1-\alpha}
\right].
\end{equation}
\end{proposition}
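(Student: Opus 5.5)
The approach is to exploit simultaneous diagonalizability. Since $[\rho,\sigma]=0$ and both operators are Hermitian, there is an orthonormal basis $\{|k\rangle\}_{k=1}^d$ of $\mathcal H$ in which
\begin{equation}
\rho=\sum_k p_k |k\rangle\langle k|, \qquad \sigma=\sum_k q_k |k\rangle\langle k|.
\end{equation}
The support condition $\operatorname{supp}(\rho)\subseteq\operatorname{supp}(\sigma)$ says that $q_k>0$ whenever $p_k>0$. Following the paper's convention, all logarithms are taken on $\operatorname{supp}(\rho)$. Restricted to $\operatorname{supp}(\rho)=\operatorname{span}\{|k\rangle: p_k>0\}$, we get
\begin{equation}
\ln\rho-\ln\sigma=\sum_{k:\,p_k>0}(\ln p_k-\ln q_k)\,|k\rangle\langle k|.
\end{equation}
This is well defined precisely because $q_k>0$ on these indices.

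Next we apply the functional calculus. The operator $(\alpha-1)(\ln\rho-\ln\sigma)$ is diagonal in $\{|k\rangle\}$, so its exponential is computed entrywise:
\begin{equation}
e^{(\alpha-1)(\ln\rho-\ln\sigma)}=\sum_{k:\,p_k>0} p_k^{\alpha-1}q_k^{1-\alpha}\,|k\rangle\langle k|.
\end{equation}
Multiplying by $\rho$ and tracing then gives
\begin{equation}
\Phi_{\rho,\sigma}(\alpha-1)=\sum_{k:\,p_k>0}p_k^{\alpha}q_k^{1-\alpha}.
\end{equation}
The commuting hypothesis is used exactly here: it removes any Baker--Campbell--Hausdorff correction, since $e^{A-B}=e^{A}e^{-B}$ whenever $[A,B]=0$. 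One could equally obtain this from Theorem~\ref{prop:path_integral}. There the overlaps become Kronecker deltas and the phases vanish, so the Trotter limit is trivial.

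It remains to match this sum with $\operatorname{Tr}[\rho^\alpha\sigma^{1-\alpha}]$, and this is the only delicate point. Since $\alpha>1$, we have $\rho^\alpha=\sum_k p_k^\alpha|k\rangle\langle k|$ with $0^\alpha=0$. The power $\sigma^{1-\alpha}$ has a negative exponent, so it must be read as a generalized inverse power on $\operatorname{supp}(\sigma)$, as in the Petz convention. Under that reading, indices with $p_k=0$ contribute $0$ to the trace. Indices with $p_k>0$ have $q_k>0$ by the support condition, so they contribute exactly $p_k^\alpha q_k^{1-\alpha}$. Hence the two traces coincide. Finally, we take $\frac{1}{\alpha-1}\ln$ of both sides; this is legitimate because the sum is strictly positive, as $\rho\neq0$. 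This yields \eqref{eq:classical-reduction}, which equals the classical Rényi divergence of the eigenvalue distributions $\{p_k\}$ and $\{q_k\}$.

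I expect no genuine obstacle, only care with this support and convention bookkeeping for the zero eigenvalues.
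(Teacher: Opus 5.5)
Your proof is correct and follows essentially the same route as the paper: diagonalize $\rho$ and $\sigma$ in a common eigenbasis, compute $e^{(\alpha-1)(\ln\rho-\ln\sigma)}$ entrywise to get $\operatorname{Tr}[\rho\, e^{(\alpha-1)(\ln\rho-\ln\sigma)}]=\sum_k p_k^{\alpha}q_k^{1-\alpha}$, and note that $\operatorname{Tr}[\rho^\alpha\sigma^{1-\alpha}]$ gives the same diagonal sum. You restrict to indices with $p_k>0$ and read $\sigma^{1-\alpha}$ explicitly as a generalized inverse power on $\operatorname{supp}(\sigma)$, which handles the zero eigenvalues more carefully than the paper, where $\ln p(\varphi_i)$ and $q(\varphi_i)^{1-\alpha}$ appear over all indices without comment.
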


\begin{proof}
Since $[\rho,\sigma]=0$, the operators admit a common orthonormal
eigenbasis $\{|\varphi_i\rangle\}_{i=1}^d$ such that
\begin{equation}
    \rho=\sum_i
    p(\varphi_i)\,|\varphi_i\rangle\langle\varphi_i|,
    \quad\sigma=\sum_i q(\varphi_i)\,|\varphi_i\rangle\langle\varphi_i|.
\end{equation}
The support condition
$\operatorname{supp}(\rho)\subseteq\operatorname{supp}(\sigma)$ implies that $q(\varphi_i)>0$
whenever $p(\varphi_i)>0$. Applying functional calculus,
\begin{equation}
    \ln\rho
    =
    \sum_i
    \ln p(\varphi_i)\,
    |\varphi_i\rangle\langle\varphi_i|,
    \quad
    \ln\sigma
    =
    \sum_i
    \ln q(\varphi_i)\,
    |\varphi_i\rangle\langle\varphi_i|.
\end{equation}
Hence
\begin{equation}
    e^{(\alpha-1)(\ln\rho-\ln\sigma)}=\sum_i
    p(\varphi_i)^{\alpha-1} q(\varphi_i)^{1-\alpha}
    |\varphi_i\rangle\langle\varphi_i|.
\end{equation}
Therefore,
\begin{equation}
    \operatorname{Tr}\!\left[\rho\, e^{(\alpha-1)(\ln\rho-\ln\sigma)}\right]=\sum_i p(\varphi_i)^\alpha q(\varphi_i)^{1-\alpha}.
\end{equation}
Similarly,
\begin{equation}
    \rho^\alpha\sigma^{1-\alpha}=\sum_i p(\varphi_i)^\alpha q(\varphi_i)^{1-\alpha}
    |\varphi_i\rangle\langle\varphi_i|,
\end{equation}
so that
\begin{equation}
    \operatorname{Tr}\!\left[\rho^\alpha\sigma^{1-\alpha}\right]=\sum_i p(\varphi_i)^\alpha
    q(\varphi_i)^{1-\alpha}.
\end{equation}
Combining the two identities yields
\begin{equation}
    \operatorname{Tr}\!\left[\rho\,e^{(\alpha-1)(\ln\rho-\ln\sigma)}\right]= \operatorname{Tr}\!\left[\rho^\alpha\sigma^{1-\alpha}\right].
\end{equation}
Substituting this relation into
Definition~\ref{QCRRF} gives
\[
S_\alpha^Q(\rho\|\sigma)
=
\frac{1}{\alpha-1}
\ln
\operatorname{Tr}\!\left[
\rho^\alpha\sigma^{1-\alpha}
\right],
\]
which proves \eqref{eq:classical-reduction}.
\end{proof}

The preceding results establish consistency with two fundamental limiting cases: the Cu-Q relative Rényi functional reduces to the Umegaki quantum relative entropy in the limit $\alpha\to1$ and recovers the classical Rényi divergence whenever the density operators commute. Having verified these consistency requirements, we now turn to intrinsic properties of the functional itself, focusing on positivity and its dependence on the Rényi parameter $\alpha$.

\begin{theorem}[Positivity for $\alpha > 1$]
Let $\alpha > 1$ and let
$(\rho,\sigma) \in \mathcal{D}(\mathcal{H}) \times \mathcal{D}(\mathcal{H})$
be density operators satisfying
$\operatorname{supp}(\rho)\subseteq\operatorname{supp}(\sigma)$.
Then,
\begin{equation}
    S_\alpha^{Q}(\rho\|\sigma) \ge 0.
\end{equation}
\end{theorem}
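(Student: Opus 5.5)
The plan is to show that the generating functional $\Phi_{\rho,\sigma}(\alpha-1)$ from~\eqref{Geneq} satisfies $\Phi_{\rho,\sigma}(\alpha-1)\ge 1$. Since $\alpha-1>0$, dividing $\ln\Phi_{\rho,\sigma}(\alpha-1)\ge 0$ by $\alpha-1$ then gives $S^Q_\alpha(\rho\|\sigma)\ge 0$. This is the only place the sign of $\alpha-1$ enters, which matches the remark in the introduction that the argument does not extend to $0<\alpha<1$. The core of the proof is Jensen's inequality for the convex function $x\mapsto e^{x}$, applied to the spectral representation of the relative surprisal $\Delta\Xi$ already written down in the cumulant section.

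\textbf{Step 1: set up a probability vector.} Put $\tau=\alpha-1$ and work on $\operatorname{supp}(\rho)$. The support condition makes $\Delta\Xi=\ln\rho-\ln\sigma$ a well-defined Hermitian operator there. Take its spectral decomposition $\Delta\Xi=\sum_j\lambda_j|v_j\rangle\langle v_j|$, with $\{|v_j\rangle\}$ an orthonormal basis of $\operatorname{supp}(\rho)$. As noted before Eq.~\eqref{eq:Sq_from_K},
\[
\Phi_{\rho,\sigma}(\tau)=\sum_j w_j\,e^{\tau\lambda_j},\qquad w_j:=\langle v_j|\rho|v_j\rangle>0 .
\]
Because $\rho$ vanishes on $\ker\rho$ and $\{|v_j\rangle\}$ is an orthonormal basis of $\operatorname{supp}(\rho)$, we have $\sum_j w_j=\operatorname{Tr}\rho=1$. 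So $\{w_j\}$ is a probability vector.

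\textbf{Step 2: apply Jensen.} Convexity of the exponential gives
\[
\Phi_{\rho,\sigma}(\tau)\ \ge\ \exp\!\Big(\tau\sum_j w_j\lambda_j\Big)=\exp\!\big(\tau\operatorname{Tr}[\rho\,\Delta\Xi]\big)=e^{\tau S(\rho\|\sigma)},
\]
using $\sum_j w_j\lambda_j=\operatorname{Tr}[\rho\,\Delta\Xi]=C_1(\Delta\Xi)_\rho$, which is the Umegaki relative entropy.

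\textbf{Step 3: use Klein's inequality.} Klein's inequality gives $S(\rho\|\sigma)\ge 0$ for normalized states with $\operatorname{supp}(\rho)\subseteq\operatorname{supp}(\sigma)$. Hence $\Phi_{\rho,\sigma}(\tau)\ge e^{\tau S(\rho\|\sigma)}\ge 1$ for $\tau>0$. Combining with Step 2 yields the sharper bound
\[
S^Q_\alpha(\rho\|\sigma)\ \ge\ S(\rho\|\sigma)\ \ge\ 0 .
\]

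The main obstacle is the support bookkeeping rather than the inequality itself. One must make sure that compressing everything to $\operatorname{supp}(\rho)$ loses no part of the trace and keeps the weights summing to one. This holds because $\rho=P\rho P$ with $P$ the projector onto $\operatorname{supp}(\rho)$, so $\operatorname{Tr}[\rho\,e^{\tau\Delta\Xi}]$ depends only on the compression $P\,\Delta\Xi\,P$. One must also take $\ln\sigma$ as its compression to this subspace, as the functional-calculus convention in the definition prescribes.

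If the Jensen route were judged too dependent on this convention, I would use a fallback for full-rank states. The Golden–Thompson inequality gives $\operatorname{Tr}[e^{\ln\rho}e^{\tau\Delta\Xi}]\ge\operatorname{Tr}\,e^{\alpha\ln\rho+(1-\alpha)\ln\sigma}$. The Gibbs variational principle with trial state $\rho$ then bounds $\ln\operatorname{Tr}\,e^{\alpha\ln\rho+(1-\alpha)\ln\sigma}$ below by $(\alpha-1)S(\rho\|\sigma)\ge 0$. For rank-deficient $\rho$ the Jensen route is the one I would rely on, since it needs no regularization or continuity argument.
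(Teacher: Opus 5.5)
Your proof is correct and is essentially the paper's argument. The Jensen step in the eigenbasis of $\Delta\Xi$ is exactly the inequality $\operatorname{Tr}[\rho\,e^{X}]\ge e^{\operatorname{Tr}[\rho X]}$ that the paper cites as Peierls--Bogoliubov, and both proofs then apply Klein's inequality and use $\alpha-1>0$ — which you need in Step~3 as well as in the final division, not only in the division — also yielding the sharper bound $S^Q_\alpha(\rho\|\sigma)\ge S(\rho\|\sigma)$, so the Golden--Thompson fallback is unnecessary.
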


\begin{proof}
By the definition of $S^Q_\alpha(\rho\|\sigma)$,
\begin{equation}
    S_\alpha^{Q}(\rho\|\sigma)
    =\frac{1}{\alpha-1}\ln \operatorname{Tr}\!\left[
    \rho\, e^{(\alpha-1)(\ln\rho-\ln\sigma)}\right].
\end{equation}
By the Peierls-Bogoliubov inequality 
(see, e.g., \cite{Wehrl1978GeneralEntropy,bhatia1997matrix}), 
for any density operator $\rho$ and Hermitian operator $X$,
\begin{equation}
\operatorname{Tr}[\rho\, e^{X}] \ge e^{\operatorname{Tr}[\rho X]},
\end{equation}
with equality if and only if $X$ is constant on $\operatorname{supp}(\rho)$.
Applying this with $X = (\alpha-1)(\ln\rho-\ln\sigma)$ yields
\begin{equation}
\operatorname{Tr}\!\left[\rho\, e^{(\alpha-1)(\ln\rho-\ln\sigma)}\right]
\ge 
e^{(\alpha-1)\operatorname{Tr}[\rho(\ln\rho-\ln\sigma)]}.
\end{equation}
By Klein's inequality \cite{bhatia1997matrix,Carlen2010}, $\operatorname{Tr}[\rho(\ln\rho-\ln\sigma)] \ge 0$.
Since $\alpha - 1 > 0$, we have
$(\alpha-1)\operatorname{Tr}[\rho(\ln\rho-\ln\sigma)] \ge 0$,
and therefore
\begin{equation}
e^{(\alpha-1)\operatorname{Tr}[\rho(\ln\rho-\ln\sigma)]} \ge e^0 = 1.
\end{equation}
Combining the above,
\begin{equation}
\operatorname{Tr}\!\left[
\rho\, e^{(\alpha-1)(\ln\rho-\ln\sigma)}
\right] \ge 1.
\end{equation}
Since $\operatorname{Tr}[\rho\, e^{(\alpha-1)(\ln\rho-\ln\sigma)}] \ge 1$,
taking the logarithm gives
\begin{equation}
\ln\operatorname{Tr}\!\left[\rho\, e^{(\alpha-1)(\ln\rho-\ln\sigma)}\right] \ge 0.
\end{equation}
Since $\alpha - 1 > 0$, dividing by $\alpha - 1$ preserves the inequality,
and therefore
\begin{equation}
S^Q_\alpha(\rho\|\sigma) 
= \frac{1}{\alpha-1}
\ln\operatorname{Tr}\!\left[\rho\, e^{(\alpha-1)(\ln\rho-\ln\sigma)}\right]
\ge 0.
\end{equation}
\end{proof}

Having established positivity on the studied regime $\alpha>1$, we now turn to the final property of this section: monotonicity of $S_\alpha^Q(\rho\|\sigma)$ with respect to the Rényi parameter $\alpha$. This property characterizes how the distinguishability between $\rho$ and $\sigma$, as quantified by the proposed functional, varies with the parameter $\alpha$.
 
\begin{theorem}[Monotonicity in the Rényi parameter]\label{Th10}
Let $ \alpha \ge \gamma > 1,$ and let $(\rho,\sigma)\in \mathcal D(\mathcal H)\times\mathcal D(\mathcal H)$ be density operators satisfy $\operatorname{supp}(\rho)\subseteq\operatorname{supp}(\sigma).$
Then
\begin{equation}
S_\gamma^Q(\rho\|\sigma)
\le
S_\alpha^Q(\rho\|\sigma).
\end{equation}
Hence, the Cu-Q relative Rényi functional is monotonically
non-decreasing in $\alpha$ on $(1,\infty)$.
\end{theorem}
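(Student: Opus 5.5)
The plan is to reduce the claim to convexity of the cumulant-generating function $\mathcal F_{\rho,\sigma}(\tau)=\ln\operatorname{Tr}[\rho\,e^{\tau\Delta\Xi}]$ introduced in Eq.~\eqref{eq:CGF}. By Eq.~\eqref{eq:Sq_from_K} we have $S^Q_\alpha(\rho\|\sigma)=\mathcal F_{\rho,\sigma}(\tau)/\tau$ with $\tau=\alpha-1>0$. Since $\mathcal F_{\rho,\sigma}(0)=0$, this quotient is the slope of the secant of $\mathcal F_{\rho,\sigma}$ between $0$ and $\tau$. For a convex function, secant slopes from a fixed point are non-decreasing in the other endpoint. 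So it suffices to show that $\mathcal F_{\rho,\sigma}$ is convex on $\mathbb R$; then $\gamma-1\le\alpha-1$ gives $S^Q_\gamma\le S^Q_\alpha$ directly.

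For convexity, I will use the spectral representation already derived in the paper. On $\operatorname{supp}(\rho)$ (where $\Delta\Xi$ is well defined by the support hypothesis), write $\Delta\Xi=\sum_j\lambda_j|v_j\rangle\langle v_j|$. Then
\[
\operatorname{Tr}[\rho\,e^{\tau\Delta\Xi}]=\sum_j w_j e^{\tau\lambda_j},\qquad w_j:=\langle v_j|\rho|v_j\rangle>0,
\]
and $\sum_j w_j=\operatorname{Tr}\rho=1$. Hence $\mathcal F_{\rho,\sigma}$ is precisely the classical log-moment-generating function of a random variable $L$ taking value $\lambda_j$ with probability $w_j$.

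Convexity then follows in either of two ways. Hölder's inequality gives $\sum_j w_j e^{(t\tau_1+(1-t)\tau_2)\lambda_j}\le(\sum_j w_je^{\tau_1\lambda_j})^t(\sum_j w_je^{\tau_2\lambda_j})^{1-t}$. Alternatively, direct differentiation gives $\mathcal F''(\tau)=\operatorname{Var}_{w^{(\tau)}}(L)\ge0$ under the tilted weights $w^{(\tau)}_j\propto w_je^{\tau\lambda_j}$. Here I would note a key point: because $e^{\tau\Delta\Xi}$ is a function of the single operator $\Delta\Xi$, no operator-ordering issue arises. All the non-commutativity of $\rho$ and $\sigma$ is absorbed into the weights $w_j$, which are independent of $\tau$.

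The secant-slope step is then elementary. For $0<\tau_1\le\tau_2$, write $\tau_1=(\tau_1/\tau_2)\tau_2+(1-\tau_1/\tau_2)\cdot0$. Convexity and $\mathcal F(0)=0$ give $\mathcal F(\tau_1)\le(\tau_1/\tau_2)\mathcal F(\tau_2)$. Dividing by $\tau_1>0$ yields $\mathcal F(\tau_1)/\tau_1\le\mathcal F(\tau_2)/\tau_2$. Setting $\tau_1=\gamma-1$ and $\tau_2=\alpha-1$ proves the theorem.

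The only delicate step is justifying the restriction to $\operatorname{supp}(\rho)$, namely that the trace against $\rho$ sees only the compression of $e^{\tau\Delta\Xi}$ there. The paper's convention already defines the functional calculus on $\operatorname{supp}(\rho)$, so I will simply invoke it. The step I expect to need the most care is making sure the weights $w_j$ are genuinely $\tau$-independent and form a probability vector; everything else is routine.
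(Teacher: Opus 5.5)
Your proof is correct and is essentially the paper's argument: both reduce $\operatorname{Tr}[\rho\,e^{\tau\Delta\Xi}]$ to a classical moment-generating function $\sum_j w_j e^{\tau\lambda_j}$ over the spectrum of $\Delta\Xi$, with $\tau$-independent probability weights, and then invoke a convexity inequality. The paper applies Jensen to $x\mapsto x^m$ with $m=(\alpha-1)/(\gamma-1)\ge 1$ and $R_j=e^{(\gamma-1)\lambda_j}$, which gives exactly your inequality $\mathcal F(\gamma-1)\le\frac{\gamma-1}{\alpha-1}\,\mathcal F(\alpha-1)$; your convexity-plus-secant-slope step is therefore a repackaging of the same estimate, and the convexity of $\mathcal F$ on all of $\mathbb R$ that you prove also yields, via the three-chord lemma, the global monotonicity in $\alpha$ that the paper proves separately for the regularized functional by a derivative computation.
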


\begin{proof}
Starting from \eqref{QCRRF} and \eqref{Geneq}, consider the quantum relative surprisal operator
$\Delta\Xi=\ln\rho-\ln\sigma,$ which is a Hermitian operator. By the spectral theorem, there exists a unitary operator such that
\begin{equation}
\Delta\Xi = U \left(\sum_j \lambda_j |v_j\rangle\langle v_j| \right) U^\dagger .
\end{equation}
Consequently,
\begin{equation}
e^{(\alpha-1)\Delta\Xi} = U \left(\sum_j e^{(\alpha-1)\lambda_j}|v_j\rangle\langle v_j|\right)U^\dagger .
\end{equation}

Using the spectral decomposition of $\rho$ in~\eqref{rho}, substituting into the definition of $S_\alpha^Q(\rho\|\sigma)$, and applying the cyclicity of the trace, we obtain
\begin{equation}
    \operatorname{Tr}\!\left[\rho\,e^{(\alpha-1)\Delta\Xi}\right]=\sum_{i,j}p(\varphi_i)\,e^{(\alpha1)\lambda_j}\left|\langle\varphi_i|U|v_j\rangle \right|^2 .
\end{equation}
Introducing the notation $|e_j\rangle:=U|v_j\rangle$ and collecting the terms associated with the same eigenvalue $\lambda_j$, we may rewrite the above expression as
\begin{equation}
    \operatorname{Tr}\!\left[\rho\,e^{(\alpha 1)\Delta\Xi}\right]=\sum_jp'(e_j)\,e^{(\alpha-1)\lambda_j},
\end{equation}
where
\begin{equation}
p'(e_j) := \sum_i p(\varphi_i) \left| \langle e_j|\varphi_i\rangle \right|^2.
\end{equation}
Since $p(\varphi_i) \ge 0$ for all $i$ and 
$\sum_i p(\varphi_i) = 1$, the weights $p'(e_j)$ are non-negative.
To verify normalization, we use the completeness relation
$\sum_j |e_j\rangle\langle e_j| = \mathbb{I}$ to obtain
\begin{equation}
    \sum_j p'(e_j) = \sum_{i,j} p(\varphi_i)
    \left|\langle e_j|\varphi_i\rangle\right|^2 = 1.
\end{equation}
Hence $\{p'(e_j)\}$ is a probability distribution.Therefore,
\begin{equation}
    S_\alpha^Q(\rho\|\sigma)=\frac{1}{\alpha-1}\ln\sum_j p'(e_j)\,e^{(\alpha-1)\lambda_j}.
\end{equation}

To compare
$S_\alpha^Q(\rho\|\sigma)$ and $S_\gamma^Q(\rho\|\sigma)$,
define
\begin{equation}
m := \frac{\alpha-1}{\gamma-1}.
\end{equation}
Because $\alpha\ge\gamma>1$, we have $\alpha-1\ge\gamma-1>0.$ Dividing by the positive quantity $\gamma-1$ gives
\begin{equation}
m=\frac{\alpha-1}{\gamma-1}\ge1.
\end{equation}
Writing $\alpha-1=m(\gamma-1),$ we obtain
\begin{equation}
    e^{(\alpha-1)\lambda_j}=e^{m(\gamma-1)\lambda_j}=
    \left(e^{(\gamma-1)\lambda_j}\right)^m.
\end{equation}
Define
\begin{equation}
    R_j := e^{(\gamma-1)\lambda_j}>0.
\end{equation}
Then
\begin{equation}
\sum_j p'(e_j)\, e^{(\alpha-1)\lambda_j} = \sum_j p'(e_j)\, R_j^{\,m}=\mathbb E[R^m],
\end{equation}
where the expectation is taken with respect to the probability distribution
$\{p'(e_j)\}$.

Since $\phi(x)=x^m$ satisfies
\begin{equation}
    \phi''(x)=m(m-1)x^{m-2}\ge0,
\end{equation}
it is convex on $(0,\infty)$. Taking logarithms and using $m=\frac{\alpha-1}{\gamma-1},$ we obtain
\begin{equation}
\ln\sum_jp'(e_j)e^{(\alpha-1)\lambda_j}\ge \frac{\alpha-1}{\gamma-1}\ln\sum_jp'(e_j)e^{(\gamma-1)\lambda_j}.
\end{equation}

Since $\alpha>1$, division by $\alpha-1$ preserves the inequality and yields
\begin{equation}
\frac{1}{\alpha-1}\ln\sum_jp'(e_j)e^{(\alpha-1)\lambda_j}\ge\frac{1}{\gamma-1}\ln\sum_jp'(e_j)
e^{(\gamma-1)\lambda_j}.
\end{equation}
Therefore,
\begin{equation}
S_\alpha^Q(\rho\|\sigma)
\ge S_\gamma^Q(\rho\|\sigma),
\end{equation}
which proves the claim.
\end{proof}

The preceding results were established for the non-regularized Cu-Q relative Rényi functional on its studied regime $\alpha>1$. We now consider the regularized formulation \eqref{regularized_Cu-Q}, which extends the definition to all $\alpha\in\mathbb R\setminus\{1\}$.

The regularization leaves the fundamental structure of the functional unchanged. Consequently, the structural properties established for the non-regularized formulation continue to hold in the regularized setting, while the enlarged parameter domain makes it possible to investigate additional features beyond the studied regime.

In particular, we establish a global monotonicity property with respect to the Rényi parameter, examine the behaviour at $\alpha=0$, and develop an interpretation of the functional in terms of relative
quantumness measure.

\section{Extended parameter analysis and relative quantumness} \label{Relative Quantumness}

We first investigate the dependence of the regularized Cu-Q relative Rényi functional on the Rényi parameter. The following result shows that the functional remains monotonically non-decreasing throughout its extended parameter domain and therefore admits no turning point

\begin{corollary}[Global monotonicity of the regularized Cu-Q relative Rényi functional]
Let $(\rho,\sigma)\in\mathcal D(\mathcal H) \times \mathcal D(\mathcal H)$ and let $ \rho_\varepsilon,\sigma_\varepsilon $ be the regularized states defined in Eq.~\eqref{state_regularized}. Then
\begin{equation}
    \frac{d}{d\alpha} S_\alpha^Q(\rho_\varepsilon\|\sigma_\varepsilon)\ge 0
\end{equation}
for every $\alpha\in\mathbb R\setminus\{1\}.$ Consequently, the regularized Cu-Q relative Rényi functional is
monotonically non-decreasing with respect to the Rényi parameter throughout its entire parameter domain. In particular, the function admits no turning point.
\end{corollary}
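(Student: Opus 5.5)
The plan is to reduce the statement to the classical fact that the normalized cumulant-generating function $\tau\mapsto \mathcal F(\tau)/\tau$ is non-decreasing in $\tau$.

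First I rewrite the trace exactly as in the proof of Theorem~\ref{Th10}. Since $\rho_\varepsilon,\sigma_\varepsilon\in\mathcal D_+(\mathcal H)$, the operator $\Delta\Xi_\varepsilon:=\ln\rho_\varepsilon-\ln\sigma_\varepsilon$ is a Hermitian operator on all of $\mathcal H$. Diagonalize it as $\Delta\Xi_\varepsilon=\sum_j\lambda_j|e_j\rangle\langle e_j|$. Then, with $\tau=\alpha-1$,
\begin{equation*}
\mathcal F_\varepsilon(\tau):=\ln\operatorname{Tr}\!\left[\rho_\varepsilon e^{\tau\Delta\Xi_\varepsilon}\right]=\ln\sum_j p'_j e^{\tau\lambda_j},
\qquad p'_j=\langle e_j|\rho_\varepsilon|e_j\rangle>0 .
\end{equation*}
The weights $p'_j$ form a probability distribution, and this holds for every real $\tau$. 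Moreover, $\mathcal F_\varepsilon$ is real analytic with $\mathcal F_\varepsilon(0)=0$, and the regularized functional is $S^Q_\alpha(\rho_\varepsilon\|\sigma_\varepsilon)=\mathcal F_\varepsilon(\tau)/\tau$ for $\tau\neq0$.

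Next I establish convexity of $\mathcal F_\varepsilon$. Its second derivative is the variance of $\lambda$ under the tilted distribution $p'_je^{\tau\lambda_j}/\sum_k p'_ke^{\tau\lambda_k}$, which is $\ge0$; equivalently, convexity follows from Hölder's inequality applied to the log-sum-exp.

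Then I differentiate:
\begin{equation*}
\frac{d}{d\tau}\frac{\mathcal F_\varepsilon(\tau)}{\tau}=\frac{\tau\mathcal F_\varepsilon'(\tau)-\mathcal F_\varepsilon(\tau)}{\tau^2}.
\end{equation*}
Convexity together with $\mathcal F_\varepsilon(0)=0$ gives the tangent-line inequality $0=\mathcal F_\varepsilon(0)\ge \mathcal F_\varepsilon(\tau)+\mathcal F_\varepsilon'(\tau)(0-\tau)$. Hence $\tau\mathcal F'_\varepsilon(\tau)-\mathcal F_\varepsilon(\tau)\ge0$ for every real $\tau$, of either sign. Since $d\tau/d\alpha=1$, this yields $\frac{d}{d\alpha}S^Q_\alpha(\rho_\varepsilon\|\sigma_\varepsilon)\ge0$ for all $\alpha\neq1$.

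For the claim that the function has no turning point, I would add three observations. The singularity at $\tau=0$ is removable, with value $S(\rho_\varepsilon\|\sigma_\varepsilon)$ by Proposition~\ref{prop:classical-reduction}. The function is non-decreasing on each of the two half-lines. The slope-of-secant characterization $\mathcal F(\tau)/\tau=(\mathcal F(\tau)-\mathcal F(0))/(\tau-0)$ makes monotonicity across $\tau=0$ immediate, because secant slopes from a fixed point of a convex function increase.

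The main subtlety is the negative-$\tau$ branch. There the division by $\tau$ flips signs, and the argument of Theorem~\ref{Th10} (Jensen with $m\ge1$) does not carry over directly, which is why I use the tangent-line/secant-slope formulation, which is sign-agnostic. One must also note that full rank of $\rho_\varepsilon$ is what guarantees finiteness of $e^{\tau\lambda_j}$ for all $\tau$ and strict positivity of all $p'_j$. The derivative vanishes identically only when $\lambda$ is constant on the support of $p'$. In that case the function is constant, which is still consistent with "non-decreasing".
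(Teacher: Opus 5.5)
Your argument is correct, and its skeleton matches the paper's. You use the same spectral reduction of $\operatorname{Tr}[\rho_\varepsilon e^{(\alpha-1)\Delta\Xi_\varepsilon}]$ to a classical exponential sum $\kappa(\alpha)=\sum_j p'_\varepsilon(e_j)e^{(\alpha-1)\tilde\lambda_j}$. You also arrive at the same quotient-rule numerator $(\alpha-1)\mu(\alpha)-\ln\kappa(\alpha)$, which is your $\tau\mathcal F_\varepsilon'(\tau)-\mathcal F_\varepsilon(\tau)$.

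You differ in how the sign of this numerator is obtained. The paper introduces the tilted distribution $q_j(\alpha)=p'_\varepsilon(e_j)e^{(\alpha-1)\tilde\lambda_j}/\kappa(\alpha)$, identifies the numerator exactly as the relative entropy $S(q(\alpha)\|p'_\varepsilon)$, and invokes the Gibbs inequality. You instead prove convexity of $\mathcal F_\varepsilon$ and use the tangent line at $\tau$ evaluated at $0$. These are two views of one fact: the numerator is the Bregman divergence of the log-partition function $\mathcal F_\varepsilon$ between $0$ and $\tau$, which equals the relative entropy between the corresponding tilted distributions.

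The paper's version yields the closed formula $\frac{d}{d\alpha}S^Q_\alpha(\rho_\varepsilon\|\sigma_\varepsilon)=S(q(\alpha)\|p'_\varepsilon)/(\alpha-1)^2$. With it comes the sharp equality case: the derivative vanishes at some $\alpha\neq1$ only if all $\tilde\lambda_j$ coincide. The cost is that it needs $p'_\varepsilon(e_j)>0$ so that the logarithms make sense.

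Your version does not need strict positivity of the weights. More usefully, your secant-slope observation $S^Q_\alpha(\rho_\varepsilon\|\sigma_\varepsilon)=(\mathcal F_\varepsilon(\tau)-\mathcal F_\varepsilon(0))/(\tau-0)$ covers something the paper's proof leaves unaddressed. A nonnegative derivative on the disconnected set $\mathbb R\setminus\{1\}$ only gives monotonicity on each half-line. The three-chord property of convex functions also gives monotonicity across $\alpha=1$, with removable value $S(\rho_\varepsilon\|\sigma_\varepsilon)$. In fact it proves the whole monotonicity claim without differentiating at all.
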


\begin{proof}
The argument follows the same steps as in the proof of the monotonicity theorem for the non-regularized functional. Let 
\begin{equation}
    \Delta\Xi_\varepsilon =\ln\rho_\varepsilon -\ln\sigma_\varepsilon
\end{equation}
be the regularized quantum relative surprisal operator, and let
\begin{equation}
\Delta\Xi_\varepsilon=U\left(\sum_j\tilde{\lambda}_j|\tilde v_j\rangle\langle\tilde v_j|\right)U^\dagger
\end{equation}
be its spectral decomposition. 
Define $|e_j\rangle := U|\tilde{v}_j\rangle.$ We then substitute the spectral decompositions of $\rho_\varepsilon$
and $e^{(\alpha-1)\Delta\Xi_\varepsilon}$ into the trace expression and use the cyclicity of the trace to obtain
\begin{equation}
\operatorname{Tr}
\!\left[
\rho_\varepsilon
e^{(\alpha-1)\Delta\Xi_\varepsilon}
\right]
=
\sum_j
p'_\varepsilon(e_j)
e^{(\alpha-1)\tilde{\lambda}_j},
\end{equation}
where
\begin{equation}
p'_\varepsilon(e_j)
:=
\sum_i
p_\varepsilon(\varphi_i)
\left|
\langle e_j|\varphi_i\rangle
\right|^2 .
\end{equation}
Substituting this expression into the definition of
$S_\alpha^Q(\rho_\varepsilon\|\sigma_\varepsilon)$
yields
\begin{equation}
S_\alpha^Q(\rho_\varepsilon\|\sigma_\varepsilon)
=
\frac{\ln\kappa(\alpha)}
{\alpha-1},
\end{equation}
where
\begin{equation}
\kappa(\alpha) :=\sum_j p'_\varepsilon(e_j)e^{(\alpha-1)\tilde{\lambda}_j}.
\end{equation}
Because $\rho_\varepsilon\ge\varepsilon \mathbb{\mathbb{I}}/d,$ we have $ p'_\varepsilon(e_j) = \langle e_j|\rho_\varepsilon|e_j\rangle\ge\varepsilon/d
>0.$ Moreover, $\sum_j p'_\varepsilon(e_j)  = 1,$ and hence $\{p'_\varepsilon(e_j)\}$
is a strictly positive probability distribution.

Define the exponential tilting of $\{p_\varepsilon '(e_j)\}$,
\begin{equation}
q_j(\alpha) := \frac{p_\varepsilon '(e_j)e^{(\alpha -1)\tilde{\lambda}_j}}{\kappa(\alpha)}.
\end{equation}
and the mean of $\{\tilde{\lambda}_j\}$ under $q_j(\alpha)$,
\begin{equation}
\mu(\alpha) := \sum_j q_j(\alpha)\tilde{\lambda}_j.
\end{equation}
Since $p_\varepsilon '(e_j)>0$ and $ e^{(\alpha -1)\tilde{\lambda}_j}>0,$ we have $q_j(\alpha)>0.$
Moreover,
\begin{equation}
\sum_j q_j(\alpha)=\frac{\sum_jp_\varepsilon '(e_j)
e^{(\alpha-1)\tilde{\lambda}_j}}{\kappa(\alpha)}=1.
\end{equation}
Hence $\{q_j(\alpha)\}$ is also a strictly positive probability distribution.

Differentiating $S_\alpha^Q(\rho_\varepsilon\|\sigma_\varepsilon)$ via the quotient rule and using
\begin{equation}
\frac{\kappa'(\alpha)}{\kappa(\alpha)} 
= \frac{\sum_j p_\varepsilon '(e_j)\,\tilde{\lambda}_j\, e^{(\alpha-1)\tilde{\lambda}_j}}{\kappa(\alpha)} 
= \sum_j q_j(\alpha)\tilde{\lambda}_j = \mu(\alpha),
\end{equation}
we obtain
\begin{equation}
\frac{d}{d\alpha}S_\alpha^Q(\rho_\varepsilon\|\sigma_\varepsilon)
= \frac{(\alpha-1)\mu(\alpha) - \ln\kappa(\alpha)}{(\alpha-1)^2}.
\end{equation}

Since $(\alpha-1)^2 > 0$, it suffices to show that the numerator is non-negative. Taking the logarithm of the definition of $q_j(\alpha)$,
\begin{equation}
\ln q_j(\alpha) = \ln p_\varepsilon '(e_j) + (\alpha-1)\tilde{\lambda}_j - \ln\kappa(\alpha),
\end{equation}
and rearranging,
\begin{equation}
(\alpha-1)\tilde{\lambda}_j - \ln\kappa(\alpha) = \ln q_j(\alpha) - \ln p_\varepsilon '(e_j).
\end{equation}
Multiplying both sides by $q_j(\alpha)$ and summing over $j$,
\begin{equation}
\begin{aligned}
    &(\alpha-1)\underbrace{\sum_j q_j(\alpha)\tilde{\lambda}_j}_{=\,\mu(\alpha)} - \ln\kappa(\alpha)\underbrace{\sum_j q_j(\alpha)}_{=\,1} \\
    &= \sum_j q_j(\alpha)\bigl[\ln q_j(\alpha) - \ln p_\varepsilon '(e_j)\bigr] \\
    &= \sum_j q_j(\alpha)\ln\frac{q_j(\alpha)}{p_\varepsilon '(e_j)}.
\end{aligned}
\end{equation}
The right-hand side is precisely the relative entropy $S(q(\alpha)\|p_\varepsilon')$, which is non-negative by the Gibbs inequality, 
with equality if and only if $q_j(\alpha) = p_\varepsilon '(e_j)$ for all $j$.

Therefore,
\begin{equation}
\frac{d}{d\alpha}S^Q_\alpha(\rho_\varepsilon\|\sigma_\varepsilon) 
= \frac{S(q(\alpha)\|p_\varepsilon')}{(\alpha-1)^2} \ge 0,
\end{equation}
confirming that $S^Q_\alpha(\rho_\varepsilon\|\sigma_\varepsilon)$ is monotonically non-decreasing 
in $\alpha$ on $\alpha\in\mathbb R\setminus\{1\}$.
Consequently, the regularized Cu-Q relative Rényi functional admits no turning point.
\end{proof}

We now turn to the special case $\alpha=0$. The following theorem establishes that the regularized order-zero Cu-Q relative Rényi functional is always non-positive.

\begin{theorem}[$\alpha=0$ regularized Cu-Q relative Rényi functional]\label{Th12}
Let $(\rho,\sigma)\in\mathcal D(\mathcal H) \times \mathcal D(\mathcal H)$ and let $\rho_\varepsilon,\sigma_\varepsilon$
be the regularized states defined in Eq.\eqref{state_regularized}.
Then
\begin{equation}
    S_0^Q(\rho_\varepsilon\|\sigma_\varepsilon)
    \le 0.
\end{equation}
\end{theorem}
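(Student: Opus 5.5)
At $\alpha=0$ the definition \eqref{regularized_Cu-Q} gives
\[
S_0^Q(\rho_\varepsilon\|\sigma_\varepsilon)=-\ln\operatorname{Tr}\!\left[\rho_\varepsilon e^{-(\ln\rho_\varepsilon-\ln\sigma_\varepsilon)}\right]=-\ln\operatorname{Tr}\!\left[\rho_\varepsilon e^{\ln\sigma_\varepsilon-\ln\rho_\varepsilon}\right].
\]
So the claim is equivalent to
\[
\operatorname{Tr}\!\left[\rho_\varepsilon e^{\ln\sigma_\varepsilon-\ln\rho_\varepsilon}\right]\ge 1 .
\]
Both states are full rank, so every logarithm is defined on all of $\mathcal H$ and no support issues arise. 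The plan is to reuse the two-step argument from the positivity theorem for $\alpha>1$, but applied with the sign-reversed exponent. The sign of $\alpha-1$ will then be absorbed into the relative entropy rather than into the final division.

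\textbf{Step 1 (Peierls--Bogoliubov).} Take $X=\ln\sigma_\varepsilon-\ln\rho_\varepsilon$, which is Hermitian. The Peierls--Bogoliubov inequality $\operatorname{Tr}[\rho e^{X}]\ge e^{\operatorname{Tr}[\rho X]}$, as used in the positivity proof, gives
\[
\operatorname{Tr}\!\left[\rho_\varepsilon e^{\ln\sigma_\varepsilon-\ln\rho_\varepsilon}\right]\ge \exp\!\bigl(-\operatorname{Tr}[\rho_\varepsilon(\ln\rho_\varepsilon-\ln\sigma_\varepsilon)]\bigr)=e^{-S(\rho_\varepsilon\|\sigma_\varepsilon)}.
\]
This lower bound is at most $1$, so it points the wrong way for our purpose. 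Step 1 alone is therefore not enough.

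\textbf{Step 2 (the main obstacle).} A different route is needed, and I would try the spectral representation from the proof of Theorem~\ref{Th10}. Write
\[
\operatorname{Tr}[\rho_\varepsilon e^{-\Delta\Xi_\varepsilon}]=\sum_j p'_\varepsilon(e_j)\,e^{-\tilde\lambda_j},
\]
where $p'_\varepsilon$ is a probability distribution. Jensen's inequality then gives
\[
\sum_j p'_\varepsilon(e_j)\,e^{-\tilde\lambda_j}\ge \exp\!\Bigl(-\sum_j p'_\varepsilon(e_j)\tilde\lambda_j\Bigr)=e^{-S(\rho_\varepsilon\|\sigma_\varepsilon)}.
\]
This is the same bound as Step 1, so it also fails.

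The natural object is instead the Golden--Thompson inequality,
\[
\operatorname{Tr}\, e^{A+B}\le \operatorname{Tr}\, e^{A}e^{B}.
\]
Applied with $A=\ln\sigma_\varepsilon$ and $B=0$, it is trivial. What we would really want is a comparison between $\operatorname{Tr}[\rho_\varepsilon e^{\ln\sigma_\varepsilon-\ln\rho_\varepsilon}]$ and $\operatorname{Tr}\, e^{\ln\sigma_\varepsilon}=1$. In the commuting case these two quantities are equal. In general they are not, and I would test the sign on explicit qubit examples before committing to a general argument.

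\textbf{Step 3 (fallback via monotonicity).} Failing a direct inequality, use the global monotonicity corollary. Since $S_\alpha^Q(\rho_\varepsilon\|\sigma_\varepsilon)$ is non-decreasing on $(-\infty,1)$ and on $(1,\infty)$, we have
\[
S_0^Q\le \lim_{\alpha\to1^-}S_\alpha^Q(\rho_\varepsilon\|\sigma_\varepsilon)=S(\rho_\varepsilon\|\sigma_\varepsilon).
\]
This limit follows from the same expansion as in Proposition~\ref{prop:classical-reduction}, which is valid for $\tau\to0^-$ because $\rho_\varepsilon$ and $\sigma_\varepsilon$ are full rank. However, $S(\rho_\varepsilon\|\sigma_\varepsilon)\ge0$, so this bound also does not give $\le0$.

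The step I expect to be decisive is therefore producing a genuine upper bound on $S_0^Q$ by $0$, or equivalently the inequality $\operatorname{Tr}[\rho_\varepsilon e^{\ln\sigma_\varepsilon-\ln\rho_\varepsilon}]\ge 1$. I would attempt it by combining the $\alpha=0$ spectral sum with a tilted-distribution argument, with Klein's inequality applied to the pair $(\sigma_\varepsilon,\rho_\varepsilon)$ after symmetrizing the roles of the two states.
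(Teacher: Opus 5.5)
Your proposal does not prove the statement. Steps 1, 2 and 3 all deliver the same estimate, $S_0^Q(\rho_\varepsilon\|\sigma_\varepsilon)\le S(\rho_\varepsilon\|\sigma_\varepsilon)$. As you yourself note, this is useless here because the right-hand side is non-negative, and your closing sentence is a plan rather than an argument. The missing idea is the Golden--Thompson comparison you say you ``would really want''; you only need to split the exponent correctly. Because $\rho_\varepsilon$ is full rank, $\rho_\varepsilon=e^{\ln\rho_\varepsilon}$. Take the Hermitian operators $A=\ln\rho_\varepsilon$ and $B=\ln\sigma_\varepsilon-\ln\rho_\varepsilon$. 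Then $A+B=\ln\sigma_\varepsilon$ and $e^{A}e^{B}=\rho_\varepsilon e^{\ln\sigma_\varepsilon-\ln\rho_\varepsilon}$, so
\[
1=\operatorname{Tr}\sigma_\varepsilon=\operatorname{Tr}e^{A+B}\le\operatorname{Tr}\bigl(e^{A}e^{B}\bigr)=\operatorname{Tr}\!\left[\rho_\varepsilon e^{\ln\sigma_\varepsilon-\ln\rho_\varepsilon}\right].
\]
Applying the decreasing map $x\mapsto-\ln x$ gives $S_0^Q(\rho_\varepsilon\|\sigma_\varepsilon)\le 0$; this is the paper's entire proof. Your choice $A=\ln\sigma_\varepsilon$, $B=0$ leaves $\rho_\varepsilon$ outside the exponentials, which is why it collapsed into a tautology.

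Steps 1 and 2 were bound to fail, not merely unlucky. When $[\rho_\varepsilon,\sigma_\varepsilon]=0$, the target trace equals $\operatorname{Tr}[\rho_\varepsilon\sigma_\varepsilon\rho_\varepsilon^{-1}]=1$ exactly. The Peierls--Bogoliubov bound, which is Jensen with the weights $p'_\varepsilon(e_j)$, sees only the first moment and gives $e^{-S(\rho_\varepsilon\|\sigma_\varepsilon)}<1$ whenever $\rho_\varepsilon\neq\sigma_\varepsilon$. It is therefore lossy already in the classical case, whereas any successful argument must be tight there, as Golden--Thompson is.

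Your final suggestion can be made to work, but not with Klein's inequality alone. A tilted-distribution (Donsker--Varadhan) bound on $\sum_j p'_\varepsilon(e_j)e^{-\tilde\lambda_j}$ with trial weights $s_j=\langle e_j|\sigma_\varepsilon|e_j\rangle$ gives
\[
\ln\operatorname{Tr}\!\left[\rho_\varepsilon e^{-\Delta\Xi_\varepsilon}\right]\ge S(\sigma_\varepsilon\|\rho_\varepsilon)-S\bigl(\mathcal P(\sigma_\varepsilon)\big\|\mathcal P(\rho_\varepsilon)\bigr),
\]
where $\mathcal P$ is the pinching onto the eigenbasis $\{|e_j\rangle\}$ of $\Delta\Xi_\varepsilon$. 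Klein's inequality controls only the first term. Making the right-hand side non-negative requires the data-processing inequality for the Umegaki relative entropy under $\mathcal P$. That is a heavier ingredient than Golden--Thompson, and your sketch does not supply it.
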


\begin{proof}
Setting $\alpha=0$ in Eq.\eqref{regularized_Cu-Q}
gives
\begin{equation}
    S_0^Q(\rho_\varepsilon\|\sigma_\varepsilon)=-\ln \operatorname{Tr} \!\left[\rho_\varepsilon e^{\ln\sigma_\varepsilon-\ln\rho_\varepsilon}\right].
\end{equation}
Since
$\rho_\varepsilon=e^{\ln\rho_\varepsilon}$, we may write
\begin{equation}
    \operatorname{Tr}\!\left[\rho_\varepsilon e^{\ln\sigma_\varepsilon-\ln\rho_\varepsilon}
    \right] =\operatorname{Tr}\!\left[e^{\ln\rho_\varepsilon}
    e^{\ln\sigma_\varepsilon-\ln\rho_\varepsilon}\right].
\end{equation}
Applying the Golden-Thompson inequality,
\begin{equation}
    \operatorname{Tr}(e^{A+B})\le \operatorname{Tr}(e^A e^B),
\end{equation}
with $A=\ln\rho_\varepsilon,\quad B=\ln\sigma_\varepsilon-\ln\rho_\varepsilon$,
yields
\begin{equation}
    \operatorname{Tr} \!\left[e^{\ln\sigma_\varepsilon}
    \right] \le\operatorname{Tr}\!\left[
        e^{\ln\rho_\varepsilon}
        e^{\ln\sigma_\varepsilon-\ln\rho_\varepsilon}
    \right].
\end{equation}

Using $e^{\ln\sigma_\varepsilon} = \sigma_\varepsilon$
together with $\operatorname{Tr}(\sigma_\varepsilon)=1,$
we obtain
\begin{equation}
    \operatorname{Tr}\!\left[\rho_\varepsilon
        e^{\ln\sigma_\varepsilon-\ln\rho_\varepsilon}
    \right]\ge 1.
\end{equation}
Because the function $-\ln x$ is decreasing on $(0,\infty)$,
\begin{equation}
    S_0^Q(\rho_\varepsilon\|\sigma_\varepsilon)=-\ln
    \operatorname{Tr}\!\left[\rho_\varepsilon
        e^{\ln\sigma_\varepsilon-\ln\rho_\varepsilon}
    \right] \le-\ln 1=0.
\end{equation}
This completes the proof.
\end{proof}

The previous results characterize the behavior of the regularized Cu-Q relative Rényi functional as a function of the Rényi parameter. We now investigate how the functional reflects the non-commutativity between the underlying quantum states.

To this end, we introduce a notion of relative quantumness and establish a direct connection between this quantity and the regularized Cu-Q relative Rényi functional.

\begin{theorem}[Regularized relative quantumness]\label{quantumness}

Let $(\rho,\sigma)\in\mathcal{D}(\mathcal{H})\times\mathcal{D}(\mathcal{H})$,
and let $\rho_\varepsilon,\sigma_\varepsilon$ be the regularized states. Define
\begin{equation}
  Q(\rho_\varepsilon\|\sigma_\varepsilon) := -S_0^Q(\rho_\varepsilon\|\sigma_\varepsilon).
\end{equation}
Then
\begin{equation}
  Q(\rho_\varepsilon\|\sigma_\varepsilon)
  \begin{cases}
    = 0, & \text{iff } [\rho,\sigma]=0,\\[6pt]
    > 0, & \text{iff } [\rho,\sigma]\neq 0.
  \end{cases}
\end{equation}
\end{theorem}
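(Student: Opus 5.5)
The plan is to reduce both equivalences to the equality case of the Golden--Thompson inequality, building on the proof of Theorem~\ref{Th12}. Set $A=\ln\rho_\varepsilon$ and $B=\ln\sigma_\varepsilon-\ln\rho_\varepsilon$; both are Hermitian and well defined on all of $\mathcal H$ because $\rho_\varepsilon,\sigma_\varepsilon\in\mathcal D_+(\mathcal H)$. By definition,
\begin{equation*}
Q(\rho_\varepsilon\|\sigma_\varepsilon)=\ln\operatorname{Tr}\!\left[e^{A}e^{B}\right],
\qquad
\operatorname{Tr}\!\left[e^{A+B}\right]=\operatorname{Tr}\sigma_\varepsilon=1 .
\end{equation*}
Theorem~\ref{Th12} already gives $Q\ge 0$. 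It therefore suffices to prove the single equivalence
\begin{equation*}
Q(\rho_\varepsilon\|\sigma_\varepsilon)=0
\iff
\operatorname{Tr}\!\left[e^{A+B}\right]=\operatorname{Tr}\!\left[e^{A}e^{B}\right]
\iff
[\rho,\sigma]=0 .
\end{equation*}
The strict case $Q>0 \iff [\rho,\sigma]\neq0$ then follows by negation, since $Q\ge0$.

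\textbf{Step 1: Golden--Thompson equality case.} I will invoke the known characterization that, for Hermitian $A,B$, one has $\operatorname{Tr}e^{A+B}=\operatorname{Tr}e^{A}e^{B}$ if and only if $[A,B]=0$ (So; Petz).

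\textbf{Step 2: translate to the regularized states.} Since $[A,A]=0$, we have $[A,B]=[\ln\rho_\varepsilon,\ln\sigma_\varepsilon]$. Each of $\rho_\varepsilon$ and $\ln\rho_\varepsilon$ is a function of the other via $\exp$ and $\ln$ (functional calculus on a strictly positive spectrum), and likewise for $\sigma_\varepsilon$. Hence $[\ln\rho_\varepsilon,\ln\sigma_\varepsilon]=0$ if and only if $[\rho_\varepsilon,\sigma_\varepsilon]=0$.

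\textbf{Step 3: remove the regularization.} Expanding Eq.~\eqref{state_regularized} and using that $\mathbb I/d$ commutes with everything gives
\begin{equation*}
[\rho_\varepsilon,\sigma_\varepsilon]=(1-\varepsilon)^2[\rho,\sigma].
\end{equation*}
Since $\varepsilon\in(0,1)$, the prefactor is nonzero, which closes the chain of equivalences.

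The trivial direction can also be checked directly. If $[\rho,\sigma]=0$, then $A$ and $B$ commute, so $e^Ae^B=e^{A+B}=\sigma_\varepsilon$ and $Q=\ln 1=0$.

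The main obstacle is the converse equality condition in Step 1, which is the only non-elementary ingredient. If a self-contained argument is preferred over citation, I would use the Araki--Lieb--Thirring monotonicity. The function
\begin{equation*}
p\longmapsto \operatorname{Tr}\!\left[\left(e^{pA/2}e^{pB}e^{pA/2}\right)^{1/p}\right]
\end{equation*}
is non-decreasing on $(0,\infty)$. Its limit as $p\to0^+$ is $\operatorname{Tr}e^{A+B}$ by the Lie--Trotter formula already used in Theorem~\ref{prop:path_integral}, and its value at $p=1$ is $\operatorname{Tr}e^Ae^B$. Equality of these endpoints would force the function to be constant on $(0,1]$. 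One would then have to show, for instance from a second-order expansion in $p$ near $0$, that this constancy forces $[A,B]=0$. This strictness argument is the delicate part; I expect to cite the published equality result rather than reprove it.
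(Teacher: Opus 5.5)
Your proposal is correct and follows essentially the same route as the paper: the identity $[\rho_\varepsilon,\sigma_\varepsilon]=(1-\varepsilon)^2[\rho,\sigma]$ and the equivalence $[\ln\rho_\varepsilon,\ln\sigma_\varepsilon]=0\iff[\rho_\varepsilon,\sigma_\varepsilon]=0$, a direct computation giving $Q=0$ in the commuting case, and the equality characterization of Golden--Thompson (So; Petz) with $A=\ln\rho_\varepsilon$, $B=\ln\sigma_\varepsilon-\ln\rho_\varepsilon$ to get strict inequality otherwise. Using Theorem~\ref{Th12} to reduce the two-case statement to the single equivalence $Q=0\iff[\rho,\sigma]=0$ is a slightly tidier packaging of the same argument, and you rightly treat the Araki--Lieb--Thirring sketch as optional rather than relying on it.
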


\begin{proof}
Since $\rho_\varepsilon=(1-\varepsilon)\rho+ \varepsilon\frac{\mathbb{I}}{d},\quad\sigma_\varepsilon=(1-\varepsilon)\sigma+\varepsilon\frac{\mathbb{I}}{d},$
and the identity operator commutes with every operator, we have
\begin{equation}
[\rho_\varepsilon,\sigma_\varepsilon]=(1-\varepsilon)^2 [\rho,\sigma].
\label{eq:comm-equivalence}
\end{equation}
Hence $[\rho_\varepsilon,\sigma_\varepsilon]=0$ if and only if $[\rho,\sigma]=0$.

From Theorem \ref{Th12},
\begin{equation}
S_0^Q(\rho_\varepsilon\|\sigma_\varepsilon)=-\ln \operatorname{Tr}\!\left[\rho_\varepsilon
e^{\ln\sigma_\varepsilon-\ln\rho_\varepsilon}\right].
\end{equation}

\vspace{0.3 cm}
\noindent
\textit{(i) Commuting case.}
Assume that $[\rho_\varepsilon,\sigma_\varepsilon]=0.$ Then
$[\ln\rho_\varepsilon,\ln\sigma_\varepsilon]=0,$
because commutativity is preserved under analytic functional calculus. Thus
\begin{equation}
e^{\ln\sigma_\varepsilon-\ln\rho_\varepsilon} = \sigma_\varepsilon\rho_\varepsilon^{-1}.
\end{equation}

Using the cyclicity property of the trace, we obtain
\begin{equation}
\operatorname{Tr}\!\left[\rho_\varepsilon\sigma_\varepsilon
\rho_\varepsilon^{-1}\right]=\operatorname{Tr}(\sigma_\varepsilon)=1.
\end{equation}
It follows that $S_0^Q(\rho_\varepsilon\|\sigma_\varepsilon)
=-\ln1=0,$ and therefore $Q(\rho_\varepsilon\|\sigma_\varepsilon)=0.$

\vspace{0.3 cm}
\noindent
\textit{(ii) Non-commuting case.}
Assume that $[\rho_\varepsilon,\sigma_\varepsilon]\neq0.$ Then $[\ln\rho_\varepsilon,\ln\sigma_\varepsilon]\neq0,$
since otherwise $\rho_\varepsilon$ and $\sigma_\varepsilon$
would be simultaneously diagonalizable. 

By the Golden-Thompson inequality, with equality if and only if $[A,B]=0$ \cite{Petz1994,So1992}, and since $[A,B]\neq0,$
we obtain
\begin{equation}
\operatorname{Tr}(e^{A+B})
<\operatorname{Tr}(e^Ae^B).
\end{equation}
Substituting $e^A=\rho_\varepsilon,$ and $e^{A+B}=\sigma_\varepsilon, $ yields
\begin{equation}
1=\operatorname{Tr}(\sigma_\varepsilon)
<\operatorname{Tr}\!\left[\rho_\varepsilon
e^{\ln\sigma_\varepsilon-\ln\rho_\varepsilon}
\right].
\end{equation}
Taking the negative logarithm gives $S_0^Q(\rho_\varepsilon\|\sigma_\varepsilon)<0.$ Therefore $Q(\rho_\varepsilon\|\sigma_\varepsilon)=-S_0^Q(\rho_\varepsilon\|\sigma_\varepsilon)
>0.$
\end{proof}

The above theorem establishes that the negative regularized relative Rényi functional at $\alpha=0$, denoted by $Q(\rho_\varepsilon\|\sigma_\varepsilon) =-S_0^Q(\rho_\varepsilon\|\sigma_\varepsilon)$, provides a complete characterization of commutativity between quantum states. Specifically, $-S_0^Q(\rho_\varepsilon\|\sigma_\varepsilon) = 0$ if and only if the states $\rho$ and $\sigma$ commute.  Conversely, $-S_0^Q(\rho_\varepsilon\|\sigma_\varepsilon) > 0$ if and only if the states do not commute.

In this sense, $Q(\rho_\varepsilon\|\sigma_\varepsilon)$ may be interpreted as a measure of relative quantumness between $\rho_\varepsilon$ and $\sigma_\varepsilon$, quantifying the degree to which the two states fail to commute. Unlike basis-dependent notions of quantumness, such as coherence measures, or correlation-based quantities such as quantum discord, the present functional is defined directly from the ordered pair $(\rho_\varepsilon,\sigma_\varepsilon)$
and depends solely on their mutual non-commutativity.

\section{The QDPI Conjecture and Numerical Investigation on Commutativity-Preserving Channels}\label{CoP-QDPI}

\subsection{CoP Channels and the QDPI Conjecture}
The relative quantumness $Q(\rho_\varepsilon\|\sigma_\varepsilon)$
naturally raises the question of whether it obeys a data-processing principle under physically relevant quantum evolutions.
In this section, we focus on a particular class of CPTP maps, namely commutativity-preserving (CoP) channels, which map
commuting input states to commuting output states. Such channels were introduced by Yu et al. \cite{Yu2011} and further studied by Guo et al. \cite{Guo_2013}. Related structural properties of noisy quantum channels and their effects on quantum correlations were investigated in Ref.~\cite{Streltsov2011}. We begin by recalling the definition of a CoP channel and then formulate a conjectured QDPI governing the evolution of relative quantumness under such channels.

\begin{definition}[Commutativity-preserving channel]\label{def:comm-preserving}
Let $\mathcal{H}$ be a finite-dimensional Hilbert space. A CPTP map $\mathcal{N}:\mathcal{B}(\mathcal{H})\to\mathcal{B}(\mathcal{H})$ is called a CoP channel if, for every
pair $(\rho,\sigma)\in\mathcal{D}(\mathcal{H})\times
\mathcal{D}(\mathcal{H})$ satisfying $[\rho,\sigma]=0$, one has $[\mathcal {N}(\rho),\mathcal {N}(\sigma)]=0$.
\end{definition}

Motivated by the interpretation of
$Q(\rho_\varepsilon\|\sigma_\varepsilon)$
as a measure of relative quantumness, we now investigate its behavior under CoP channels. The central question is whether this quantity is monotone under such channels, which leads to the following conjectured quantum data-processing inequality.

Nevertheless, although the regularization procedure guarantees that $\rho_\varepsilon$ and $\sigma_\varepsilon$ are strictly positive; their images under a quantum channel need not remain full-rank. Consequently, the logarithmic operators appearing in $Q\left(\mathcal N(\rho_\varepsilon)\middle\|\mathcal N(\sigma_\varepsilon)\right)$ may not be globally defined on the entire Hilbert space. To ensure that the functional remains well-defined, we impose the support condition $\operatorname{supp}\!\bigl(\mathcal N(\rho_\varepsilon)\bigr)\subseteq\operatorname{supp}\!\bigl(\mathcal N(\sigma_\varepsilon)
\bigr).$ Under this assumption, the regularized relative quantumness remains well-defined even when the channel outputs are rank-deficient. This leads naturally to the following conjectured QDPI-type monotonicity relation.

\begin{conjecture}[QDPI at $\alpha=0$ for non-commuting inputs under CoP channels]
\label{conj:QDPI-alpha0}
Let $\rho,\sigma)\in \mathcal D(\mathcal H)\times\mathcal D(\mathcal H)$
satisfy $[\rho,\sigma]\neq0,$ where $\rho_\varepsilon,\sigma_\varepsilon$
are the regularized states defined in Eq.\eqref{state_regularized}.
Suppose that $\mathcal N $ is a CoP channel satisfying $\operatorname{supp}\!\bigl(\mathcal N(\rho_\varepsilon)\bigr) \subseteq\operatorname{supp}\!\bigl(\mathcal N(\sigma_\varepsilon)\bigr). $
Then,
\begin{equation}
    Q(\rho_\varepsilon\|\sigma_\varepsilon)
    \geq
    Q\left(
        \mathcal N(\rho_\varepsilon)\|\,
        \mathcal N(\sigma_\varepsilon)
    \right) \geq 0.
\end{equation}
\end{conjecture}

The present study examines several representative quantum channels, including depolarizing, dephasing, bit-flip, phase-flip, semi-classical, and transpose-type isotropic channels. These channels provide a diverse testing ground for investigating the conjectured QDPI behavior of the regularized relative quantumness under CoP evolutions.

For qubit systems, CoP channels admit a complete characterization: every CoP channel is either unital or semi-classical. Within this classification, the depolarizing, dephasing, and Pauli-type flip channels belong to the unital class, while semi-classical channels constitute the second class. Consequently, all of these channels are CoP in dimension two.

In higher dimensions ($\dim\mathcal H\ge 3$), the structure of CoP channels becomes considerably more restrictive. As shown in Refs.~\cite{Yu2011,Guo_2013}, every CoP channel belongs to one of only two classes: (i) completely decohering channels and (ii) nontrivial isotropic channels. Among the channels considered here, the semi-classical channel represents the completely decohering class, whereas the transpose-type isotropic channel belongs to the nontrivial isotropic class. The depolarizing channel may be viewed as a special instance of an isotropic channel. By contrast, a generic dephasing channel does not belong to either class except in the fully decohering limit and therefore is generally not CoP for $\dim\mathcal H\ge 3$. Similarly, higher-dimensional analogues of bit-flip and phase-flip channels are not CoP unless they reduce to one of the two classes above.

\subsection{QDPI under CoP channels: numerical illustration}
\label{sec:QDPI-numerical}

Since the conjectured CoP-QDPI has not yet been established analytically, we now present numerical evidence in support of its validity. The channel classification discussed above serves as a guide for selecting representative channels from the various CoP classes in both qubit and qutrit settings.

To investigate the conjectured QDPI numerically, we performed Monte Carlo simulations by generating $5\times10^4$ independent random pairs of density operators
$ (\rho,\sigma)\in\mathcal D(\mathcal H)\times \mathcal D(\mathcal H).$ For each channel-parameter regime, the pairs were sampled from the Ginibre ensemble and normalized to unit trace, subject to the constraint $[\rho,\sigma]\neq0.$ The corresponding regularized states $(\rho_\varepsilon,\sigma_\varepsilon)$, and all numerical evaluations were performed using these regularized states. Although Ginibre sampling produces full-rank states almost surely, the regularization procedure was retained throughout in order to remain fully consistent with the theoretical framework developed in this work.

The numerical tests were carried out for both qubit $\dim\mathcal H = 2$ and qutrit $ \dim\mathcal H = 3 $ systems. For parameterized noise channels, the channel parameter satisfies $p\in[0,1]$ by definition. In the numerical simulations, we considered the representative values $p \in \{0.0, 0.3, 0.5, 0.7, 0.9\}$. For clarity of presentation, only selected parameter values are displayed in the figures reported in the present work.

To distinguish genuine violations from numerical round-off effects, a data point was regarded as a violation only if
\begin{equation}
Q\left( \mathcal N(\rho_\varepsilon) \;\middle\|\; \mathcal N(\sigma_\varepsilon) \right) - Q(\rho_\varepsilon \| \sigma_\varepsilon) > \delta,
\end{equation}
where $\delta$ denotes a prescribed numerical tolerance. A violation threshold of $\delta = 10^{-10}$ was adopted for qubit systems and $\delta = 10^{-8}$ for qutrit systems. An exception was made for the qutrit dephasing channel. Since this channel lies outside the CoP class in general dimensions, a more stringent numerical test was performed by setting $\delta = 10^{-10}$ and increasing the sample size to $10^5$ random pairs in order to further suppress possible numerical artifacts.

In the scatter plots presented below, the horizontal axis represents the input quantity $Q(\rho_\varepsilon \| \sigma_\varepsilon)$, whereas the vertical axis represents the corresponding output quantity $Q\left( \mathcal N(\rho_\varepsilon) \;\middle\|\; \mathcal N(\sigma_\varepsilon) \right)$. The dashed red line corresponds to the equality relation
$
Q(\rho_\varepsilon \| \sigma_\varepsilon) = Q\left( \mathcal N(\rho_\varepsilon) \;\middle\|\; \mathcal N(\sigma_\varepsilon) \right).
$
For parameterized channels, the case $p=0$ reduces to the identity channel. Consequently, the relative quantumness remains unchanged under the channel action, and the sampled points accumulate near the equality line.

For nonzero noise parameters, the conjectured CoP-QDPI predicts $Q(\rho_\varepsilon \| \sigma_\varepsilon) \ge Q\left( \mathcal N(\rho_\varepsilon) \;\middle\|\; \mathcal N(\sigma_\varepsilon) \right)$.

Graphically, this implies that all admissible points should lie on or below the dashed equality line. Points located strictly below the line indicate a reduction of relative quantumness under the action of the channel, whereas points above the line would constitute numerical violations of the conjectured monotonicity relation.

\begin{center}
\textbf{Numerical results for qubit and qutrit CoP channels}
\end{center}

We begin with the qubit bit-flip channel, a representative example of a unital CoP map. Figure \ref{f1} shows the corresponding scatter plot for the representative parameter value $p=0.7$. All sampled points are located on or below the equality line, consistent with the conjectured inequality. The inset provides a closer view of the near-equality region and confirms the absence of any detectable violations. Within numerical precision, the bit-flip channel therefore exhibits behavior fully consistent with the proposed CoP-QDPI.
\begin{figure}[htbp!]
    \centering
    \includegraphics[width=0.8\linewidth]{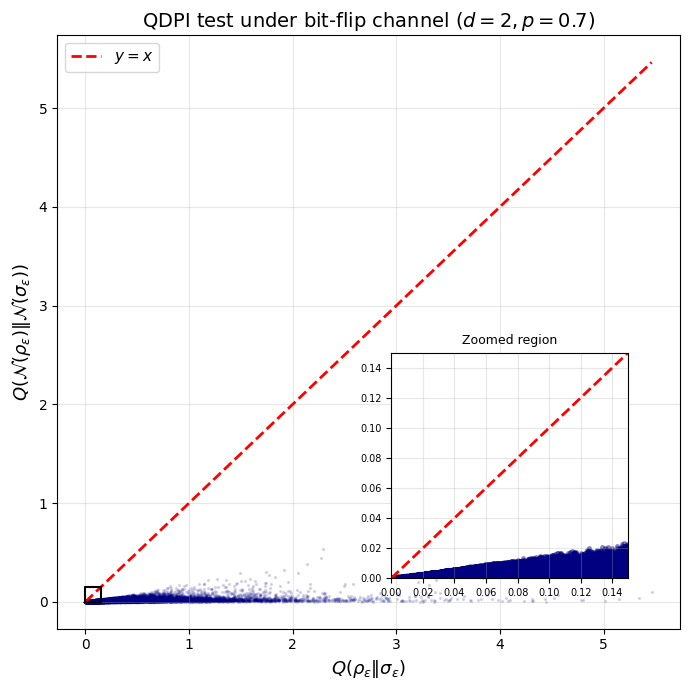}
    \caption{\small Scatter plot of $Q(\rho_\varepsilon\|\sigma_\varepsilon)$ versus $Q\!\left( \mathcal N(\rho_\varepsilon)\,\|\,\mathcal N(\sigma_\varepsilon) \right)$ for the qubit bit-flip channel with $p=0.7$. The dashed red line denotes the equality relation $Q(\rho_\varepsilon\|\sigma_\varepsilon)=Q\!\left(\mathcal N(\rho_\varepsilon)\,\|\,\mathcal N(\sigma_\varepsilon)\right)$. The inset shows a magnified view near the equality region.}
    \label{f1}
\end{figure}

We next consider the qutrit bit-flip channel. Unlike the qubit case, higher-dimensional Pauli-type flip channels are not, in general, commutativity-preserving and therefore lie outside the CoP framework discussed above. Figure \ref{f2} shows that a non-negligible fraction of the sampled points lies above the reference line, indicating violations of the monotonicity relation. This behavior is consistent with the fact that the qutrit bit-flip channel is not a CoP channel and therefore is not expected to satisfy the conjectured CoP-QDPI.

\begin{figure}[!htbp]
    \centering
    \includegraphics[width=0.8\linewidth]{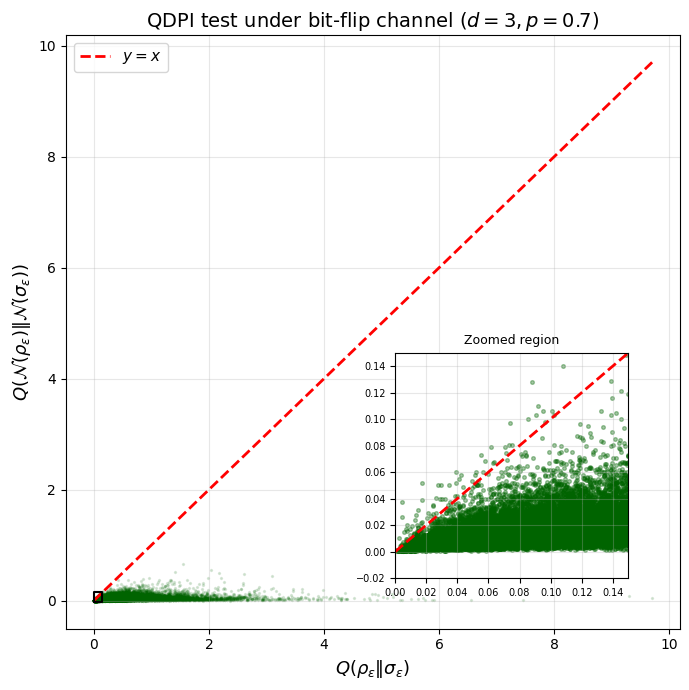}
    \caption{\small Scatter plot of $Q(\rho_\varepsilon\|\sigma_\varepsilon)$ versus $Q\!\left(\mathcal N(\rho_\varepsilon)\,\|\,\mathcal N(\sigma_\varepsilon)\right)$ for the qutrit generalized bit-flip channel with $p=0.7$. The dashed red line denotes the equality relation $Q(\rho_\varepsilon\|\sigma_\varepsilon)=Q\!\left(\mathcal N(\rho_\varepsilon)\,\|\,\mathcal N(\sigma_\varepsilon)\right)$. Points appearing above the reference line correspond to violations of the conjectured QDPI.}
    \label{f2}
\end{figure}

In contrast to the qutrit bit-flip channel, the depolarizing channel belongs to the CoP class in both qubit and qutrit dimensions. Consequently, it provides a natural test case for the conjectured CoP-QDPI beyond the qubit setting. As shown in Figure \ref{f3} and Figure \ref{f5} (Appendix \ref{App_B}), no sampled points are observed above the reference line in either dimension. All numerical data therefore remain consistent with the conjectured monotonicity relation.

\begin{figure}[!htbp]
    \centering
    \includegraphics[width=0.8\linewidth]{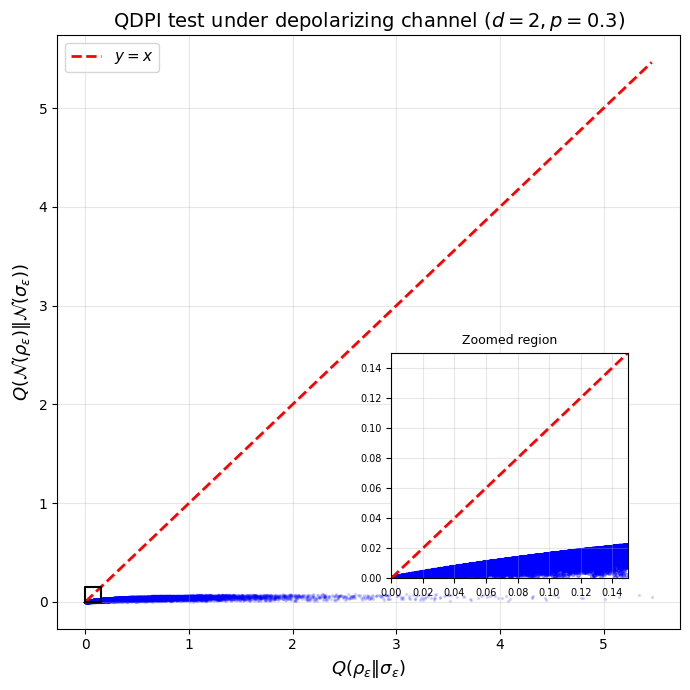}
    \caption{\small Scatter plot of $Q(\rho_\varepsilon\|\sigma_\varepsilon)$ versus $Q\!\left( \mathcal N(\rho_\varepsilon)\,\|\,\mathcal N(\sigma_\varepsilon) \right)$ for the qubit depolarizing channel with $p=0.3$. The dashed red line represents the equality relation. All sampled points remain below the reference line, supporting the conjectured QDPI.}
    \label{f3}
\end{figure}

The numerical evidence for the dephasing channel aligns with the QDPI conjecture in the qubit case. Extensive simulations were conducted across the entire range of noise parameters $p$ defined above for this experiment, and no violations were observed for any sampled pairs. For the sake of brevity, we present the representative results for $p=0.5$ in Figure \ref{f6} (Appendix \ref{App_B}), where no violations of the conjectured monotonicity relation are observed.

A different behavior is observed for the qutrit dephasing channel. While the majority of sampled pairs continue to satisfy the conjectured QDPI, occasional violations appear in the intermediate noise regime, particularly for $p=0.3,\,0.5,$ and $0.7.$

This observation is consistent with the known structure of CoP channels in higher dimensions. Unlike its qubit counterpart, the qutrit dephasing channel does not generally belong to the CoP class. Although dephasing suppresses off-diagonal coherence and tends to drive states toward a more classical structure, the non-uniform attenuation of different coherence sectors can still induce nontrivial changes in the relative noncommutativity structure between the two states after the channel action. The appearance of violations therefore provides additional evidence that the commutativity-preserving property is essential for the validity of the conjectured monotonicity relation. A representative example is shown in Figure \ref{f4}.

\begin{figure}[!htbp]
    \clearpage
    \centering
    \includegraphics[width=0.8\linewidth]{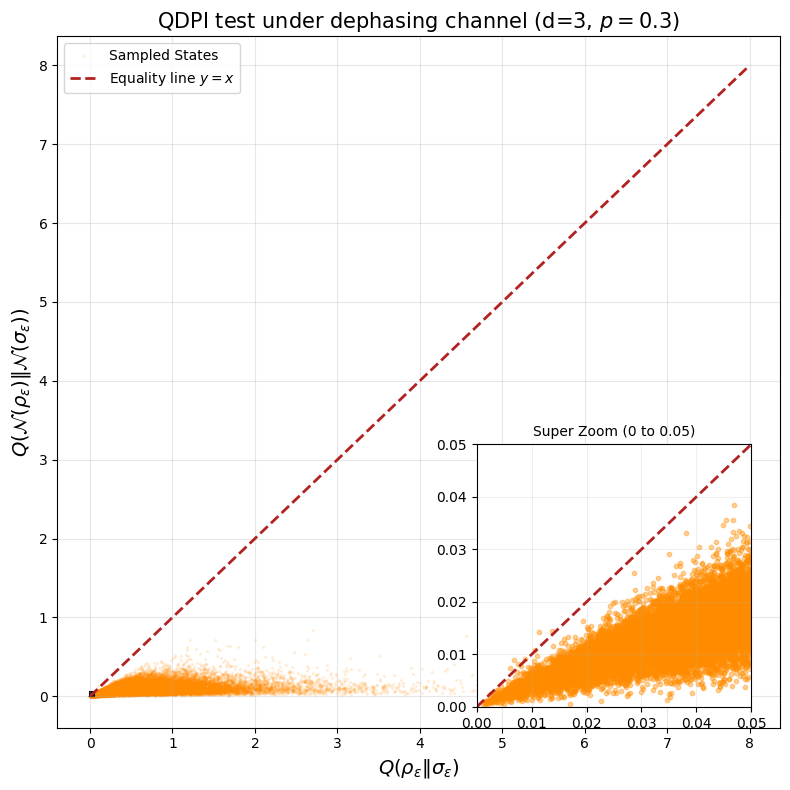}
    \caption{\small Scatter plot of $Q(\rho_\varepsilon\|\sigma_\varepsilon)$ versus $Q\!\left( \mathcal N(\rho_\varepsilon)\,\|\,\mathcal N(\sigma_\varepsilon) \right)$ for the qutrit dephasing channel with $p=0.3$. The dashed red line represents the equality relation. While the majority of sampled points remain below the reference line, a small number of violations are observed in the qutrit case. The inset shows a magnified view of the near-equality region.}
    \label{f4}
\end{figure}

We next examine the phase-flip channel. In the qubit setting, this channel belongs to the CoP class, and no violations of the conjectured QDPI are observed across all tested values of $p$. The numerical results therefore remain fully consistent with the conjectured monotonicity of the regularized relative quantumness. A representative scatter plot for $p=0.7$ is shown in Figure \ref{f7} (Appendix \ref{App_B}).

Similarly to the bit-flip channel, higher-dimensional phase-flip channels are not generally CoP. Consistent with this structural distinction, the numerical results show that the regularized relative quantumness is not monotonically decreasing under these channels. Except for the trivial case $p=0$, violations of the proposed monotonicity relation are observed throughout the tested parameter regimes. A representative qutrit example is presented in Figure \ref{f8} (Appendix \ref{App_B}).

Two additional classes of CoP channels considered in this work are the semi-classical channels and transpose-type isotropic channels. Unlike higher-dimensional bit-flip and phase-flip channels, these maps belong explicitly to the CoP classifications in both qubit and qutrit dimensions. They therefore provide particularly relevant test cases for the conjectured CoP-QDPI.

For both qubit and qutrit systems, the semi-classical channel exhibits a pronounced suppression of the relative quantumness. As shown in Figure \ref{f9} and Figure \ref{f10} (Appendix \ref{App_B}), the output values collapse close to the horizontal axis, reflecting the strongly decohering nature of the channel. No violations were observed within numerical precision.

The transpose-type isotropic channel exhibits the same qualitative behavior in both dimensions. Detailed results are shown in Figure \ref{f11} and Figure \ref{f12} (Appendix \ref{App_B}). Although the output values display a broader distribution than in the semi-classical case, all sampled points remain below the reference line. These results provide further numerical support for the conjectured monotonicity of the regularized relative quantumness under CoP channels. Additional simulations across different admissible values of the parameter $t$ within the CPTP region produced qualitatively similar behavior, with no observed violations of the conjectured inequality.

Taken together, the numerical results reveal a clear pattern. Channels belonging to the CoP class consistently satisfy the proposed monotonicity relation within numerical precision, whereas violations arise for channels that fall outside the CoP framework. This provides strong numerical evidence supporting the conjectured CoP-QDPI and highlights the central role of commutativity preservation in governing the behavior of the regularized relative quantumness.

\subsection{Preliminary numerical evidence from prior work} 

The numerical investigation summarized here is distinct from the numerical illustration for CoP channels in Section \ref{CoP-QDPI}. A preliminary study of QDPI for general CPTP maps and arbitrary $\alpha\in(0,\infty)\setminus\{1\}$, a regime in which an analytic proof remains open, was reported in\cite{MeunsonDeesuwan2025CQRRF}. The study covered four representative channels: the generalized noisy identity channel, the depolarizing channel, the dephasing channel, and the amplitude damping channel. Across $10^4$ randomly generated qubit-state pairs per run over ten independent runs, no violations of QDPI were observed within $\alpha\in(0.9,1.3)$ for all tested channels and noise strengths.

\section*{Conclusion}

We introduced the Cu-Q relative Rényi functional, a cumulant-based quantum divergence derived from the quantum relative surprisal and motivated by the cumulant-generating function framework. The proposed construction extends the classical connection between Rényi divergence and statistical cumulants to the quantum setting and admits a path-integral-like representation.

On its natural domain, we established several fundamental properties expected of a Rényi-type divergence, including positivity, reduction to the classical case, additivity, unitary invariance, continuity, and monotonicity with respect to the Rényi parameter. Through a regularized formulation at $\alpha=0$, we further showed that the associated relative quantumness quantity vanishes if and only if the underlying states commute, thereby providing a complete characterization of non-commutativity.

The validity of the quantum data-processing inequality remains open. Nevertheless, numerical investigations reveal a clear distinction between commutativity-preserving and non-commutativity-preserving channels. In particular, no violations were observed for the CoP channels considered in this work, providing strong evidence for the conjectured CoP-QDPI.

Future work will focus on establishing rigorous QDPI results, extending the numerical analysis to broader classes of CPTP maps, and exploring potential applications in quantum thermodynamics and related areas.
\section*{Acknowledgements} \label{sec:acknowledgements}
    This work was supported by the National Science, Research and Innovation Fund (NSRF) through the Program Management Unit for Human Resources and Institutional Development, Research and Innovation under grant number B39G680007.


\appendix 

\section{Derivation of the Cu-Q Relative Rényi Functional}
\label{app:derivation_cuq}

In this appendix, we show how the proposed Cu-Q relative Rényi functional arises naturally from the cumulant-generating-function (CGF) formalism. The derivation begins with the classical theory of statistical fluctuations and culminates in a non-commutative extension based on the quantum relative surprisal operator.

\subsection{Rényi entropy and cumulant-generating functions}

A fundamental tool in probability theory and statistics is the cumulant-generating function (CGF). For a real-valued random variable $Y$, the CGF is defined as
\begin{equation}
\mathcal K_Y(\zeta):=\ln\mathbb E\left[ e^{\zeta Y}\right],\quad\zeta\in\mathbb R.
\end{equation}
Its importance stems from the fact that the Taylor expansion around $\zeta=0$ generates the cumulants of $Y$:
\begin{equation}
\mathcal K_Y(\zeta)=\sum_{n=1}^{\infty} \frac{\zeta^n}{n!}
C_n(Y),
\end{equation}
where
\begin{equation}
C_n(Y)=\left.\frac{d^n}{d\zeta^n}\mathcal K_Y(\zeta)\right|_{\zeta=0}.
\end{equation}
The first cumulant is the mean, the second is the variance, while higher-order cumulants characterize increasingly refined statistical fluctuations such as skewness and kurtosis.

In information theory, the fundamental random quantity associated with an event $X=x$ is the surprisal $\Lambda(X)=-\ln p(X)$, which quantifies the information gained upon observing the outcome $X$. Since surprisal itself is a random variable, it is natural to study its statistical fluctuations through the CGF. Taking $Y=\Lambda(X)$ gives
\begin{equation}
\mathcal K_{\Lambda(X)}(\zeta)=\ln\mathbb E\left[e^{\zeta\Lambda(X)}\right].
\end{equation}
Substituting the definition of the surprisal yields
\begin{align}
\mathcal K_{\Lambda(X)}(\zeta)
&=
\ln\sum_x p(x) e^{-\zeta\ln p(x)}\nonumber\\
&=
\ln\sum_xp(x)^{1-\zeta}.
\end{align}
Evaluating the CGF at $\zeta = 1-\alpha$ gives
\begin{equation}
\mathcal K_{\Lambda(X)}(1-\alpha)=\ln\sum_x
p(x)^\alpha.
\end{equation}
Consequently,
\begin{equation}
S_\alpha(X)=
\frac{1}{1-\alpha}
\mathcal K_{\Lambda(X)}(1-\alpha),
\end{equation}
which is precisely the classical Rényi entropy.

This observation provides a natural statistical interpretation of Rényi entropy. Rather than being introduced as an independent information measure, Rényi entropy emerges as a normalized CGF of the surprisal random variable, evaluated at $\zeta=1-\alpha$. Consequently, it depends not only on the mean surprisal but also on the higher-order cumulants that characterize statistical fluctuations. This viewpoint serves as the conceptual starting point for the construction of the Cu-Q relative Rényi
functional developed in this work.

\subsection{Classical Rényi divergence and relative surprisal}

The observation that Rényi entropy can be expressed as a normalized cumulant-generating function of the surprisal suggests that a similar statistical interpretation should also exist in the relative setting. In particular, since Rényi entropy arises from the CGF of $\Lambda(X)=-\ln p(X)$, it is natural to ask whether the classical Rényi divergence can likewise be represented through the CGF of an appropriate relative surprisal quantity.

To investigate this question, we consider two probability distributions $p$ and $q$ and the relative surprisal $\Delta\Lambda(x)=\ln p(x)-\ln q(x)$. Starting from the classical Rényi divergence,
\begin{equation}
S_\alpha(p\|q) =\frac{1}{\alpha-1}\ln\sum_x p(x)^\alpha
q(x)^{1-\alpha},
\end{equation}
we rewrite the quantity inside the logarithm as
\begin{align}
\sum_x p(x)^\alpha q(x)^{1-\alpha}
&=
\sum_x p(x)\left(\frac{p(x)}{q(x)} \right)^{\alpha-1}
\nonumber\\
&=
\sum_xp(x)e^{(\alpha-1)\Delta\Lambda(x)}.
\end{align}
Recognizing the last expression as an expectation value with respect to
$p$, we obtain
\begin{equation}
\sum_x
p(x)^\alpha
q(x)^{1-\alpha}
=
\mathbb E_p
\!\left[
e^{(\alpha-1)\Delta\Lambda(X)}
\right].
\end{equation}
Hence, we can define the CGF of the relative surprisal by
\begin{equation}
\mathcal K_{\Delta\Lambda(X)}(\tau)
:=
\ln
\mathbb E_p
\!\left[
e^{(\tau)\Delta\Lambda(X)}
\right].
\end{equation}
Consequently,
\begin{equation}
S_\alpha(p\|q)
=
\frac{1}{\alpha-1}
\mathcal K_{\Delta\Lambda(X)}
(\alpha-1),
\end{equation}
showing that the classical Rényi divergence is precisely a normalized CGF of the relative surprisal evaluated at $\tau=\alpha-1$.

\subsection{Quantum extension}

Since the classical Rényi divergence can be viewed as a normalized cumulant-generating function of the relative surprisal, it is natural to seek a quantum analogue based on the quantum relative surprisal
operator $\Delta\Xi= \ln\rho-\ln\sigma$. We therefore introduce the associated non-commutative
cumulant-generating functional
\begin{equation}
\mathcal F_{\rho,\sigma}(\tau)
:=
\ln
\operatorname{Tr}
\!\left[
\rho
e^{\tau\Delta\Xi}
\right].
\end{equation}

Following the same normalized cumulant-generating-function principle as in the classical case, we define the Cu-Q relative Rényi functional by
\begin{equation}
S_\alpha^Q(\rho\|\sigma)
=
\frac{1}{\alpha-1}
\mathcal F_{\rho,\sigma}(\alpha-1).
\end{equation}

Equivalently,
\begin{equation}
S_\alpha^Q(\rho\|\sigma)
=
\frac{1}{\alpha-1}
\ln
\operatorname{Tr}
\!\left[
\rho
e^{(\alpha-1)(\ln\rho-\ln\sigma)}
\right].
\end{equation}

Therefore, the proposed Cu-Q relative Rényi functional arises naturally from the cumulant-generating function of the quantum relative surprisal, extending the classical Rényi-cumulant correspondence to
the non-commutative setting. This provides the fundamental motivation for the construction and suggests that the resulting quantity encodes
not only the mean quantum relative surprisal but also its higher-order statistical fluctuations through a hierarchy of non-commutative
cumulants.

\onecolumngrid 
\section{Additional Figures}
\label{App_B}

This appendix collects additional scatter plots for the channels and dimensions discussed in Section~\ref{sec:QDPI-numerical}.

\begin{figure}[htbp]
    \centering
    \begin{minipage}{0.48\textwidth}
        \centering
        \includegraphics[width=\linewidth]{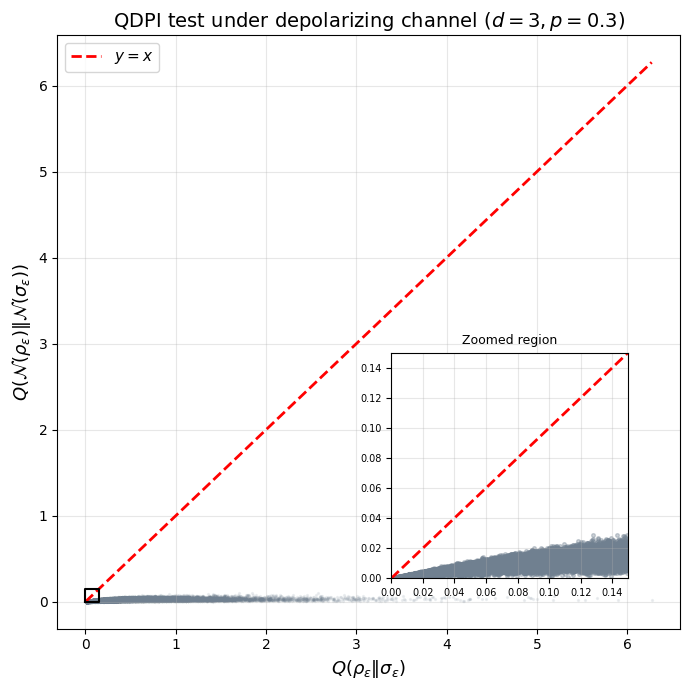}
        \caption{Qutrit depolarizing channel ($d=3$, $p=0.3$).}
        \label{f5}
    \end{minipage}
    \hfill
    \begin{minipage}{0.48\textwidth}
        \centering
        \includegraphics[width=\linewidth, trim=0 5 0 0, clip]{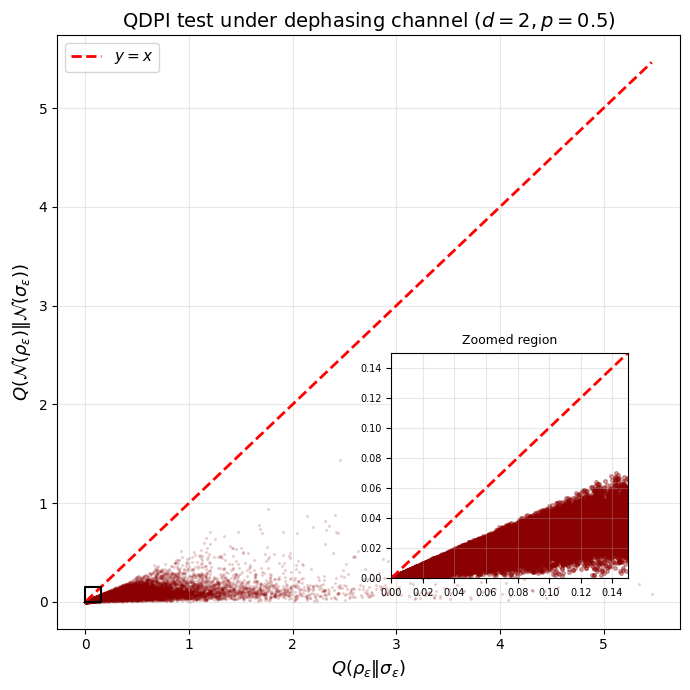}
        \caption{Qubit dephasing channel ($d=2$, $p=0.5$).}
        \label{f6}
    \end{minipage}

    \vspace{1cm}

    \begin{minipage}{0.48\textwidth}
        \centering
        \includegraphics[width=\linewidth]{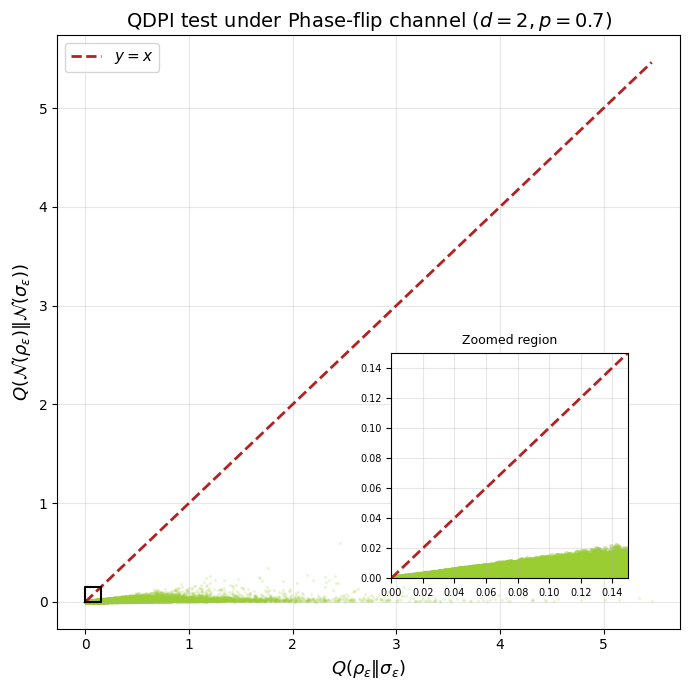}
        \caption{Qubit phase-flip channel ($d=2$, $p=0.7$).}
        \label{f7}
    \end{minipage}
    \hfill
    \begin{minipage}{0.48\textwidth}
        \centering
        \includegraphics[width=\linewidth, trim=0 5 0 0, clip]{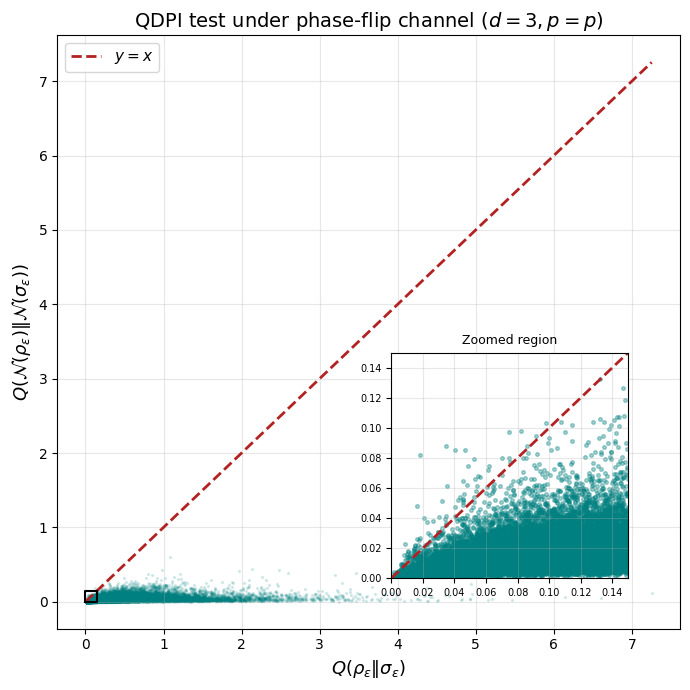}
        \caption{Qutrit phase-flip channel ($d=3$, $p=0.7$).}
        \label{f8}
    \end{minipage}
\end{figure}

\clearpage

\begin{figure}[htbp]
    \centering
    \begin{minipage}{0.48\textwidth}
        \centering
        \includegraphics[width=\linewidth]{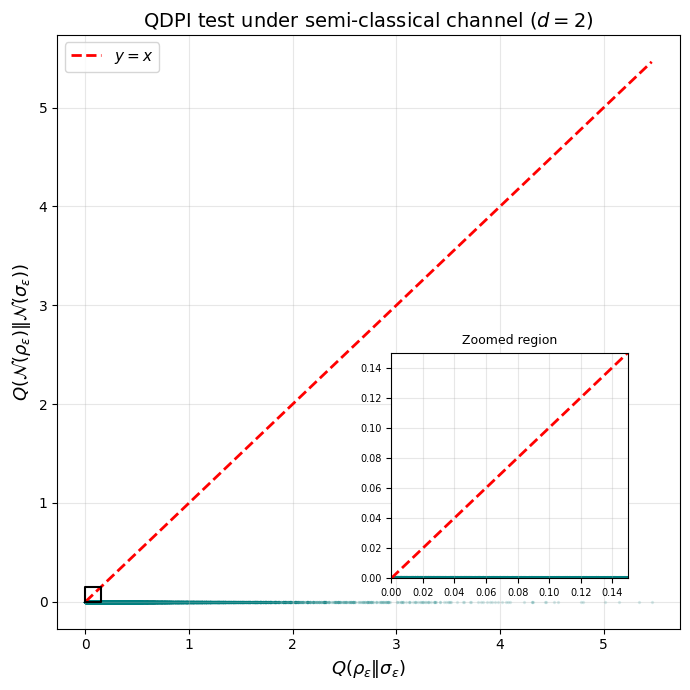}
        \caption{Qubit semi-classical channel ($d=2$).}
        \label{f9}
    \end{minipage}
    \hfill
    \begin{minipage}{0.48\textwidth}
        \centering
        \includegraphics[width=\linewidth]{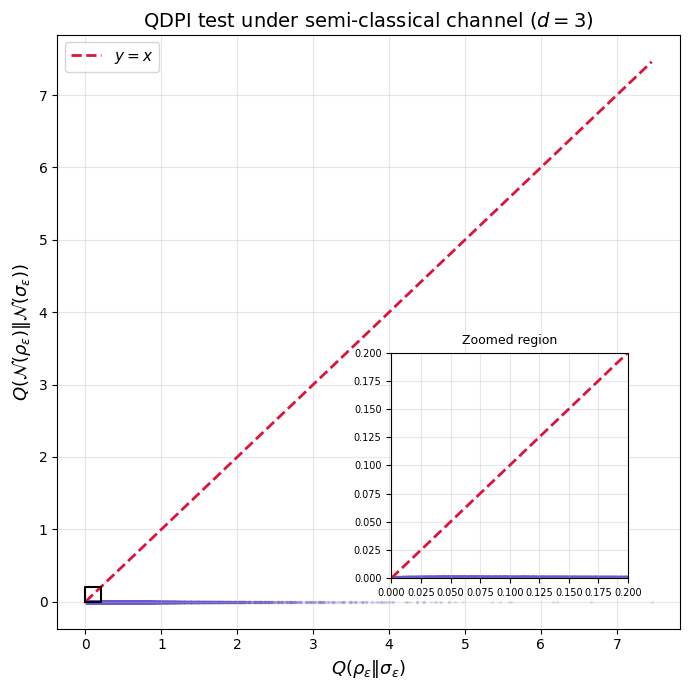}
        \caption{Qutrit semi-classical channel ($d=3$).}
        \label{f10}
    \end{minipage}

    \vspace{1cm}

    \begin{minipage}{0.48\textwidth}
        \centering
        \includegraphics[width=\linewidth]{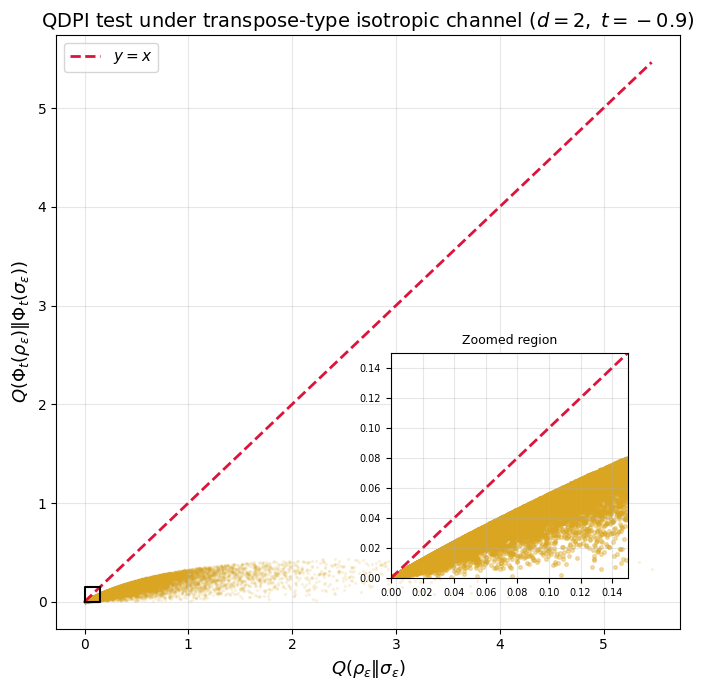}
        \caption{Qubit transpose-type isotropic channel ($d=2$, $t=-0.9$).}
        \label{f11}
    \end{minipage}
    \hfill
    \begin{minipage}{0.48\textwidth}
        \centering
        \includegraphics[width=\linewidth]{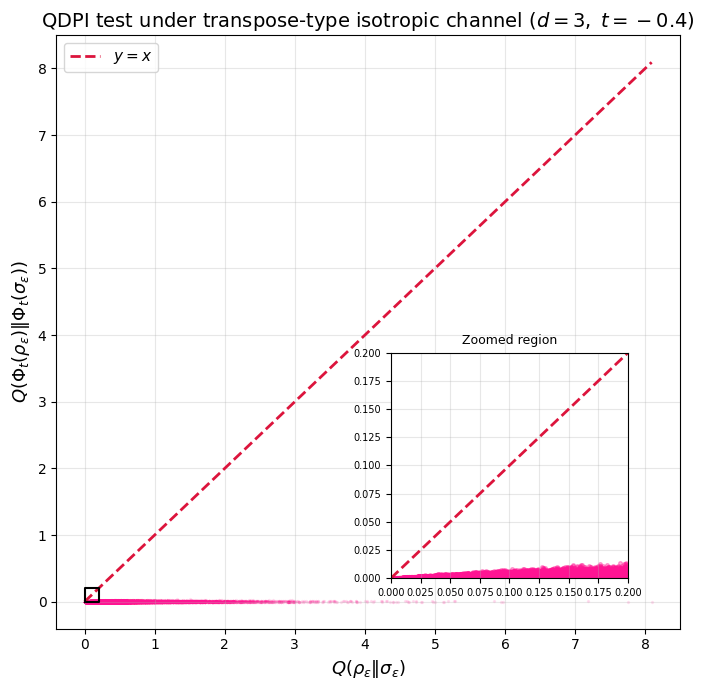}
        \caption{Qutrit transpose-type isotropic channel ($d=3$, $t=-0.4$).}
        \label{f12}
    \end{minipage}
\end{figure}

\twocolumngrid
\clearpage 
\twocolumngrid 
\bibliographystyle{apsrev4-1}  
\bibliography{references}
\end{document}